\newtheorem{thm}{Theorem}
\newtheorem{defn}{Definition}
\newtheorem{cor}{Corollary}
\newtheorem{lem}{Lemma}
\newtheorem{remark}{Remark}
\newtheorem{assumption}{Assumption}
\newtheorem{example}{Example}
\title{Synchronization and Balancing around Simple Closed Polar Curves with Bounded Trajectories and Control Saturation}
\author{Aditya Hegde,~\IEEEmembership{Graduate~Student~Member,~IEEE}\thanks{A. Hegde is with 
		Department of Aerospace Engineering, Indian Institute of Science, 
		Bangalore 560012, India (e-mail: adityahegde@iisc.ac.in).}, and Anoop Jain,~\IEEEmembership{Member,~IEEE}\thanks{A. Jain is with Department of Electrical Engineering, 
		Indian Institute of Technology, Jodhpur 342037, India (e-mail: 
		anoopj@iitj.ac.in).}}
\begin{document}
\maketitle
		

\begin{abstract}
The problem of synchronization and balancing around simple closed polar curves is addressed for unicycle-type multi-agent systems. Leveraging the concept of barrier Lyapunov function in conjunction with bounded Lyapunov-like curve-phase potential functions, we propose distributed feedback control laws and show that the agents asymptotically stabilize to the desired closed curve, their trajectories remain bounded within a compact set, and their turn-rates adhere to the saturation limits. We also characterize the explicit nature of the boundary of this \emph{trajectory-constraining} set based on the magnitude of the safe distance of the exterior boundary from the desired curve. We further establish a connection between the perimeters and areas of the trajectory-constraining set with that of the desired curve. We obtain bounds on different quantities of interest in the post-design analysis and provide simulation results to illustrate the theoretical findings.   
\end{abstract}

\begin{IEEEkeywords}
	Barrier functions, formation control, Lyapunov methods, multi-agent systems, stabilization, simple closed curves.
\end{IEEEkeywords}


\section{Introduction}\label{introduction}

Constraints are an integral part of any practical system. Depending upon the physical and operational requirements, systems may have several constraints, and it is a challenging task to design stabilizing controllers for their safe operation. Particularly, in the context of multi-agent systems, one such problem is to design distributed controllers such that the agents (or vehicles) do not transgress the given workspace while stabilizing to a desired collective formation. Such problems find numerous applications pertaining to surveillance and patrolling across territories where it is desired that multiple vehicles maintain a prescribed formation and their trajectories do not cross the border due to safety considerations. Besides, there are several other applications like autonomous driving, space missions, ocean explorations, etc., where it is required that the vehicles' trajectories remain bounded within a certain region of interest. By constraining agents' trajectories, not only is their safe operation assured, but also, their interaction topology, which usually relies on sensors with limited sensing range, can be preserved. 

In recent years, the distributed control algorithms in this direction are derived using the concepts of Lyapunov-like barrier functions, meeting system constraints like collision avoidance, proximity maintenance, and bounded control input \cite{panagou2013multi, panagou2015distributed, glotfelter2017nonsmooth, han2019robust, verginis2019closed, lee2011constrained}. The control design methodologies in these works rely on the composition of the well-known notions of Control Lyapunov Function (CLF) and Control Barrier Function (CBF) \cite{romdlony2014uniting, romdlony2016stabilization, ames2016control, ames2019control}. The Barrier Lyapunov Function (BLF) is a special class of CLF, which grows to infinity when its argument approaches the desired limits. Several variants of BLF have been used in literature to solve different problems associated with single and multi-agent systems; for instance, recentered BLF \cite{panagou2013multi, panagou2015distributed}, parametric BLF \cite{han2019robust}, logarithmic BLF \cite{tee2009barrier, tee2011control}, tangent-type BLF \cite{tang2013tangent}, integral-BLF \cite{he2014top}, etc. In this paper, a collection of logarithmic BLFs \cite{tee2009barrier, tee2011control} associated to each agent is combined with bounded Lyapunov-like curve-phase potentials to derive the stabilizing feedback control laws. The proposed curve-phase patterns are characterized by relative arc-lengths of the agents' motion along the desired curve. This concept finds numerous applications in the domain of sensor networks where the signals received at the receiving end may be weak due to the presence of a dense medium. Therein, synchronized and balanced phase arrangements of the agent/sensor networks, operating at different levels in a circular formation, help in collecting information optimally \cite{leonard2007collective}.

However, in several applications, neither the territorial boundaries nor the regions, required to be tracked by the agents, are necessarily circular. For instance, environmental boundaries, defined by the level sets of a scalar physical field, like the concentration of oil spills or the intensity of a light source, are essentially non-circular and can be better estimated by non-convex curves \cite{ovchinnikov2015cooperative,brinon2015distributed,brinon2019multirobot}. Motivated by these aspects, we stabilize in this work the motion of agents around simple closed polar curves in curve-phase synchronization or balancing. This generalizes the results in \cite{jain2019trajectory} where all-to-all communication-based control laws were used for stabilizing the collective motion around circular orbits$-$a specific case of simple closed curves. Contrary to \cite{jain2019trajectory}, in this paper, we introduce a parametric-phase model to account for \emph{convexity} of the desired curve, define a generalized notion of synchronization and balancing, propose distributed control laws, and rigorously analyze the nature of the trajectory-constraining region, and its perimeter and area. 
Throughout the paper, we mean by the term \emph{simple closed polar curves} $-$ simple closed curves expressed in polar form. 

The majority of \emph{prior} research in this direction relies on quadratic Lyapunov functions and does not impose any requirement on agents' trajectories. Whilst the notions of synchronization and balancing are discussed in \cite{paley2008stabilization}, the results are limited to \emph{skewed superellipses}, a special class of convex curves, and no restrictions are imposed on the agents' trajectories and the control input. Moreover, \cite{chen2015formation,sabattini2013closed,zhang2007coordinated} do not talk about the notions of synchronized and balanced curve-phases. In \cite{xu2017realizing}, simultaneous lane-keeping and speed regulation of the robots were realized using a quadratic programming framework. Unlike \cite{jain2019trajectory,paley2008stabilization,chen2015formation,sabattini2013closed,zhang2007coordinated,xu2017realizing}, in this work, we not only stabilize the agents around a general class of simple closed curves, comprising both convex and non-convex curves, but also achieve synchronized and balanced curve-phase patterns in their collective motion. We also assure that the agents' trajectories remain bounded during stabilization and the control input obeys the saturation limits, meeting the turn-rate constraints of a vehicle. 

\emph{Contributions:} By combining the idea of logarithmic BLF, along with, bounded Lyapunov-like curve-phase potentials, we derive feedback control laws to stabilize agents' motion around simple closed polar curves in synchronized and balanced curve-phases with bounded trajectories. The proposed controllers obey \emph{pre-specified} saturation limits and consider limited communication topology among the agents. To account for \emph{convexity} of the desired curve $\mathcal{C}$, we first propose a parametric-phase model to decide the evolution of the tracking point on $\mathcal{C}$ corresponding to an agent's heading. We show that the proposed controllers ensure that the agents' trajectories remain bounded within a compact set $\mathcal{B}_{\delta}$, characterized by the magnitude of the safe distance $\delta$ from the desired curve $\mathcal{C}$ and the unit normal vectors $\hat{g}_n$ to $\mathcal{C}$, and their turn-rates adhere to the saturation limits. We also characterize the explicit nature of the boundary of the set $\mathcal{B}_{\delta}$, under an assumption on $\delta$ and $\hat{g}_n$, motivated by practical applications. We show that there may exist multiple boundaries of the set $\mathcal{B}_{\delta}$, depending on $\delta$, and are constructed using the locus of the farthest points from the desired curve. We further establish a connection between the perimeters and areas of $\mathcal{B}_{\delta}$ with that of the desired curve $\mathcal{C}$. 
We further obtain analytical bounds on various signals in the post-design analysis and illustrate the results through a simulation example. 

\emph{Paper Structure:} Section~\ref{section2} describes notations, introduces the system model, and reviews some preliminary results. The idea of curvature control and the notion of synchronized and balanced curve-phases are discussed in Section~\ref{section3}. Section~\ref{section4} proposes control laws based on composite Lyapunov functions, and also describes the boundary, perimeter, and area of the trajectory-constraining set. Section~\ref{section5} obtains bounds on different signals of interest in both curve-phase synchronization and balancing. Section~\ref{section6} presents simulation results, before we conclude and present the future directions of work in Section~\ref{section7}.

\section{System Description and Some Background Results}\label{section2}
This section describes notations, introduces the system model, and reviews some basic results about BLF.

\subsection{Preliminaries}
The set of real, complex, natural, and positive (non-negative) real numbers is $\mathbb{R}$, $\mathbb{C}$, $\mathbb{N}$, and $\mathbb{R}_{>0} (\mathbb{R}_{\geq 0})$, respectively. The imaginary unit is $i = \sqrt{-1}$. The unit circle in the complex plane is the set $\mathbb{S}^1 \subset \mathbb{C}$. The $N$-torus is the set $\mathbb{T}^N = \mathbb{S}^1 \times \ldots \times \mathbb{S}^1$ ($N$ times), where,  $\times$ is the Cartesian product operator. The inner product of two complex numbers $z_1,z_2 \in \mathbb{C}$ is given by $\langle z_1, z_2 \rangle  = \Re(\bar{z}_1z_2)$, where $\bar{z}_1 \in \mathbb{C}$ is the complex conjugate of $z_1$. For vectors, we use the analogous boldface notation  $\langle \pmb{w}, \pmb{z} \rangle  = \Re(\pmb{w}^\ast\pmb{z})$ for $\pmb{w}, \pmb{z} \in \mathbb{C}^N$, where $\pmb{w}^\ast$ is the conjugate transpose of $\pmb{w}$. For $\pmb{\varphi} = [\varphi_1, \ldots, \varphi_N]^T \in \mathbb{T}^N$, the $N$ vector ${\rm e}^{i\pmb{\varphi}}$ is used to denote ${\rm e}^{i\pmb{\varphi}} = [{\rm e}^{i\varphi_1}, \ldots, {\rm e}^{i\varphi_N}]^T$. A differentiable map $f: \mathcal{D} \to\mathbb{R}$, $\mathcal{D} \subseteq \mathbb{R}^N$, has a gradient $\nabla_{\pmb{x}} f = \left[{\partial f}/{\partial x_1}, \ldots, {\partial f}/{\partial x_N}\right]^T$. We denote by $\pmb{0}_N = [0, \ldots, 0]^T \in \mathbb{R}^N$ and $\pmb{1}_N = [1, \ldots, 1]^T \in \mathbb{R}^N$. We often suppress arguments if clear from the context. 

A graph is a pair $\mathcal{G} = (\mathcal{V}, \mathcal{E})$, consisting of a finite set of vertices $\mathcal{V}$, and a finite set of edges $\mathcal{E} \subseteq \mathcal{V}\times \mathcal{V}$. The incidence matrix $\mathcal{M} \in \mathbb{R}^{|\mathcal{V}|\times|\mathcal{E}|}$ of graph $\mathcal{G}$ with an arbitrary orientation is defined such that, for each edge $e = (j, k) \in \mathcal{E}$, $[\mathcal{M}]_{je} = +1, [\mathcal{M}]_{ke} = -1$, and $[\mathcal{M}]_{\ell e} = 0$ for $\ell \neq j, k$. The Laplacian  $\mathcal{L} \in \mathbb{R}^{|\mathcal{V}|\times|\mathcal{V}|}$ of graph $\mathcal{G}$ is defined such that $[\mathcal{L}]_{jk} = |\mathcal{N}_j|$ if $j = k$, $[\mathcal{L}]_{jk} = -1$ if $k \in \mathcal{N}_j$, and $[\mathcal{L}]_{jk} = 0$ otherwise, where $|\mathcal{N}_j|$ is the cardinality of the set $\mathcal{N}_j$. The Laplacian quadratic form associated with graph $\mathcal{G}$ with $N$ nodes is defined as $Q_{\mathcal{L}}(\pmb{z}) = \langle \pmb{z}, \mathcal{L}\pmb{z} \rangle$ for $\pmb{z} \in \mathbb{C}^N$, which is positive semi-definite and is zero if and only if $\pmb{z} = z_0\pmb{1}_N$ for some $z_0 \in \mathbb{C}$. A graph $\mathcal{G}$ is circulant if and only if its Laplacian $\mathcal{L}$ is a circulant matrix \cite{davis2013circulant}. 

\begin{lem}[\hspace{-.1pt}\cite{davis2013circulant}]\label{lem_circulant_matrices}
	Let $\mathcal{L}$ be the Laplacian of an undirected circulant graph $\mathcal{G}$ with $N$ vertices. Define $\chi_k \coloneqq (k-1)2\pi/N$, for $k = 1, \ldots, N$. Then, the vectors $\pmb{f}^{(\ell)} \coloneqq {\rm e}^{i(\ell-1)\pmb{\chi}},~~\ell = 1, \ldots, N$, form a basis of $N$ orthogonal eigenvectors of $\mathcal{L}$. The unitary matrix $\mathcal{F}$, whose columns are the $N$ (normalized) eigenvectors $(1/\sqrt{N})\pmb{f}^{(\ell)}$, diagonalizes $\mathcal{L}$, that is, $\mathcal{L} = \mathcal{F}\Lambda \mathcal{F}^\ast$, where $\Lambda \coloneqq \text{diag}\{0, \lambda_2, \ldots,  \lambda_N\} \succeq 0$ is the (real) diagonal matrix of the eigenvalues of $\mathcal{L}$, and $\mathcal{F}^\ast$ denotes the conjugate transpose of $\mathcal{F}$.
\end{lem}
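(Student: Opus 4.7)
The plan is to reduce the statement to the classical structure theorem for circulant matrices, which is exactly what the cited reference \cite{davis2013circulant} develops, and then specialize to the Laplacian setting to get the eigenvalue $0$ in the first slot and the non-negativity of the rest. The key observation that I would use to drive everything is that a circulant matrix is a polynomial in the basic cyclic shift, so its spectral decomposition is inherited from that shift.

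First I would let $P \in \mathbb{R}^{N\times N}$ be the cyclic shift with $[P]_{jk}=1$ iff $k \equiv j+1 \pmod N$ and $0$ otherwise, so that any circulant matrix $\mathcal{L}$ can be written as $\mathcal{L}=\sum_{j=0}^{N-1} c_j P^j$ for the first-row coefficients $c_0,\dots,c_{N-1}$. A direct computation using $\chi_k=(k-1)2\pi/N$ then shows
\begin{equation*}
P\,\pmb{f}^{(\ell)} = e^{\,i(\ell-1)2\pi/N}\,\pmb{f}^{(\ell)},
\end{equation*}
so each $\pmb{f}^{(\ell)}$ is a simultaneous eigenvector of every $P^j$ and hence of $\mathcal{L}$, with eigenvalue
\begin{equation*}
\lambda_\ell \;=\; \sum_{j=0}^{N-1} c_j \, e^{\,ij(\ell-1)2\pi/N}.
\end{equation*}

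Next I would verify orthogonality by a standard geometric-series argument: for $\ell\neq m$,
\begin{equation*}
\langle \pmb{f}^{(\ell)}, \pmb{f}^{(m)}\rangle \;=\; \sum_{k=1}^{N} e^{\,i(m-\ell)(k-1)2\pi/N} \;=\; 0,
\end{equation*}
while $\|\pmb{f}^{(\ell)}\|^2 = N$, so the normalized vectors $(1/\sqrt N)\pmb{f}^{(\ell)}$ form an orthonormal basis. Stacking them as columns of $\mathcal{F}$ gives $\mathcal{F}^\ast\mathcal{F}=I$, and from the eigenrelations we get $\mathcal{L}\mathcal{F}=\mathcal{F}\Lambda$, i.e.\ $\mathcal{L}=\mathcal{F}\Lambda\mathcal{F}^\ast$ with $\Lambda=\mathrm{diag}(\lambda_1,\dots,\lambda_N)$.

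The subtlest step, and the one where graph-theoretic structure must be invoked rather than pure circulant algebra, is pinning down the ordering and sign of the $\lambda_\ell$. Since $\mathcal{G}$ is undirected, $\mathcal{L}$ is real symmetric, so the $\lambda_\ell$ are real; since $\mathcal{L}=\mathcal{M}\mathcal{M}^T$ for the incidence matrix $\mathcal{M}$, one has $\langle \pmb{z},\mathcal{L}\pmb{z}\rangle=\|\mathcal{M}^T\pmb{z}\|^2\ge 0$, giving $\Lambda\succeq 0$. Finally, row sums of a graph Laplacian vanish, so $\mathcal{L}\pmb{1}_N=\pmb{0}_N$; noting that $\pmb{f}^{(1)}=\pmb{1}_N$, this identifies $\lambda_1=0$ and justifies writing $\Lambda=\mathrm{diag}\{0,\lambda_2,\dots,\lambda_N\}$. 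I expect the main obstacle to be purely expository: the result is standard, so the proof should be kept concise and phrased so that the decomposition $\mathcal{L}=\mathcal{F}\Lambda\mathcal{F}^\ast$ is the sole takeaway used in later sections.
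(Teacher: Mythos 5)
Your proof is correct and complete: the reduction to the cyclic shift $P$, the geometric-series orthogonality computation, and the Laplacian-specific identifications ($\Lambda \succeq 0$ via $\mathcal{L}=\mathcal{M}\mathcal{M}^T$ and $\lambda_1=0$ via $\pmb{f}^{(1)}=\pmb{1}_N$ in the kernel) are exactly the right ingredients. The paper itself offers no proof of this lemma---it is imported verbatim from the cited reference on circulant matrices---and your argument is the canonical one that reference develops, so there is no divergence to report.
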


The following definitions about parametric curves are stated from \cite{tapp2016differential,pressley2010elementary}. Let $\alpha: [a, b] \to \mathbb{R}^2, t \mapsto \alpha(t)$ be a planar differentiable curve, parameterized by $t$. The curve $\alpha$ is said to be regular if $d\alpha/d t = \dot{\alpha}(t) \neq 0$ for all $t \in [a, b]$. A closed plane curve is a regular parameterized curve $\alpha$ such that $\alpha(a) = \alpha(b)$ and all the derivatives agree at $a$ and $b$; that is, $\dot{\alpha}(a) = \dot{\alpha}(b), \ddot{\alpha}(a) = \ddot{\alpha}(b)$, and so on. The curve $\alpha$ is \emph{simple} if it has no further self-intersections; that is, if $t_1, t_2 \in [a, b), t_1 \neq t_2$, then $\alpha(t_1) \neq \alpha(t_2)$. Further, $\alpha$ is periodic if there is a number $T > 0$ such that $\alpha(t + T) = \alpha(t)$ for all $t$, and the smallest such number $T$ is called the period of $\alpha$. It is clear that the simple closed curve $\alpha$ is a periodic curve with period $T = b-a$. A closed curve is said to be convex if the region it encloses is a convex set, else it is called non-convex. A \emph{re-parametrization} of $\alpha$ is a function of the form $\breve{\alpha} = \alpha \circ \varrho: [\breve{a}, \breve{b}] \to \mathbb{R}^2$, where $\varrho: [\breve{a}, \breve{b}] \to [a, b]$ is a smooth bijective map with nowhere-vanishing derivative, that is, $\dot{\varrho}(t) \neq 0$ for all $t \in [\breve{a}, \breve{b}]$. According to the \emph{Jordan Curve Theorem} [\cite{tapp2016differential}, pg. 62], any simple closed curve $\alpha$ in the plane has an `interior' (denote by $\text{int}(\alpha)$) and an `exterior' (denote by $\text{ext}(\alpha)$).

\subsection{System model}
A group of $N$ identical agents, moving in the $\mathbb{R}^2$ plane, is considered. For simplicity, a map  $(p, q) \mapsto p + iq$ is used to transform the $\mathbb{R}^2$ plane to the $\mathbb{C}$ plane. The position and heading of the $k^\text{th}$ agent are $r_k = x_k + iy_k \in \mathbb{C}$ and $\theta_k \in \mathbb{S}^1$. We assume that the agents move with unit speed and their velocity vectors can be expressed as $\dot{r}_k = {\rm e}^{i\theta_k} = \cos\theta_k + i\sin\theta_k \in \mathbb{C}, \forall k$. The consideration of constant speed is motivated by several practical applications pertaining to unmanned aerial vehicles \cite{jain2019trajectory}. With these notations, the motion of the agents is represented by
\begin{equation}\label{modelNew}
\dot{r}_k  = {\rm e}^{i\theta_k};~~~\dot{\theta}_k = u_k,~~~ k = 1, \ldots, N,
\end{equation}
where, $u_k \in \mathbb{R}$ is the control input for the $k^\text{th}$ agent, which acts in a direction lateral to the motion of the agent, thus controlling the curvature of the trajectory. A positive (resp., negative) value of $u_k$ corresponds to anticlockwise (resp., clockwise) rotation, while $u_k = 0$ corresponds to straight line motion in the initial velocity direction $\theta_k(0)$. Owing to the curvature dependence, our approach relies on designing $\zeta_k \in \mathbb{R}$ such that
\begin{equation}\label{eq_new_control}
u_k = \kappa(\phi)(1 + \zeta_k),
\end{equation} 
where, $\kappa(\phi)$ is the curvature of the desired (simple closed) polar curve parameterized by $\phi \in [0, 2\pi)$, (a detailed discussion about the curve's parameterization is given in Section~\ref{section3}). Here, $\zeta_k$ is derived from the formation control objectives of the group and approaches zero in the steady-state. Since the lateral force applied by an autonomous vehicle is often restricted due to its physical constraints, we further consider that the control $u_k$ in \eqref{eq_new_control} is given by the following saturation function
\begin{subequations}\label{saturated_control}
	\begin{numcases}
	{u_k \coloneqq}
	\label{saturation1} \kappa(\phi)(1 + \zeta_k), & if~$\left|\kappa(\phi)(1 + \zeta_k)\right| \leq u_\text{max}$\\
	\label{saturation2} u_\text{max}~\text{sgn}(\kappa(\phi)(1 + \zeta_k)),  & if~$\left|\kappa(\phi)(1 + \zeta_k)\right| > u_\text{max}$,
	\end{numcases}
\end{subequations}
where, $\text{sgn}(x)$ is the signum function of $x \in \mathbb{R}$, and $u_\text{max} > 0$ is the \emph{pre-specified} maximum allowable control force. Note that the saturation \eqref{saturation2} is applied only if $\kappa(\phi) \neq 0$; if $\kappa(\phi) = 0$ for some $\phi$ (i.e., the agent moves along a straight line), $u_k = 0$, which always satisfies the condition \eqref{saturation1}. Unless otherwise stated, we assume that $u_\text{max}$ is at least equal to the input demanded by the desired curve, that is, $u_{\max} \geq \max_{\phi} |\kappa(\phi)|$ for all $k$. In \eqref{modelNew}, $\dot{\theta}_k = u_k$ is usually referred to as the phase control model, which essentially controls the turn rates $\dot{\theta}_k$ of the agents, and hence, their phase angles $\theta_k$. In the next section, we propose a curve-phase model, a generalization of the phase-control model, to stabilize agents' motion around simple closed polar curves.  

\subsection{Barrier Lyapunov Function}
\begin{defn}[\hspace{-.1pt}Barrier Lyapunov Function \cite{tee2009barrier}]\label{BLF}
	A Barrier Lyapunov Function is a scalar function $V(\pmb{x})$ of state vector $\pmb{x} \in \mathcal{D}$ of the system $\dot{\pmb{x}} = f(\pmb{x})$ on an open region $\mathcal{D}$ containing the origin, that is continuous, positive definite, has continuous first-order partial derivatives at every point of $\mathcal{D}$, has the property $V(\pmb{x}) \rightarrow \infty$ as $\pmb{x}$ approaches the boundary of $\mathcal{D}$, and satisfies $V(\pmb{x}(t)) \leq \beta, \forall t\geq 0$, along the solution of $\dot{\pmb{x}} = f(\pmb{x})$ for $\pmb{x}(0) \in \mathcal{D}$ and some positive constant $\beta$.
\end{defn}

\begin{lem}[\hspace{-.1pt}\cite{tee2009barrier}]\label{lem1}
	For any positive constant $c$, let $\mathcal{Z} \coloneqq \{\xi \in \mathbb{R} \mid -c < \xi < c\} \subset \mathbb{R}$ and $\mathcal{N} \coloneqq \mathbb{R}^\ell  \times \mathcal{Z} \subset \mathbb{R}^{\ell+1}$ be open sets. Consider the system ${\dot{\pmb{\eta}} = \pmb{h}(t, \pmb{\eta})}$,  where, $\pmb{\eta} \coloneqq [\pmb{\tau}, \xi]^T \in \mathcal{N}$, and ${\pmb{h} : \mathbb{R}_{\geq 0} \times \mathcal{N}} \rightarrow \mathbb{R}^{\ell+1}$ is piecewise continuous in $t$ and locally Lipschitz in $\pmb{\eta}$, uniformly in $t$, on $\mathbb{R}_{\geq 0} \times \mathcal{N}$. Suppose that there exist functions ${U: \mathbb{R}^\ell \rightarrow \mathbb{R}_{\geq 0}}$ and $V_1: \mathcal{Z} \rightarrow \mathbb{R}_{\geq 0}$, continuously differentiable and positive definite in their respective domains, such that $ V_1(\xi) \rightarrow \infty~~~\text{as}~~~|\xi| \rightarrow c$ and $\gamma_1(\|\pmb{\tau}\|) \leq U(\pmb{\tau}) \leq \gamma_2(\|\pmb{\tau}\|)$, where, $\gamma_1$ and $\gamma_2$ are class $\mathcal{K}_\infty$ functions. Let $V(\pmb{\eta}) \coloneqq V_1(\xi) + U(\pmb{\tau})$, and $\xi(0) \in \mathcal{Z}$. If it holds that $\dot{V} = (\nabla V)^T{\pmb{h}} \leq 0$, in the set $\xi \in \mathcal{Z}$, then $\xi(t) \in \mathcal{Z},~\forall t \in [0, \infty)$.
\end{lem}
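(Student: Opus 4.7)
The plan is to prove Lemma~\ref{lem1} by contradiction, using the fact that the composite Lyapunov candidate is monotonically nonincreasing and the barrier term blows up at the boundary of $\mathcal{Z}$.

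First, by the assumption that $\pmb{h}$ is piecewise continuous in $t$ and locally Lipschitz in $\pmb{\eta}$ (uniformly in $t$) on $\mathbb{R}_{\geq 0}\times\mathcal{N}$, and since $\pmb{\eta}(0)=[\pmb{\tau}(0),\xi(0)]^T\in\mathcal{N}$, there exists a unique maximal solution $\pmb{\eta}(t)$ on some interval $[0,t_{\max})$ with $t_{\max}\in(0,\infty]$. On this interval, by definition, $\xi(t)\in\mathcal{Z}$, so the hypothesis $\dot{V}=(\nabla V)^T\pmb{h}\leq 0$ applies pointwise. Integrating yields
\begin{equation*}
V(\pmb{\eta}(t))\leq V(\pmb{\eta}(0))\eqqcolon \beta,\qquad t\in[0,t_{\max}).
\end{equation*}
Because $\xi(0)\in\mathcal{Z}$ and $V_1$ is continuous on $\mathcal{Z}$, together with $U(\pmb{\tau}(0))$ finite, $\beta$ is a finite nonnegative constant.

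Next, I would exploit the split structure $V=V_1(\xi)+U(\pmb{\tau})$ and the nonnegativity of each summand. Since $U(\pmb{\tau})\geq 0$, the bound above gives $V_1(\xi(t))\leq \beta$ for all $t\in[0,t_{\max})$, while since $V_1(\xi)\geq 0$, it also gives $U(\pmb{\tau}(t))\leq \beta$. The second inequality, combined with $\gamma_1(\|\pmb{\tau}\|)\leq U(\pmb{\tau})$, forces $\|\pmb{\tau}(t)\|\leq\gamma_1^{-1}(\beta)$, hence $\pmb{\tau}(t)$ remains in a compact subset of $\mathbb{R}^\ell$.

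The central step $-$ and the one I expect to be the crux $-$ is to use $V_1(\xi(t))\leq\beta$ to preclude $\xi(t)$ from reaching $\pm c$. Suppose, for contradiction, that there is a sequence $t_n\nearrow t^*\leq t_{\max}$ with $|\xi(t_n)|\to c$. By the barrier property $V_1(\xi)\to\infty$ as $|\xi|\to c$, we would have $V_1(\xi(t_n))\to\infty$, contradicting $V_1(\xi(t_n))\leq\beta<\infty$. Therefore, $\xi(t)$ is bounded away from $\pm c$ on $[0,t_{\max})$; more precisely, there exists $c'<c$ such that $|\xi(t)|\leq c'$ on $[0,t_{\max})$, which can be obtained by defining $c'\coloneqq\sup\{|\xi|<c:V_1(\xi)\leq\beta\}$ and invoking continuity and positive definiteness of $V_1$.

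Finally, to rule out finite-time escape, I would observe that the trajectory $\pmb{\eta}(t)$ remains in the compact set $\{\pmb{\tau}:\|\pmb{\tau}\|\leq\gamma_1^{-1}(\beta)\}\times\{\xi:|\xi|\leq c'\}\subset\mathcal{N}$ on $[0,t_{\max})$. The standard extension theorem for ODEs with locally Lipschitz right-hand side then forces $t_{\max}=\infty$, since otherwise the solution would have to leave every compact subset of $\mathcal{N}$ as $t\to t_{\max}^-$. Combining, $\xi(t)\in\mathcal{Z}$ for all $t\in[0,\infty)$, completing the proof.
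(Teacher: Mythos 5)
The paper does not prove Lemma~\ref{lem1}; it is stated as a quoted result from \cite{tee2009barrier} and used as a tool. Your argument is correct and is essentially the standard proof of that cited lemma: monotonicity of $V$ along the maximal solution, the barrier property of $V_1$ confining $\xi$ to a compact subset $\{|\xi|\leq c'\}\subset\mathcal{Z}$, the class-$\mathcal{K}_\infty$ lower bound confining $\pmb{\tau}$, and the continuation (no finite escape time) argument yielding existence on all of $[0,\infty)$.
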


In the sequel, we use Lemma~\ref{lem1} to prove some theoretical results in this paper.

\begin{figure}
	\centering
	\includegraphics[width=0.5\columnwidth]{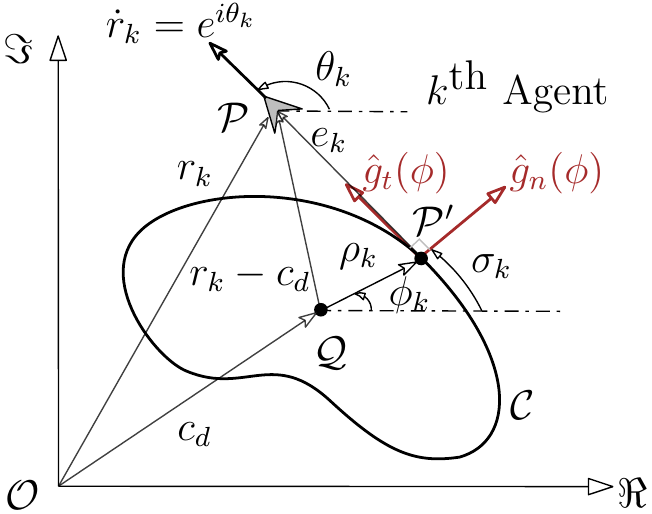}
	\caption{The $k^\text{th}$ agent tracking a simple closed curve $\mathcal{C}$ in the complex plane. Note that the position $\rho_k$ of the tracking point $\mathcal{P}'$ is a function of $\theta_k$.}
	\label{problem_fig}
\end{figure}

\section{Curvature Control and Curve-Phase Synchronization and Balancing}\label{section3}
This section develops a curve-phase control model for the agents' motion around smooth closed curves, and describes synchronized and balanced curve-phase patterns in their collective motion. 

\subsection{Curvature control and curve-phase model}\label{curve_def}
Our first goal is to allow the agents to move around the desired curve, characterized by a family of simple closed curves expressed in polar form. The problem is shown in Fig.~\ref{problem_fig}, where the $k^\text{th}$ agent is trying to move along the curve $\mathcal{C}$, centered at the desired location $c_d$. Consider that $\mathcal{C}$ is parameterized by $\phi$ with respect to its center $c_d$, and is represented by the map $\rho : [0,2\pi) \to \mathbb{C}, \phi \mapsto \rho(\phi)$. The unit tangent to $\mathcal{C}$ at the point $\rho(\phi)$ is $\hat{g}_{t}(\phi) = ({1}/{|{d\rho}/{d\phi}|})({d\rho}/{d\phi}) = {\rm e}^{i\mu} \in \mathbb{C}$, where  $\mu = \arg(\hat{g}_t)$. By rotating ${\rm e}^{i\mu}$ by an angle $\pi/2$ radians in the clockwise direction, we get the exterior unit normal $\hat{g}_{n}(\phi) = -i{\rm e}^{i\mu} \in \mathbb{C}$ (see Fig.~\ref{problem_fig}). The arc length along the curve at the point $\rho(\phi)$ is $\sigma:[0,2\pi) \to \mathbb{R}_{\geq0}, \phi \mapsto \sigma(\phi)$, and is given by
\begin{equation}\label{arc_length}
\sigma(\phi) = \int_{0}^{\phi} \left|\frac{d\rho}{d\bar{\phi}}\right| d\bar{\phi}.
\end{equation}
The curvature $\kappa:[0,2\pi) \to\mathbb{R}, \phi \mapsto \kappa(\phi)$ at the point $\rho(\phi)$ on the curve is $\kappa(\phi) = {d\mu}/{d\sigma} = ({d\mu}/{d\phi})({d\phi}/{d\sigma})$, which using \eqref{arc_length}, gives $\kappa(\phi) = ({1}/{|{d\rho}/{d\phi}|})({d\mu}/{d\phi})$. The sign of $\kappa(\phi)$ is determined by the sense of rotation around $\mathcal{C}$; if the curve is turning anticlockwise (resp., clockwise) at $\phi$, $\kappa(\phi) > 0~(\text{resp.}, \kappa(\phi) < 0)$. Note that this convention on $\kappa$ is given with reference to Fig.~\ref{problem_fig}, where the agent is moving in the anticlockwise direction. However, if the agent moves in the clockwise direction, an opposite convention holds as $\hat{g}_{n}(\phi)$ reverses its direction. For simple closed curves, the curvature is finite and bounded, that is, $0 \leq |\kappa(\phi)| < \infty$ for all $\phi$. 

\begin{lem}[\hspace{-.1pt}\cite{tapp2016differential}, pg.~40]\label{lem_curvature_polar_curve}
	Consider a family of simple closed curves, parameterized by $\phi$, and expressed in polar representation as $\rho(\phi) = R(\phi){\rm e}^{i\phi} \in \mathbb{C}, R(\phi) \in \mathbb{R}_{>0}$. Let $f(\phi), \sigma(\phi)$ and $\kappa(\phi)$ be the slope of tangent, arc-length and curvature, respectively. Then, $f(\phi) = (R'(\phi)\sin\phi + R(\phi)\cos\phi)/(R'(\phi)\cos\phi - R(\phi)\sin\phi), \sigma(\phi) = \int_{0}^{\phi} \sqrt{(R'(\bar{\phi}))^2 + (R(\bar{\phi}))^2} d\bar{\phi}$, and $\kappa(\phi) = (2(R'(\phi))^2 -R(\phi)R''(\phi) + (R(\phi))^2)/((R'(\phi))^2 + (R(\phi))^2)^{\frac{3}{2}}$, where, $R'(\phi) = dR/d\phi$, and $R''(\phi) = d^2R/d\phi^2$.  	
\end{lem}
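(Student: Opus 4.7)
The plan is to reduce everything to the standard parametric formulas in Cartesian coordinates by writing $\rho(\phi) = R(\phi)\mathrm{e}^{i\phi}$ as $(x(\phi), y(\phi)) = (R(\phi)\cos\phi, R(\phi)\sin\phi)$, and then compute $f$, $\sigma$ and $\kappa$ via direct differentiation. For the tangent slope, I would first compute
\begin{align*}
x'(\phi) &= R'(\phi)\cos\phi - R(\phi)\sin\phi,\\
y'(\phi) &= R'(\phi)\sin\phi + R(\phi)\cos\phi,
\end{align*}
and then note that $f(\phi) = dy/dx = y'(\phi)/x'(\phi)$, which matches the claimed expression immediately (wherever $x'(\phi)\neq 0$; regularity of the underlying curve ensures the tangent direction is well-defined at every $\phi$, and the formula should be read projectively at points where $x'=0$).

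Next, for the arc length I would use the definition $\sigma(\phi)=\int_0^{\phi}\sqrt{x'(\bar\phi)^2+y'(\bar\phi)^2}\,d\bar\phi$ from \eqref{arc_length} (equivalently $|d\rho/d\phi|$), and expand the integrand:
\[
x'^2+y'^2 = (R'\cos\phi-R\sin\phi)^2 + (R'\sin\phi+R\cos\phi)^2.
\]
The two cross terms $\pm 2RR'\sin\phi\cos\phi$ cancel, and $\cos^2\phi+\sin^2\phi=1$ collects the remaining terms into $R'(\phi)^2+R(\phi)^2$, yielding the claimed expression for $\sigma(\phi)$.

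For the curvature, I would invoke the standard parametric identity $\kappa=(x'y''-y'x'')/(x'^2+y'^2)^{3/2}$, which aligns with the definition $\kappa=d\mu/d\sigma$ used in the text since $\mu=\arctan(y'/x')$ and $d\sigma/d\phi=\sqrt{x'^2+y'^2}$. I would then differentiate once more to get
\begin{align*}
x''(\phi) &= R''\cos\phi - 2R'\sin\phi - R\cos\phi,\\
y''(\phi) &= R''\sin\phi + 2R'\cos\phi - R\sin\phi,
\end{align*}
and plug in. The main obstacle is this bookkeeping step: the expansion of $x'y''-y'x''$ produces a long list of monomials in $\{R,R',R''\}$ and $\{\sin\phi,\cos\phi\}$, and one has to be careful to pair up and cancel all terms that contain a factor of $\sin\phi\cos\phi$ (of which there are several $R'R''$, $RR'$ type terms), and then apply $\sin^2\phi+\cos^2\phi=1$ to collapse the rest into $2R'(\phi)^2 - R(\phi)R''(\phi) + R(\phi)^2$. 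Combining this numerator with the denominator $(R'^2+R^2)^{3/2}$ obtained from the arc-length step completes the derivation. Beyond this algebra, no geometric subtlety is needed; the result is a straightforward specialization of the parametric curvature formula to the polar ansatz $\rho=R\mathrm{e}^{i\phi}$.
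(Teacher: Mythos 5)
Your derivation is correct: the paper states this lemma as a cited textbook result (Tapp, pg.~40) and gives no proof of its own, and your reduction to the Cartesian parametrization $(R\cos\phi, R\sin\phi)$ with the standard identities $f = y'/x'$, $d\sigma/d\phi = \sqrt{x'^2+y'^2}$, and $\kappa = (x'y''-y'x'')/(x'^2+y'^2)^{3/2}$ is exactly the canonical argument, consistent with the paper's own conventions $\mu=\arg(\hat g_t)$ and $\kappa = d\mu/d\sigma$ from Section~\ref{curve_def}. The algebra checks out (the $\sin\phi\cos\phi$ cross terms cancel in both the arc-length integrand and in $x'y''-y'x''$, leaving $2(R')^2-RR''+R^2$), so nothing is missing.
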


From Lemma~\ref{lem_curvature_polar_curve}, one can deduce that 
\begin{equation}\label{eq_ratio}
\frac{1+f^2(\phi)}{f'(\phi)} = \frac{1}{\kappa(\phi)\sqrt{(R'(\phi))^2 + (R(\phi))^2}},
\end{equation}
where $f'(\phi) = df(\phi)/d\phi =  (2(R'(\phi))^2 -R(\phi)R''(\phi) + (R(\phi))^2)/(R'(\phi)\cos\phi - R(\phi)\sin\phi)^2$.

\begin{figure}
	\centering
	\subfigure[Convex lima\c{c}on]{\includegraphics[width=4.0cm]{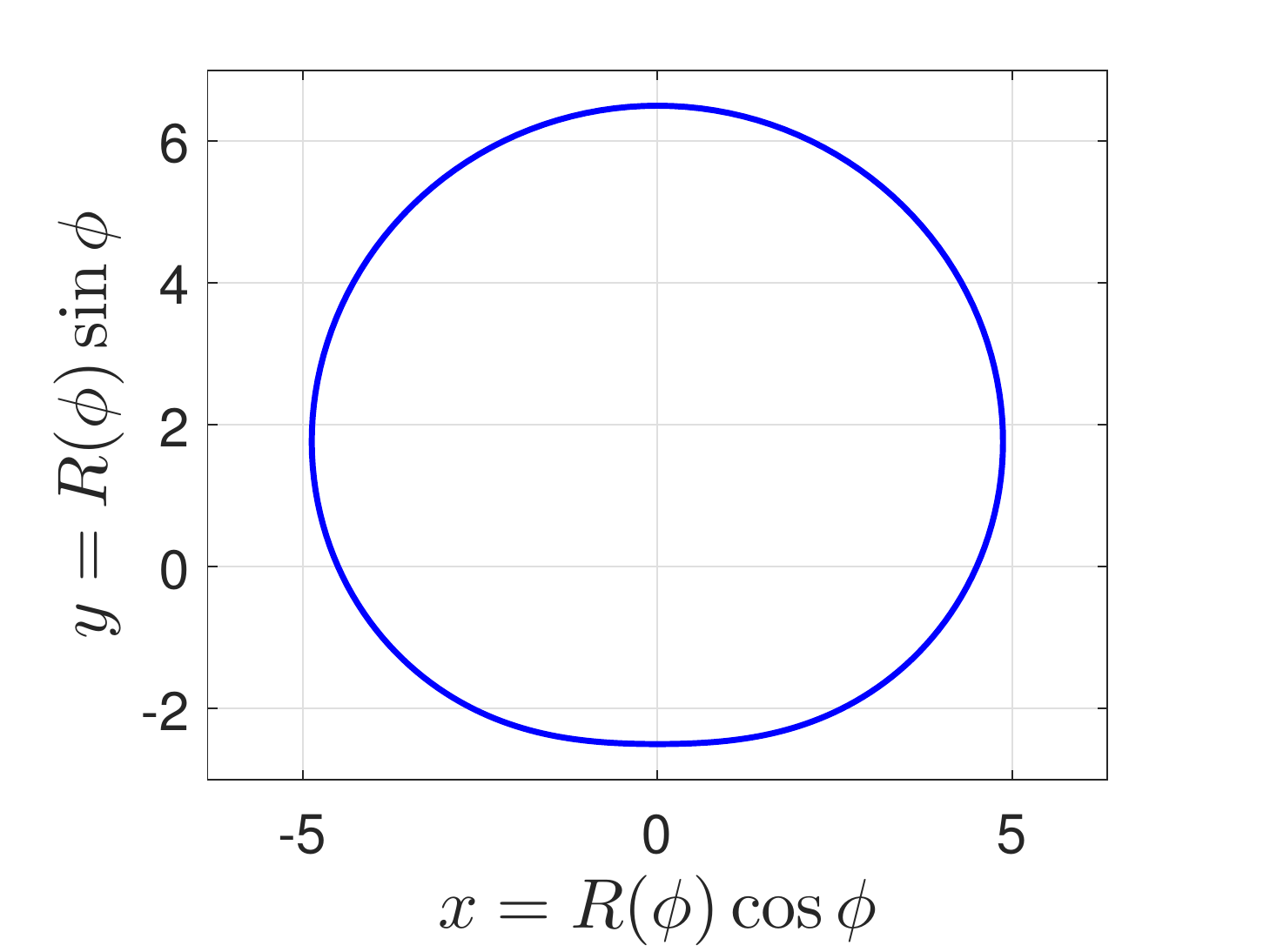}}
	\subfigure[Polar rose]{\includegraphics[width=4.0cm]{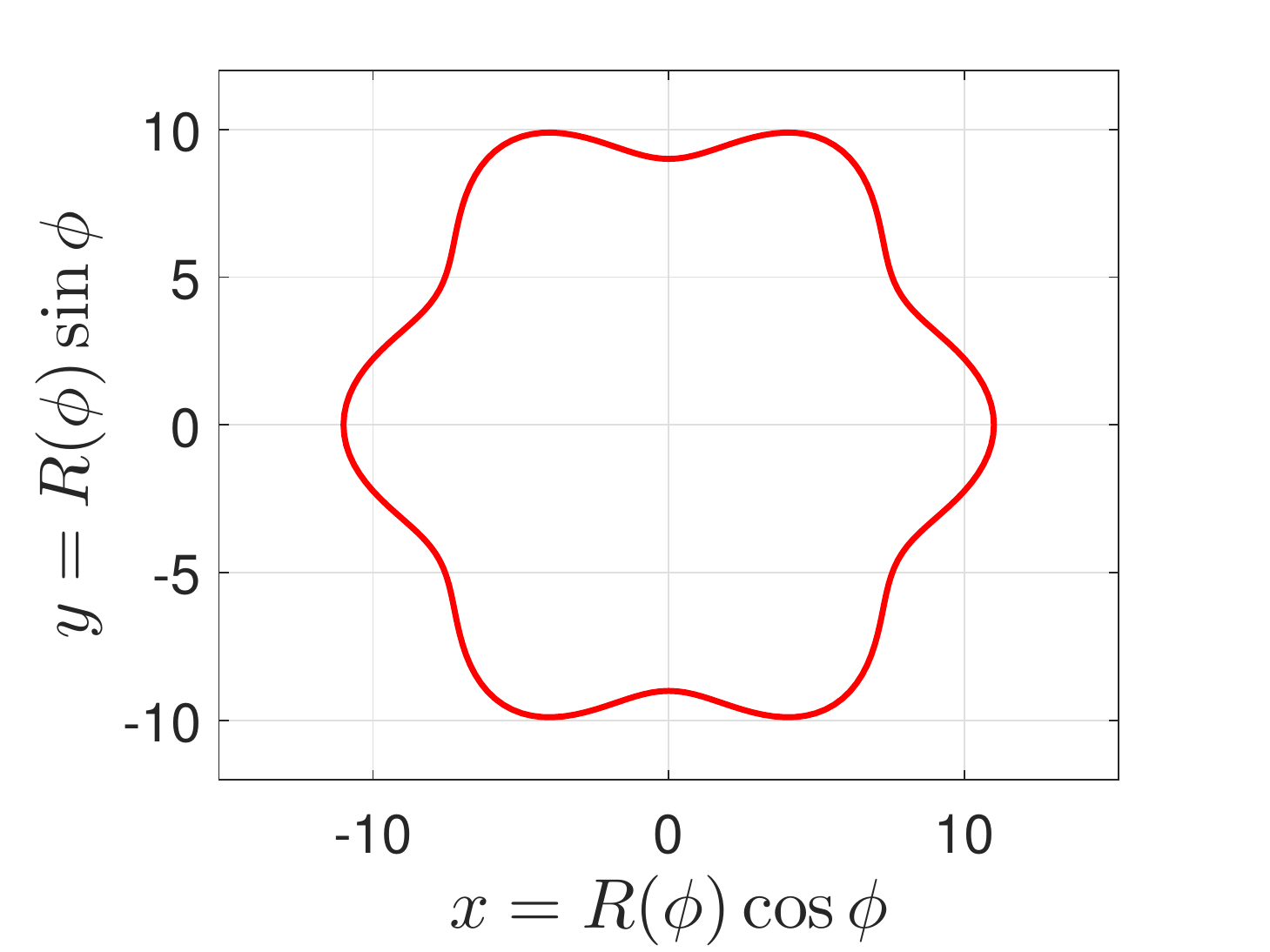}}
	\subfigure[$\kappa(\phi)-$Convex lima\c{c}on]{\includegraphics[width=4.0cm]{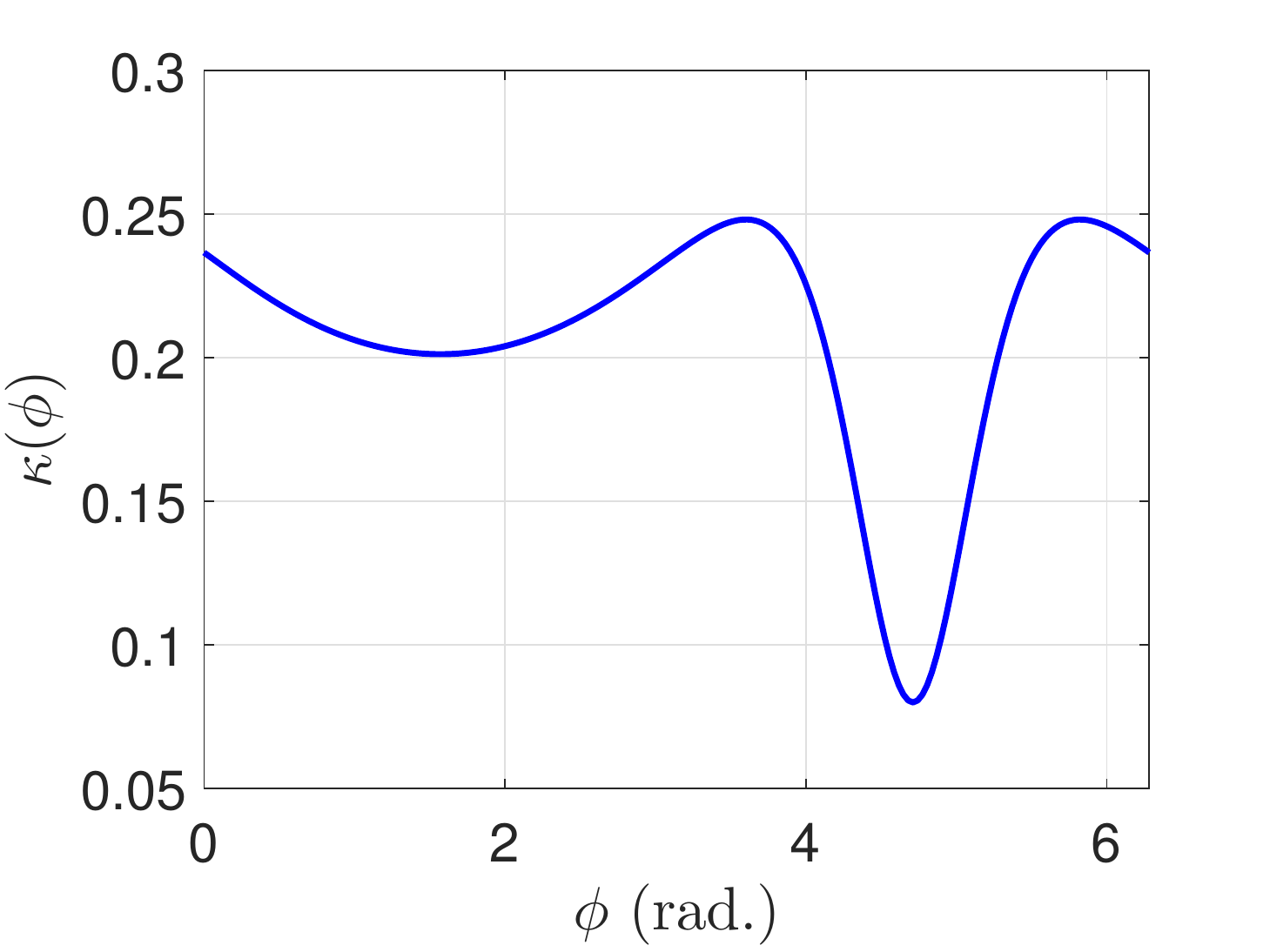}}
	\subfigure[$\kappa(\phi)-$Polar rose]{\includegraphics[width=4.0cm]{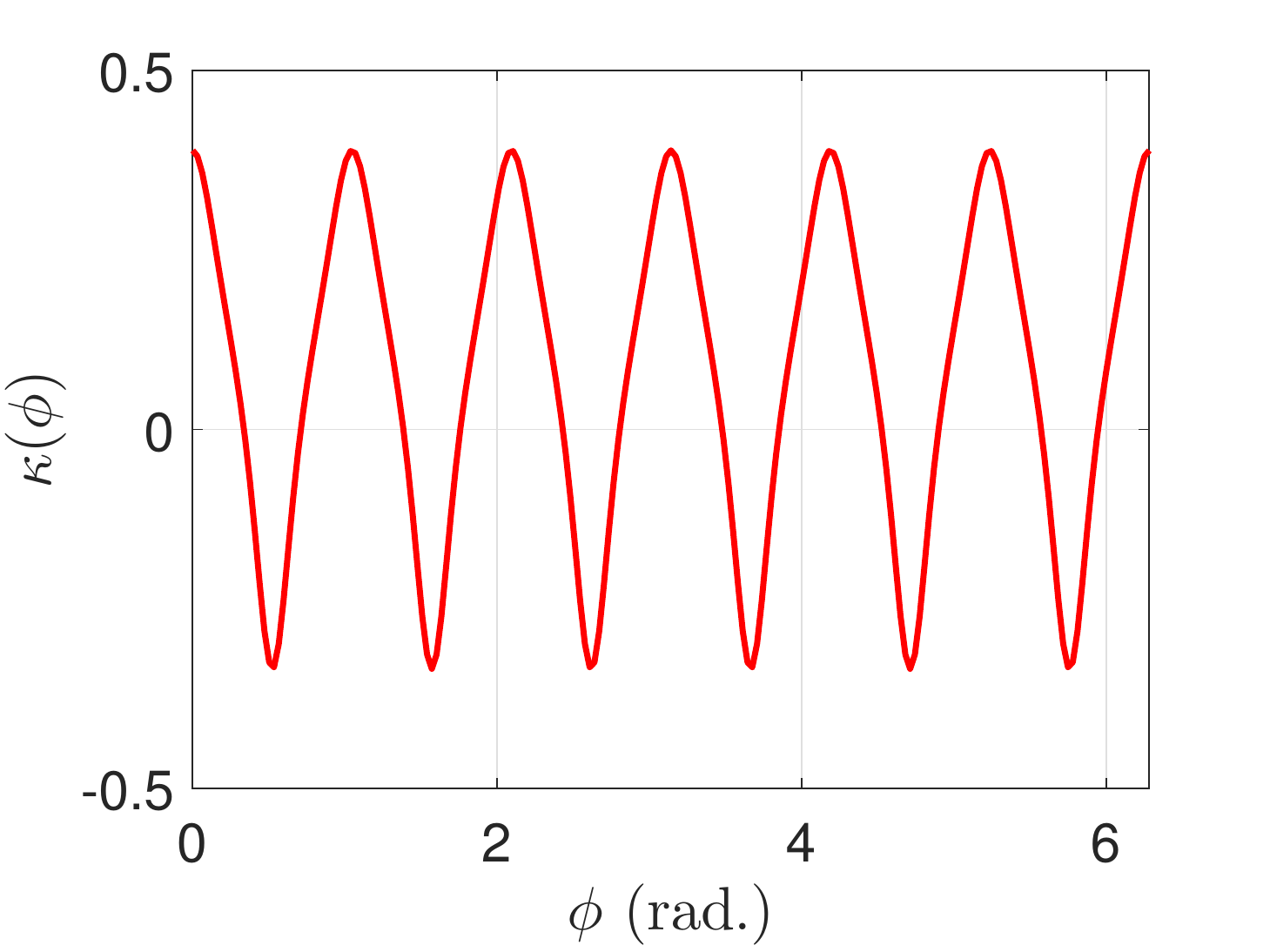}}
	\caption{Examples of simple closed curves and their curvature. Convex lima\c{c}on is plotted for $\hat{a} = 2, \hat{b} = 4.5$, and polar rose (non-convex) for $\tilde{a} = 10, \tilde{b} = 6, \tilde{s} = 1$.}
	\label{curves_fig}
\end{figure}

In the following example, we compare convex and non-convex polar curves and illustrate the challenges in designing the control laws for non-convex curves. 

\begin{example}\label{example1}
	Consider a family of simple closed curves, parameterized by $\phi$ with respect to the origin, and expressed in polar representation as $\rho(\phi) = R(\phi){\rm e}^{i\phi} = R(\phi)\cos\phi + iR(\phi)\sin\phi \in \mathbb{C}, R(\phi) \in \mathbb{R}_{>0}$. Depending upon $R(\phi)$, we illustrate the following two cases: \par 
	{Case~1:} A convex lima\c{c}on is an epitrochoid of the form $R(\phi) = \hat{b} + \hat{a}\sin\phi,~\hat{b} \geq 2\hat{a}$, where the condition $\hat{b} \geq 2\hat{a}$ ensures that the curve is simple and convex. One can obtain ${d\rho}/{d\phi} = \left(\hat{a}\cos 2\phi - \hat{b}\sin\phi\right) + i\left(\hat{a}\sin 2\phi + \hat{b}\cos\phi\right)$, which is essentially the tangent vector with slope $\tan\mu = ({\hat{a}\sin 2\phi + \hat{b}\cos\phi})/({\hat{a}\cos 2\phi - \hat{b}\sin\phi})$. From this, it can be obtained that ${d\mu}/{d\phi} = {\hat{\kappa}_N}/{\hat{\kappa}_D}$, where
	\begin{align*}
	\hat{\kappa}_N &= \hat{b}^2 + 2\hat{a}^2 + 3\hat{a}\hat{b}\sin\phi,\\
	\hat{\kappa}_D &= (\hat{a}\cos2\phi - \hat{b}\sin\phi)^2 + (\hat{a}\sin2\phi + \hat{b}\cos\phi)^2,
	\end{align*}
	which on substitution yields the curvature 
	\begin{equation*}
	\kappa(\phi) = \frac{d\mu}{d\phi}\frac{d\phi}{d\sigma} = \frac{\hat{\kappa}_N}{\hat{\kappa}_D}\frac{d\phi}{d\sigma} = \frac{\hat{\kappa}_N}{(\hat{\kappa}_D)^{\frac{3}{2}}},
	\end{equation*}
	where we have used the relation ${d\phi}/{d\sigma} = {1}/{|{d\rho}/{d\phi}|} = {1}/{\sqrt{\hat{\kappa}_D}}$, in the spirit of \eqref{arc_length}. \par 
	\par 
	{Case~2:} A polar rose is of the form $R(\phi) = \tilde{s}(\tilde{a}+ \cos(\tilde{b}\phi)), ~\tilde{a},\tilde{b}, \tilde{s} > 0$, where $\tilde{s}$ is a scaling factor and the condition $\tilde{a},\tilde{b},\tilde{s} > 0$ ensures that the curve is simple and closed. Similar to the previous case, one can obtain the tangent vector as ${d\rho}/{d\phi} = -\tilde{s}(\tilde{b}\sin(\tilde{b}\phi)\cos\phi + \tilde{a}\sin\phi + \cos(\tilde{b}\phi)\sin\phi) + i\tilde{s}(\cos(\tilde{b}\phi)\cos\phi + \tilde{a}\cos\phi - \tilde{b}\sin(\tilde{b}\phi)\sin\phi)$, which has slope $\tan\mu = -(\cos(\tilde{b}\phi)\cos\phi + \tilde{a}\cos\phi - \tilde{b}\sin(\tilde{b}\phi)\sin\phi)/(\tilde{b}\sin(\tilde{b}\phi)\cos\phi + \tilde{a}\sin\phi + \cos(\tilde{b}\phi)\sin\phi)$. It can be shown that ${d\mu}/{d\phi} = {\tilde{\kappa}_N}/{\tilde{\kappa}_D}$, where
	\begin{align*}
	\tilde{\kappa}_N &= \tilde{a}^2 + \tilde{b}^2 + \tilde{a}(\tilde{b}^2 + 2)\cos(\tilde{b}\phi) + \tilde{b}^2\sin^2(\tilde{b}\phi) + \cos^2(\tilde{b}\phi)\\
	\tilde{\kappa}_D &= (\tilde{b}\sin(\tilde{b}\phi)\cos\phi + \tilde{a}\sin\phi + \cos(\tilde{b}\phi)\sin\phi)^2\\
	& + (\cos(\tilde{b}\phi)\cos\phi + \tilde{a}\cos\phi - \tilde{b}\sin(\tilde{b}\phi)\sin\phi)^2, 
	\end{align*}
	and hence, the curvature is $\kappa(\phi) = (1/\tilde{s}){\tilde{\kappa}_N}/{(\tilde{\kappa}_D)^{\frac{3}{2}}}$, using similar steps as above. \par
	These curves, along with their curvature, are plotted in Fig.~\ref{curves_fig}. It is clear that the first curve (Fig.~\ref{curves_fig}(a)) is convex and the second (Fig.~\ref{curves_fig}(b)) is non-convex, and their curvatures are smooth and bounded. As the convexity of the curves change, the curvature changes its sign according to the convention mentioned above. Another important plot is shown in Fig.~\ref{many_to_one}, where $\phi$ is plotted against the angle $\mu$ of the tangent line to the curve. It is clear that $\phi(\mu)$ is a smooth bijective map for the convex lima\c{c}on, while this is not true for the polar rose, which is non-convex. Since the motion around a curve requires an agent to have the tangent velocity vector, non-uniqueness of $\phi$ to a tangent poses a challenge in stabilizing the motion of agents about a general class of simple closed curves including both convex and non-convex curves. A remedy for this is presented in Remark~\ref{remark_parameterization}.
\end{example}

\begin{figure}
	\centering
	\subfigure[$\phi(\mu)-$Convex lima\c{c}on]{\includegraphics[width=4.0cm]{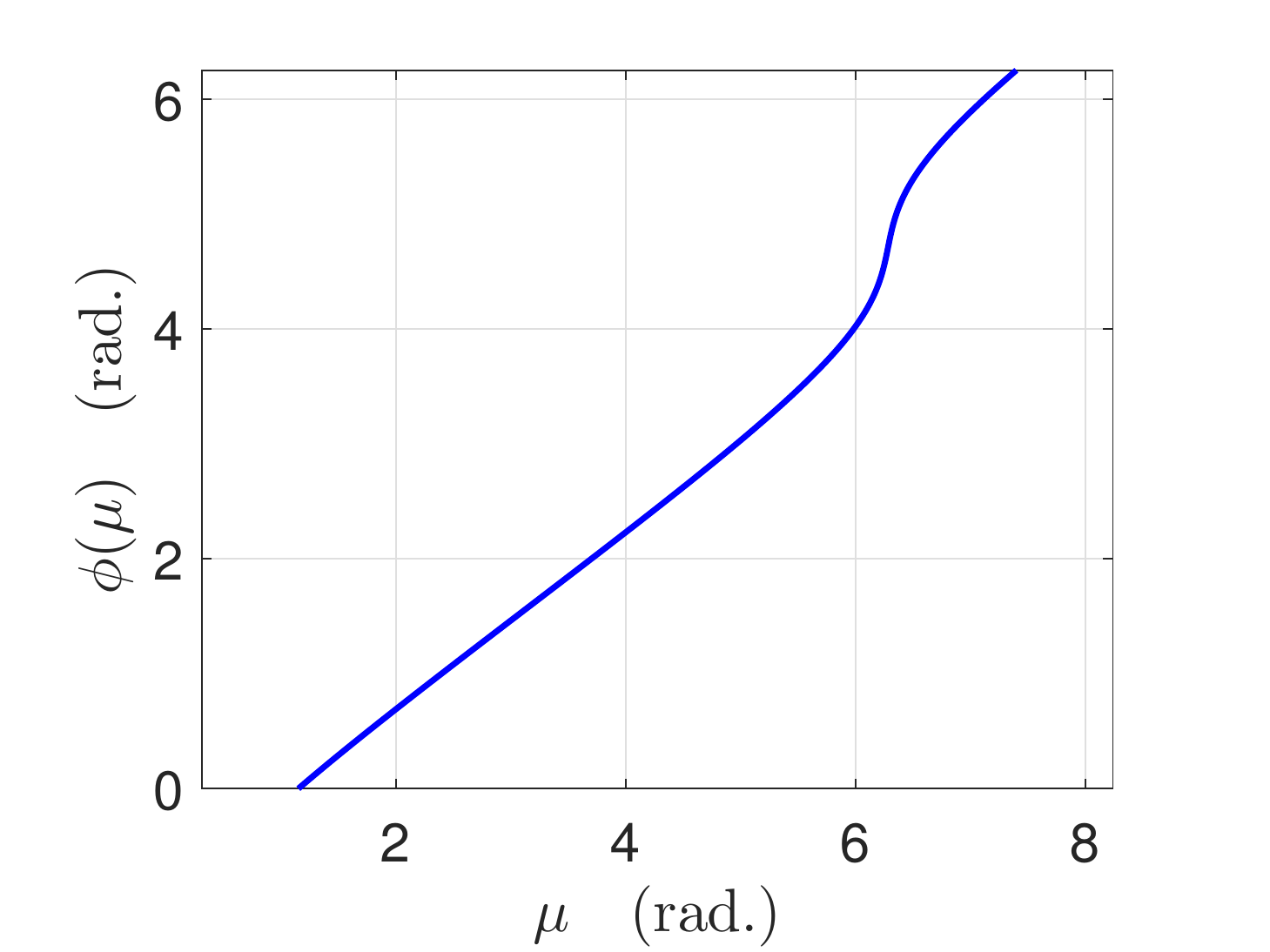}}
	\subfigure[$\phi(\mu)-$Polar rose]{\includegraphics[width=4.0cm]{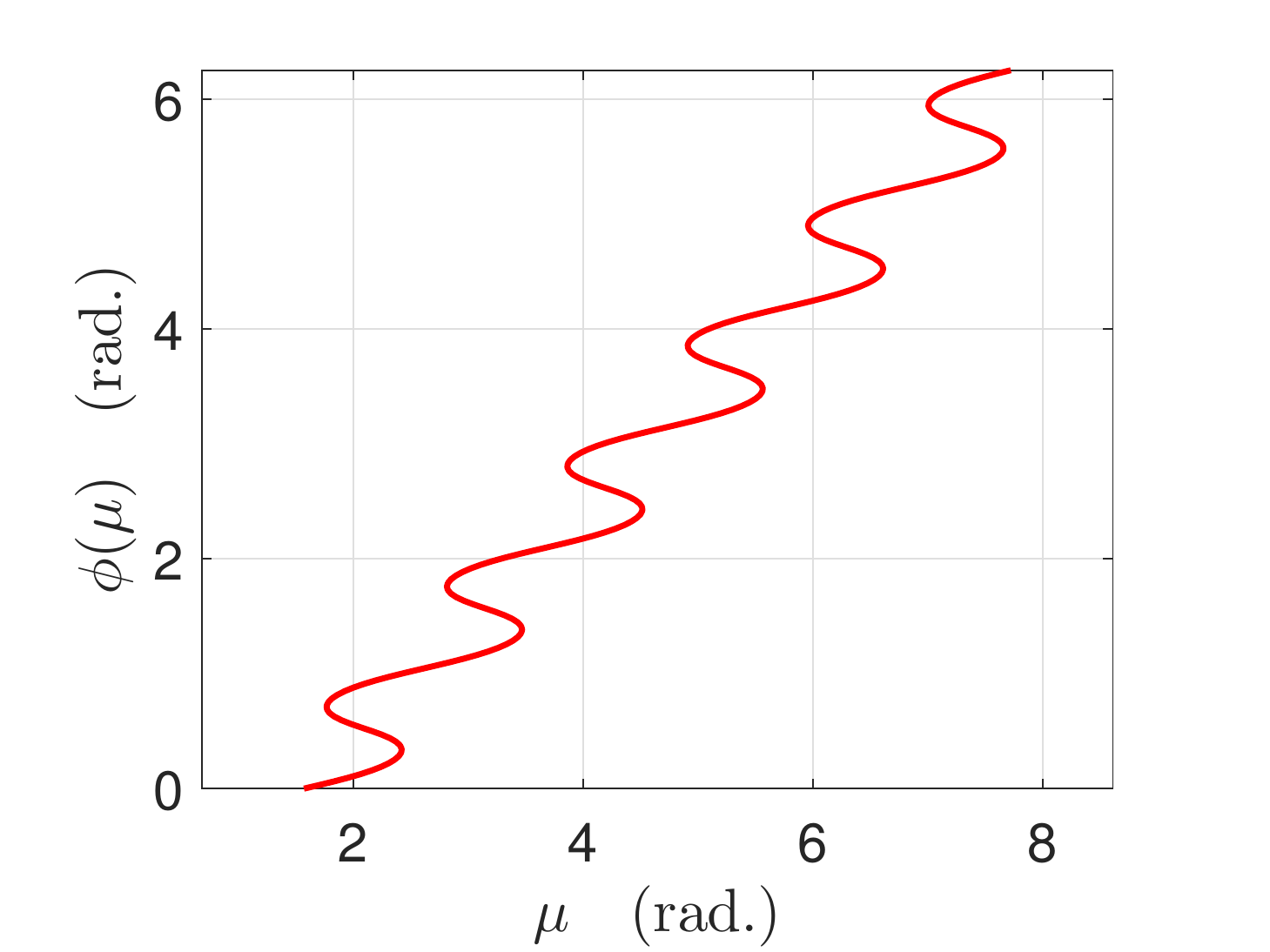}}
	\caption{The plots for $\phi(\mu)$ vs $\mu$ for the curves in Fig.~\ref{curves_fig}.}
	\label{many_to_one}
\end{figure}


Let us now turn our focus to the $k^\text{th}$ agent at point $\mathcal{P}$ in Fig.~\ref{problem_fig}, trying to move around the desired curve $\mathcal{C}$. Let $\rho_k \coloneqq \rho(\phi(\theta_k))$ be the required tracking point (the point $\mathcal{P}'$ in Fig.~\ref{problem_fig}) on $\mathcal{C}$ for the $k^\text{th}$ agent, associated to its heading $\theta_k$ by the smooth re-parametrization $\phi: \mathbb{S}^1 \to [0,2\pi), \theta_k \mapsto  \phi(\theta_k)$. In order to allow the $k^\text{th}$ agent to move around $\mathcal{C}$, the velocity constraint $\hat{g}_{t}(\phi(\theta_k)) = {\rm e}^{i\theta_k} \in \mathbb{C}$ must be satisfied. This is equivalent to the constraints $\tan\theta_k = f(\phi_k)$, where, $\phi_k \coloneqq \phi(\theta_k)$, and $f(\phi_k) = {\Im(\hat{g}_t(\phi_k))}/{\Re(\hat{g}_t(\phi_k))}$. The time derivative of $\tan\theta_k = f(\phi_k)$ leads to the following parametric-phase model
\begin{equation}\label{parametric-phase_model_intial}
\dot{\phi}_k = \frac{d\phi_k}{d\theta_k}\dot{\theta}_k = \frac{1+f^2(\phi_k)}{f'(\phi_k)}\dot{\theta}_k,~~\hat{g}_t(\phi_k(0)) = {\rm e}^{i\theta_k(0)},
\end{equation}  
which describes the evolution of point $\mathcal{P}'$ along the curve $\mathcal{C}$. Using \eqref{eq_new_control} and \eqref{eq_ratio}, a simplified parametric-phase model is obtained as
\begin{equation}\label{parameteric_phase_model}
\dot{\phi}_k = \frac{1 + \zeta_k}{\sqrt{(R'(\phi_k))^2 + (R(\phi_k))^2}} ,~~\hat{g}_t(\phi_k(0)) = {\rm e}^{i\theta_k(0)},
\end{equation}  
which is bounded for simple closed polar curves as $R(\phi_k) \in \mathbb{R}_{>0}$. 


\begin{remark}\label{remark_parameterization}
	Unlike convex curves, $\theta_k \mapsto \phi_k$ may not be a bijective map for non-convex curves, and hence, there may exist multiple values of $\phi_k$ for a $\theta_k$, as illustrated in Example~\ref{example1}. Among these, any value of $\phi_k$ may be chosen, provided $\hat{g}_t(\phi_k(0)) = {\rm e}^{i\theta_k(0)}$ and $|e_k(0)| < \delta$ for all $k$ (see below Theorem~\ref{theorem1}). This initialization, together with the parametric phase model \eqref{parameteric_phase_model}, gives a unique parametrization of $\phi_k$ with $\theta_k$.
\end{remark}


Along with stabilizing the agents' motion around the desired curve $\mathcal{C}$, we also achieve synchronized and balanced curve-phase patterns in their collective motion. In this direction, we define the curve-phase $\psi_k \coloneqq \psi(\phi_k)$ at a point $\rho_k$ on the curve $\mathcal{C}$ as follows \cite{paley2008stabilization}: 
\begin{equation}\label{curve_phase}
\psi_k = \frac{2\pi}{\Gamma_{\mathcal{C}}}\sigma_k,~k= 1, \ldots, N,
\end{equation}
where, $\sigma_k \coloneqq \sigma(\phi_k)$, as defined in \eqref{arc_length}, and $\Gamma_{\mathcal{C}} = \sigma(2\pi)$ is the perimeter of $\mathcal{C}$. The time derivative of \eqref{curve_phase}, along the dynamics \eqref{parameteric_phase_model}, yields the following curve-phase model
\begin{equation}\label{curve_phase_model}
\dot{\psi}_k  = \frac{2\pi}{\Gamma_{\mathcal{C}}}\frac{d\sigma_k}{dt} = \frac{2\pi}{\Gamma_{\mathcal{C}}}\frac{d\sigma_k}{d\phi_k}\dot{\phi}_k = \frac{2\pi}{\Gamma_{\mathcal{C}}}(1 + \zeta_k),
\end{equation}
where we used ${d\sigma_k}/{d\phi_k}= \sqrt{(R'(\phi_k))^2 + (R(\phi_k))^2}$ from Lemma~\ref{lem_curvature_polar_curve}. The next subsection describes curve-phase synchronization and balancing characterized by the curve-phases $\psi_k$. 


\subsection{Curve-Phase Synchronization and Balancing}
The curve-phase synchronization and balancing, around the desired curve $\mathcal{C}$, are characterized by the quantity $p_{\psi} \coloneqq ({1}/{N}) \sum_{k=1}^{N} {\rm e}^{i\psi_k} = |p_{\psi}|{\rm e}^{i\Psi}$, where, $|p_{\psi}|$ is its magnitude, and $\Psi$ is the resultant phase of the phasors ${\rm e}^{i\psi_k}$. The magnitude $|p_{\psi}|$ satisfies $0 \leq |p_{\psi}| \leq 1$, and is a measure of synchrony of $\pmb{\psi} = [\psi_1,\ldots,\psi_N]^T$. If $\psi_1 = \cdots = \psi_N$, then $\pmb{\psi}$ is synchronized and hence $|p_\psi|$ achieves its maximum value, that is, $|p_\psi| = 1$. On the other hand, $\pmb{\psi}$ is said to be balanced if $p_\psi = 0$, that is, the phasors ${\rm e}^{i\psi_k}$ add up to zero. Unlike \cite{jain2019trajectory}, the notions of synchronization and balancing are generalized here and are defined in terms of curve-phases $\psi_k$ instead of heading angles $\theta_k$. For the special case of circular motion, it is evident that synchronization and balancing of $\psi_k$ is equivalent to that of $\theta_k$. 

We consider the following Laplacian-based curve-phase potential function of phasors ${\rm e}^{i\psi_k}$ to stabilize synchronization and balancing around $\mathcal{C}$: 
\begin{equation}
\label{W}\mathcal{W}(\pmb{\psi}) = \frac{1}{2}\langle {\rm e}^{i\pmb{\psi}}, \mathcal{L}{\rm e}^{i\pmb{\psi}}\rangle,
\end{equation}
where, $ {\rm e}^{i\pmb{\psi}} = [{\rm e}^{i\psi_1}, \ldots, {\rm e}^{i\psi_N}]^T$, and $\mathcal{L}$ is the Laplacian of the underlying interaction topology. Since $\mathcal{L} = \mathcal{M}\mathcal{M}^T$ ($\mathcal{M}$ being the incidence matrix) for an undirected and connected graph, we have
\begin{align}\label{W_max}
\nonumber &\hspace*{-0.3cm}\frac{1}{2}\langle {\rm e}^{i\pmb{\psi}}, \mathcal{L}{\rm e}^{i\pmb{\psi}}\rangle = \frac{1}{2}\langle {\rm e}^{i\pmb{\psi}}, \mathcal{M}\mathcal{M}^T{\rm e}^{i\pmb{\psi}}\rangle = \frac{1}{2}\langle \mathcal{M}^T{\rm e}^{i\pmb{\psi}}, \mathcal{M}^T{\rm e}^{i\pmb{\psi}}\rangle = \\
& \hspace*{-0.3cm}  \frac{1}{2}\sum_{\{j,k\} \in \mathcal{E}}|{\rm e}^{i\psi_j} - {\rm e}^{i\psi_k}|^2 \leq \frac{1}{2}\sum_{\{j,k\} \in \mathcal{E}}(|{\rm e}^{i\psi_j}| + |{\rm e}^{i\psi_k}|)^2 = 2|\mathcal{E}|,
\end{align}
where, $|\mathcal{E}|$ is the cardinality of the edge set $\mathcal{E}$. 
\begin{lem}[\hspace{-.1pt}\cite{paley2008stabilization,jain2018collective}]\label{lem_critical points of W} 
	Let $\mathcal{L}$ be the Laplacian of an undirected and connected graph $\mathcal{G}$ with $N$ vertices. Consider the Laplacian curve-phase potential $\mathcal{W}(\pmb{\psi})$ defined in \eqref{W}. If ${\rm e}^{i\pmb{\psi}}$ is an eigenvector of $\mathcal{L}$, then $\pmb{\psi}$ is a critical point of $\mathcal{W}(\pmb{\psi})$, and $\pmb{\psi}$ is either synchronized or balanced. The potential $\mathcal{W}(\pmb{\psi})$ reaches its global minimum if and only if $\pmb{\psi}$ is synchronized. If $\mathcal{G}$ is circulant, then $\mathcal{W}(\pmb{\psi})$ reaches its global maximum in a balanced curve-phase arrangement.
\end{lem}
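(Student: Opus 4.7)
I would address the three claims of the lemma in turn. First, to show that $\pmb{\psi}$ is a critical point of $\mathcal{W}$ whenever $e^{i\pmb{\psi}}$ is an eigenvector of $\mathcal{L}$, I would differentiate \eqref{W} componentwise using $\partial_{\psi_k} e^{i\psi_j} = i\delta_{jk} e^{i\psi_k}$ together with the real symmetry of $\mathcal{L}$; a short calculation reduces the gradient to $\partial \mathcal{W}/\partial \psi_k = \Re\bigl(-ie^{-i\psi_k}[\mathcal{L}e^{i\pmb{\psi}}]_k\bigr)$. When $\mathcal{L}e^{i\pmb{\psi}} = \lambda e^{i\pmb{\psi}}$, PSD-ness of $\mathcal{L}$ forces $\lambda \in \mathbb{R}_{\geq 0}$, and each component collapses to $\Re(-i\lambda) = 0$. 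To obtain the sync/balance dichotomy, I would split $e^{i\pmb{\psi}} = \cos\pmb{\psi} + i\sin\pmb{\psi}$; since $\mathcal{L}$ has real entries, the eigenequation decouples into $\mathcal{L}\cos\pmb{\psi} = \lambda\cos\pmb{\psi}$ and $\mathcal{L}\sin\pmb{\psi} = \lambda\sin\pmb{\psi}$. For $\lambda = 0$, connectivity forces $\cos\pmb{\psi}, \sin\pmb{\psi} \in \mathrm{span}(\pmb{1}_N)$, and the unit-modulus constraint $\cos^2\psi_k + \sin^2\psi_k = 1$ then pins all $\psi_k$ to a common value, i.e. synchronization. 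For $\lambda > 0$, orthogonality of $\cos\pmb{\psi}$ and $\sin\pmb{\psi}$ against the null-space vector $\pmb{1}_N$ gives $\sum_k \cos\psi_k = \sum_k \sin\psi_k = 0$, hence $p_\psi = 0$, i.e. balancing.

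Second, for the global minimum, I would directly invoke the expansion displayed in \eqref{W_max}, namely $\mathcal{W}(\pmb{\psi}) = \frac{1}{2}\sum_{\{j,k\}\in\mathcal{E}}|e^{i\psi_j} - e^{i\psi_k}|^2 \geq 0$. Equality requires $e^{i\psi_j} = e^{i\psi_k}$ across every edge, which by connectivity of $\mathcal{G}$ propagates to all vertex pairs; conversely, synchronization trivially yields $\mathcal{W} = 0$. Thus $\mathcal{W}$ attains its global minimum if and only if $\pmb{\psi}$ is synchronized.

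Third, for the circulant maximum, I would use Lemma~\ref{lem_circulant_matrices} to write $\mathcal{L} = \mathcal{F}\Lambda\mathcal{F}^*$ and substitute into \eqref{W}, obtaining $\mathcal{W}(\pmb{\psi}) = \frac{1}{2}\sum_{\ell=1}^{N}\lambda_\ell |v_\ell|^2$ with $v \coloneqq \mathcal{F}^* e^{i\pmb{\psi}}$ and $\|v\|^2 = N$ by unitarity of $\mathcal{F}$. This yields $\mathcal{W}(\pmb{\psi}) \leq (N/2)\lambda_{\max}$, with equality requiring $v_\ell = 0$ for every $\ell$ satisfying $\lambda_\ell < \lambda_{\max}$. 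In particular $v_1 = (1/\sqrt{N})\sum_k e^{i\psi_k} = 0$, so any maximizing $\pmb{\psi}$ is automatically balanced. Attainability follows because the Fourier basis vectors $\pmb{f}^{(\ell)} = e^{i(\ell-1)\pmb{\chi}}$ furnished by Lemma~\ref{lem_circulant_matrices} have unit-modulus entries: choosing $\ell^*$ with $\lambda_{\ell^*} = \lambda_{\max}$ and setting $\psi_k = (\ell^*-1)(k-1)2\pi/N$ produces $e^{i\pmb{\psi}} = \pmb{f}^{(\ell^*)}$ and saturates the bound.

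The main delicacy lies in the first claim's dichotomy: one must ensure that the unit-modulus constraint $|e^{i\psi_k}| = 1$ survives the decomposition of the complex eigenvector into two real eigenvectors of $\mathcal{L}$, which is precisely what rules out configurations other than synchronization ($\lambda = 0$) or balancing ($\lambda > 0$). The circulant hypothesis in the final claim is equally essential: only then does Lemma~\ref{lem_circulant_matrices} explicitly provide an eigenvector with unit-modulus entries inside the $\lambda_{\max}$-eigenspace, so that the upper bound $(N/2)\lambda_{\max}$ is actually realized by some $\pmb{\psi} \in \mathbb{T}^N$.
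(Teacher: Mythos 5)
Your proposal is correct and follows essentially the same route the paper takes (which itself defers to the cited references): the edge-sum expansion of $\mathcal{W}$ for the global minimum, and the unitary diagonalization $\mathcal{L}=\mathcal{F}\Lambda\mathcal{F}^\ast$ from Lemma~\ref{lem_circulant_matrices} for the circulant maximum, with the gradient $\partial\mathcal{W}/\partial\psi_k=\langle i{\rm e}^{i\psi_k},\mathcal{L}_k{\rm e}^{i\pmb{\psi}}\rangle$ vanishing on eigenvectors. The only difference is that you spell out the synchronization/balancing dichotomy via the real--imaginary decomposition of the eigenvector and orthogonality to $\pmb{1}_N$, a detail the paper leaves to the citations; this is a welcome completion rather than a departure.
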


The proof of Lemma~\ref{lem_critical points of W} directly follows from Lemma~\ref{lem_circulant_matrices} \cite{jain2018collective}. From \eqref{W}, it is clear that, for an undirected and connected graph $\mathcal{G}$,  $\mathcal{W}(\pmb{\psi})$ achieves its minimum value zero when ${\rm e}^{i\pmb{\psi}} = {\rm e}^{i\psi_0}\pmb{1}_N$ for any $\psi_0 \in \mathbb{S}^1$, implying that the curve-phases $\pmb{\psi}$ are in synchronization. If the graph $\mathcal{G}$ is circulant, then Lemma~\ref{lem_circulant_matrices} allows writing $\mathcal{W}(\pmb{\psi}) = \frac{1}{2}\langle \mathcal{F}^\ast {\rm e}^{i\pmb{\psi}}, \Lambda \mathcal{F}^\ast {\rm e}^{i\pmb{\psi}}\rangle$, which on substituting $\pmb{w} = \mathcal{F}^\ast {\rm e}^{i\pmb{\psi}}$, yields $\mathcal{W}(\pmb{\psi}) = \frac{1}{2}\langle \pmb{w}, \Lambda\pmb{w}\rangle = \frac{1}{2}\sum_{k=2}^{N}|w_k|^2\lambda_k$. Since $\mathcal{F}$ is unitary, $\|\pmb{w}\| = \|{\rm e}^{i\pmb{\psi}}\| = \sqrt{N}$. Thus, $ \mathcal{W}(\pmb{\psi}) = (1/2)\left<\pmb{w}, \Lambda\pmb{w}\right> \leq (N/2) \lambda_{\text{max}}$, where, $\lambda_{\text{max}}$ is the maximum eigenvalue of $\mathcal{L}$. In other words, $\mathcal{W}(\pmb{\psi})$ is bounded by $(N/2) \lambda_{\text{max}}$ for a circulant graph $\mathcal{G}$, and the maximum value is achieved by selecting ${\rm e}^{i\pmb{\psi}}$ as the eigenvector of $\mathcal{L}$, associated with $\lambda_{\text{max}}$. Since ${\rm e}^{i\pmb{\psi}}$ is orthogonal to $\pmb{1}_N$, i.e., $\pmb{1}_N^T{\rm e}^{i\pmb{\psi}} = 0$, it corresponds to the balancing of curve-phases $\pmb{\psi}$.

The time-derivative of $\mathcal{W}(\pmb{\psi})$, along the curve-phase dynamics \eqref{curve_phase_model}, is
\begin{equation*}
\dot{\mathcal{W}} = \sum_{k=1}^{N} \left(\frac{\partial \mathcal{W}}{\partial \psi_k}\right) \dot{\psi}_k = \frac{2\pi}{\Gamma_{\mathcal{C}}}\sum_{k=1}^{N} \left(\frac{\partial \mathcal{W}}{\partial \psi_k}\right)(1 + \zeta_k).
\end{equation*}
The gradient ${\partial \mathcal{W}}/{\partial \psi_k}$ can be calculated as ${\partial \mathcal{W}}/{\partial \psi_k} = \langle i{\rm e}^{i\psi_k}, \mathcal{L}_k {\rm e}^{i\pmb{\psi}}\rangle = -\sum_{j \in \mathcal{N}_k} \sin(\psi_j - \psi_k)$, where, $\mathcal{L}_k$ is the $k^\text{th}$ row of the Laplacian $\mathcal{L}$. As a result,
\begin{equation}
\label{W_dot_final}\dot{\mathcal{W}} =  \frac{2\pi}{\Gamma_{\mathcal{C}}}\sum_{k=1}^{N} \langle i{\rm e}^{i\psi_k}, \mathcal{L}_k{\rm e}^{i\pmb{\psi}}\rangle (1 + \zeta_k), 
\end{equation}
and $\sum_{k=1}^{N} {\partial \mathcal{W}}/{\partial \psi_k} = \langle i{\rm e}^{i\pmb{\psi}}, \mathcal{M}\mathcal{M}^T {\rm e}^{i\pmb{\psi}}\rangle = \langle i\mathcal{M}^T{\rm e}^{i\pmb{\psi}}, \mathcal{M}^T{\rm e}^{i\pmb{\psi}}\rangle = 0$, implying that $\langle \nabla_{\pmb{\psi}} \mathcal{W}, \pmb{1}_N \rangle = 0$, i.e., $\nabla_{\pmb{\psi}} \mathcal{W}$ and $\pmb{1}_N$ are orthogonal. 

\section{Control Design}\label{section4}
This section derives feedback control laws that enforce the collective motion of the agents around the desired simple closed polar curve in synchronized or balanced curve-phase patterns. The proposed controllers also assure that the agents' trajectories remain bounded during stabilization in either of the phase patterns, and their turn-rates adhere to the desired saturation limits. From Fig.~\ref{problem_fig}, the error $e_k$ is given by $\mathcal{P}'\mathcal{P} = \mathcal{Q}\mathcal{P} - \mathcal{Q}\mathcal{P}'$, leading to
\begin{equation}\label{error_var}
e_k = r_k - c_d - \rho_k.
\end{equation}
Note that \eqref{error_var} is valid even if the agents move in the clockwise direction as the unit vector $\hat{g}_n(\phi)$ reverses its direction and the sign of curvature $\kappa_k$ also changes. Thus, without loss of generality, further analysis is carried out with respect to Fig.~\ref{problem_fig}. 

The time derivative of $e_k$, along dynamics \eqref{modelNew} and \eqref{parameteric_phase_model}, is
\begin{equation}\label{error_der}
\dot{e}_k = \dot{r}_k - \dot{\rho}_k = -{\rm e}^{i\theta_k}\zeta_k,
\end{equation}
where, $\dot{\rho}_k = ({d\rho_k}/{d\phi_k})\dot{\phi}_k = {\rm e}^{i\theta_k}(1 + \zeta_k)$ is used to simplify the expression, in conjunction with the relations, ${d\rho_k}/{d\phi_k} = (|{d\rho_k}/{d\phi_k}|){\rm e}^{i\theta_k}$, and $|{d\rho_k}/{d\phi_k}| = \sqrt{(R'(\phi_k))^2 + (R(\phi_k))^2}$ from Lemma~\ref{lem_curvature_polar_curve}. To allow the agents to move around $\mathcal{C}$, the error $e_k, \forall k$, is minimized by using the following logarithmic BLF-based collective potential function
\begin{equation}\label{BLF_potential_fn}
\hspace*{-0.1cm}\mathcal{S}(\pmb{e}) = \mathcal{S}(\pmb{r}, \pmb{\theta}) \coloneqq \sum_{k=1}^{N} \mathcal{S}_{k}(r_k,\theta_k) =\frac{1}{2} \sum_{k=1}^{N} \ln \left(\frac{\delta^2}{\delta^2 - |e_k|^2}\right),
\end{equation}
where, `$\ln$' denotes natural logarithm, $\delta > 0$ is a constant, $\pmb{e} = [e_1, \ldots, e_N]^T$ is the error vector, and $\mathcal{S}_{k}(r_k,\theta_k) = ({1}/{2}) \ln ({\delta^2}/{(\delta^2 - |e_k|^2)})$ is the BLF for the $k^\text{th}$ agent. The potential $\mathcal{S}(\pmb{e})$ is positive definite and continuously differentiable in the region $|e_k(t)| < \delta, \forall k$ \cite{tee2009barrier}, and is zero when $\pmb{e} = \pmb{0}_N$. Thus, the minimization of $\mathcal{S}(\pmb{e})$ corresponds to the collective motion around the desired curve $\mathcal{C}$, that is, $e_k = 0$ for all $k$ in \eqref{error_var}, implying that
\begin{equation}\label{position_new}
r_k = c_d + \rho_k,~\forall k,
\end{equation}
which is the position of the $k^\text{th}$ agent on curve $\mathcal{C}$. The time derivative of $\mathcal{S}(\pmb{e})$, along the dynamics \eqref{modelNew} and \eqref{parameteric_phase_model}, is $\dot{\mathcal{S}}  = \frac{1}{2}\sum_{k=1}^{N} ({{\frac{d}{dt}|e_k|^2}})/({\delta^2 - |e_k|^2})$, where, $\frac{1}{2}\frac{d}{dt}|e_k|^2 = \langle e_k, \dot{e}_k \rangle $. Substituting for $e_k$ and $\dot{e}_k$ from \eqref{error_var} and \eqref{error_der}, and simplifying the inner product, yields
\begin{equation}
\label{S_dot}\dot{\mathcal{S}} = -\sum_{k=1}^{N} \frac{\langle r_k-c_d-\rho_k, {\rm e}^{i\theta_k} \rangle}{\delta^2 - |e_k|^2}\zeta_k. 
\end{equation}
We now propose in the following theorem a Lyapunov-based framework to achieve curve-phase synchronization and balancing around $\mathcal{C}$, along with bounded trajectories and control saturation.   

\begin{thm}\label{theorem1}
	Let $\mathcal{L}$ be the Laplacian of an undirected and connected graph $\mathcal{G}$ with $N$ vertices. Consider the agent, parametric-phase, and curve-phase models \eqref{modelNew}, \eqref{parameteric_phase_model} and \eqref{curve_phase_model}, respectively. Assume that the initial states of the agents belong to the set $\mathcal{Z}_\delta \coloneqq \{(\pmb{r}, \pmb{\theta}) \in \mathbb{C}^N\times\mathbb{T}^N \mid |e_k| < \delta, ~\forall k\}$, where $e_k$ is defined in \eqref{error_var}, and $\delta > 0$ is a positive constant. Let the agents be governed by the saturated control law \eqref{saturated_control}, where
	\begin{equation}\label{control1}
	\zeta_k = K_{\mathcal{C}}\frac{\langle r_k-c_d-\rho_k, {\rm e}^{i\theta_k} \rangle}{\delta^2 - |e_k|^2} +  K\langle i{\rm e}^{i\psi_k}, \mathcal{L}_k{\rm e}^{i\pmb{\psi}}\rangle,
	\end{equation}
	for all $k = 1, \ldots, N$. Then, the following properties hold:
	\begin{enumerate}
	\item[i)] If $K_{\mathcal{C}} > 0$, and $K < 0$, all the agents asymptotically converge to the desired curve $\mathcal{C}$, centered at $c_d$, in a synchronized curve-phase arrangement in the set $\mathcal{Z}_\delta$.\par  
	\item[ii)] Additionally, if $\mathcal{G}$ is circulant with $K_{\mathcal{C}} > 0$, and $K > 0$, all the agents asymptotically converge to the desired curve $\mathcal{C}$, centered at $c_d$, in a balanced curve-phase arrangement in the set $\mathcal{Z}_\delta$. 
	\item[iii)] The trajectories of the agents in both the above cases stay within the set $\mathcal{B}_\delta = \bigcup_{\phi \in [0, 2\pi)}\mathcal{B}(\mathcal{C}(\phi), \delta)$ for all $k$ and $t \geq 0$, where $\mathcal{B}(\mathcal{C}(\phi), \delta) \coloneqq \{z \in \mathbb{C}~|~|z - c_d - \rho(\phi)| < \delta\}$ is the open disc of radius $\delta$ centered at $\mathcal{C}(\phi) = c_d + \rho(\phi)$ at $\phi$, and $\rho(\phi)$ is a parametrization of $\mathcal{C}$.	
	\end{enumerate}
\end{thm}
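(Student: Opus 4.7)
The backbone of the proof is the composite Lyapunov candidate formed from the logarithmic BLF $\mathcal{S}(\pmb{e})$ in \eqref{BLF_potential_fn} and the Laplacian curve-phase potential $\mathcal{W}(\pmb{\psi})$ in \eqref{W}:
\begin{equation*}
V \;=\; K_{\mathcal{C}}\,\mathcal{S}(\pmb{e}) \;-\; \frac{K\,\Gamma_{\mathcal{C}}}{2\pi}\,\mathcal{W}(\pmb{\psi}).
\end{equation*}
The weighting in front of $\mathcal{W}$ is chosen so that, writing $a_{k} = \langle r_{k}-c_{d}-\rho_{k},{\rm e}^{i\theta_{k}}\rangle/(\delta^{2}-|e_{k}|^{2})$ and $b_{k} = \langle i{\rm e}^{i\psi_{k}},\mathcal{L}_{k}{\rm e}^{i\pmb{\psi}}\rangle$ for the two bracketed terms in the control \eqref{control1}, the derivative formulas \eqref{S_dot} and \eqref{W_dot_final} (together with the identity $\sum_{k}b_{k}=0$ noted just after \eqref{W_dot_final}) collapse neatly into
\begin{equation*}
\dot{V}\;=\;-\sum_{k=1}^{N}(K_{\mathcal{C}}a_{k}+Kb_{k})\zeta_{k}\;=\;-\sum_{k=1}^{N}\zeta_{k}^{\,2}\;\leq\; 0.
\end{equation*}
Under the sign conventions of (i), both terms of $V$ are nonnegative, while under (ii) the circulant structure and Lemma~\ref{lem_circulant_matrices} give $\mathcal{W}\leq (N/2)\lambda_{\max}$, so $V$ is bounded below. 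In either case $V$ is nonincreasing along trajectories.

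Part (iii) is then an immediate application of Lemma~\ref{lem1}: because $\mathcal{S}(\pmb{e})\to\infty$ whenever any $|e_{k}|\to\delta$, the bounded Lyapunov level prevents $|e_{k}(t)|$ from ever reaching $\delta$, so $(\pmb{r}(t),\pmb{\theta}(t))\in\mathcal{Z}_{\delta}$ for all $t\geq 0$. This is equivalent to $r_{k}(t)\in\mathcal{B}(\mathcal{C}(\phi_{k}(t)),\delta)\subseteq\mathcal{B}_{\delta}$ for every $k$. A side benefit is that $\zeta_{k}$ and hence the demanded input $\kappa(\phi_{k})(1+\zeta_{k})$ stay bounded on trajectories, so under the standing assumption $u_{\max}\geq\max_{\phi}|\kappa(\phi)|$ one can argue that the saturation branch \eqref{saturation2} is only active on a bounded initial transient, and the computed $\dot{V}$ expression governs the long-time behaviour — making this reduction watertight is one point at which careful bookkeeping is required.

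The asymptotic claims follow by LaSalle's invariance principle applied to the compact sublevel set of $V$ in $\mathcal{Z}_{\delta}$: trajectories converge to the largest invariant set $\Omega\subseteq\{\zeta_{k}=0,\ \forall k\}$. In $\Omega$, \eqref{error_der} gives $\dot{e}_{k}\equiv 0$, while \eqref{curve_phase_model} gives $\dot{\psi}_{k}\equiv 2\pi/\Gamma_{\mathcal{C}}$ for all $k$; hence $e_{k}$ is frozen and the phase differences $\psi_{j}-\psi_{k}$ are constant. Because $\dot{\phi}_{k}$ in \eqref{parameteric_phase_model} is strictly positive and $\kappa(\phi)$ cannot vanish identically on any arc of a simple closed curve (its integral equals $\pm 2\pi$), the heading $\theta_{k}$ must sweep a non-trivial range along the invariant trajectory; the only way for $a_{k}$ to remain constant while ${\rm e}^{i\theta_{k}}$ rotates is $e_{k}=0$. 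Then $a_{k}=0$ and consequently $Kb_{k}=0$, which says $\nabla_{\pmb{\psi}}\mathcal{W}=\pmb{0}_{N}$, so by Lemma~\ref{lem_critical points of W} the limiting $\pmb{\psi}$ is either synchronized or balanced.

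The last and technically hardest step is to pin down the \emph{correct} critical manifold in each case. For (i), $K<0$ makes the coefficient of $\mathcal{W}$ in $V$ strictly positive, so the synchronized manifold is the unique global minimizer of $V$ restricted to the curve; a standard linearization/energy-perturbation argument then shows that every non-minimum critical point of $\mathcal{W}$ is Lyapunov unstable under the combined dynamics, leaving synchronization as the only attracting outcome. For (ii), with a circulant graph, the coefficient of $\mathcal{W}$ in $V$ is negative, so monotone decrease of $V$ pushes $\mathcal{W}$ toward its upper bound $(N/2)\lambda_{\max}$ identified via Lemma~\ref{lem_circulant_matrices}, and by Lemma~\ref{lem_critical points of W} this maximum is attained precisely on the balanced manifold $\pmb{1}_{N}^{T}{\rm e}^{i\pmb{\psi}}=0$. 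Rigorously excluding convergence to intermediate (saddle-type) critical points of $\mathcal{W}$, together with a clean treatment of the saturation in \eqref{saturated_control}, are the two places I expect to absorb most of the technical effort.
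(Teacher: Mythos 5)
Your proposal follows the paper's argument in its essential structure: the same composite function $K_{\mathcal{C}}\mathcal{S} - K(\Gamma_{\mathcal{C}}/2\pi)\mathcal{W}$ (the paper's $\mathcal{V}_1$ in \eqref{V_1}; for balancing the paper merely adds the constant $K(\Gamma_{\mathcal{C}}/2\pi)(N/2)\lambda_{\text{max}}$ to obtain the positive-definite $\mathcal{V}_2$ in \eqref{V2}, which leaves $\dot{\mathcal{V}}$ unchanged), the same cancellation via $\langle\nabla_{\pmb{\psi}}\mathcal{W},\pmb{1}_N\rangle=0$ yielding $\dot{\mathcal{V}}=-\sum_k\zeta_k^2$, the same appeal to Lemma~\ref{lem1} for part (iii), and the same LaSalle step on the set $\{\zeta_k=0,\ \forall k\}$, in which $e_k\equiv 0$ is extracted from the fact that $e_k$ is frozen while $\theta_k$ keeps rotating (the paper does this by differentiating the identity $K_{\mathcal{C}}a_k=-Kb_k$ once more and concluding $\langle e_k, i{\rm e}^{i\theta_k}\rangle = 0$).

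The one place where you genuinely diverge, and where your version has a gap, is the saturation branch \eqref{saturation2}. Your suggestion that saturation ``is only active on a bounded initial transient'' is neither proved nor usable as stated: whether the saturation ever deactivates is precisely what needs to be established, and the Lyapunov inequality must hold \emph{while} \eqref{saturation2} is active, not only afterwards. The paper closes this by a sign comparison: when \eqref{saturation2} is active, the applied input corresponds to an effective $\zeta_k^{\rm eff}$ with $1+\zeta_k^{\rm eff}=\text{sgn}(1+\zeta_k)\,u_{\max}/|\kappa(\phi)|$, and the standing assumption $u_{\max}\geq\max_\phi|\kappa(\phi)|$ forces the computed $\zeta_k$ of \eqref{control1} and $\zeta_k^{\rm eff}$ to lie on the same side (either $\zeta_k>0$ and $\zeta_k^{\rm eff}\geq 0$, or $\zeta_k<-2$ and $\zeta_k^{\rm eff}\leq -2$), so that $\dot{\mathcal{V}}=-\sum_k\zeta_k\zeta_k^{\rm eff}\leq 0$ on either branch of \eqref{saturated_control}. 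You should replace your transient claim with this argument. A secondary difference: to select the synchronized manifold in case (i) you propose an (unexecuted) instability analysis of the non-minimum critical points of $\mathcal{W}$; the paper instead argues directly from the monotone decrease of $\mathcal{V}_1$, which on the limit set equals a positive multiple of $\mathcal{W}$, that $\mathcal{W}$ tends to its minimum, and symmetrically that $(N/2)\lambda_{\text{max}}-\mathcal{W}$ tends to zero in case (ii) via Lemma~\ref{lem_critical points of W}. Your route would be stronger if carried out, but as written it is only a sketch; either execute it or adopt the paper's monotonicity argument.
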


\begin{proof}
	\begin{enumerate}
	\item[i)] Consider the composite potential function
	\begin{equation}\label{V_1}
	\mathcal{V}_1(\pmb{r}, \pmb{\theta}) = K_{\mathcal{C}} \mathcal{S}(\pmb{r}, \pmb{\theta}) - K\frac{\Gamma_{\mathcal{C}}}{2\pi} \mathcal{W}(\pmb{\psi});~~~K_{\mathcal{C}}>0,~K<0,
	\end{equation}
	which is positive definite and bounded from below by zero. The time derivative of $\mathcal{V}_1(\pmb{r}, \pmb{\theta})$, along the dynamics \eqref{modelNew}, \eqref{parameteric_phase_model} and \eqref{curve_phase_model}, is $	\dot{\mathcal{V}}_1 = K_{\mathcal{C}}\dot{\mathcal{S}} -K ({\Gamma_{\mathcal{C}}}/{2\pi})\dot{\mathcal{W}}$. Substituting $\dot{\mathcal{W}}$ and $\dot{\mathcal{S}}$ from \eqref{W_dot_final} and \eqref{S_dot}, respectively, yields $\dot{\mathcal{V}}_1 = -\sum_{k=1}^{N} K_{\mathcal{C}}\frac{\langle r_k-c_d-\rho_k, {\rm e}^{i\theta_k} \rangle}{\delta^2 - |e_k|^2}\zeta_k - \sum_{k=1}^{N} K \langle i{\rm e}^{i\psi_k}, \mathcal{L}_k{\rm e}^{i\pmb{\psi}}\rangle (1+\zeta_k)$. Using the orthogonal property $\langle \nabla_{\pmb{\psi}} \mathcal{W}, \pmb{1}_N \rangle = 0$, we have that $\dot{\mathcal{V}}_1 = -\sum_{k=1}^{N} \left[K_{\mathcal{C}}\frac{\langle r_k-c_d-\rho_k, {\rm e}^{i\theta_k} \rangle}{\delta^2 - |e_k|^2} + K \langle i{\rm e}^{i\psi_k}, \mathcal{L}_k{\rm e}^{i\pmb{\psi}}\rangle\right]\zeta_k$. Under the control \eqref{control1}, this leads to $\dot{\mathcal{V}}_1 = -\sum_{k=1}^{N} \zeta_k^2 \leq 0$, along the closed loop solutions of \eqref{modelNew}. On the other hand, for the given saturation limit $u_{\max}$, it follows from \eqref{saturation2} that $|1+\zeta_k| > u_{\max}/|\kappa(\phi)|$, where $\kappa(\phi) \neq 0$, as discussed below Eq.~\eqref{saturated_control}. This implies that $\zeta_k < -(1 + u_{\max}/|\kappa(\phi)|) < -2$ or $\zeta_k > (u_{\max}/|\kappa(\phi)| - 1) > 0$, since $u_{\max} \geq \max_{\phi}|\kappa(\phi)|$. Therefore, $\dot{\mathcal{V}}_1$ is strictly less than zero in $\mathcal{Z}_{\delta}$, in case of saturation. 
	
	To account for both the scenarios collectively, we consider the general case when $\dot{\mathcal{V}}_1$ is negative semi-definite, i.e., $\dot{\mathcal{V}}_1 \leq 0$ for all $t\geq0$. This  implies that  $\mathcal{V}_1(\pmb{r}, \pmb{\theta})$ is non-increasing, that is, $\mathcal{V}_1(\pmb{r}, \pmb{\theta}) \leq \mathcal{V}_1(\pmb{r}(0), \pmb{\theta}(0)),~\forall t \geq 0$, along the solutions of system \eqref{modelNew} in $\mathcal{Z}_{\delta}$. Moreover, for every initial condition in $\mathcal{Z}_{\delta}$, it follows from \eqref{V_1} that $\mathcal{V}_1(\pmb{r}(0), \pmb{\theta}(0))\leq \beta = K_{\mathcal{C}}\sup_{(\pmb{r}(0), \pmb{\theta}(0)) \in \mathcal{Z}_\delta}\mathcal{S}(\pmb{r}(0), 
	\pmb{\theta}(0)) - K |\mathcal{E}|(\Gamma_{\mathcal{C}}/\pi)$, using \eqref{W_max}, where perimeter $\Gamma_{\mathcal{C}}$, control gains $K_{\mathcal{C}} > 0$, and $K < 0$ are finite. Thus, for the given $\delta$, ${\mathcal{V}}_1(\pmb{r}, \pmb{\theta})$ is bounded by a positive $\beta$ for all $t \geq 0$, along the solutions of \eqref{modelNew}, and has the property that ${\mathcal{V}}_1(\pmb{r}, \pmb{\theta}) \to \infty$, as its argument approaches the boundary $\partial\mathcal{Z}_\delta = \{(\pmb{r}, \pmb{\theta}) \in \mathbb{C}^N\times\mathbb{T}^N \mid |e_k| = \delta, ~\forall k\}$. Hence, ${\mathcal{V}}_1(\pmb{r}, \pmb{\theta})$ is a BLF for the set $\mathcal{Z}_\delta$, according to Definition~\ref{BLF}. 
	
	To prove convergence to the desired curve $\mathcal{C}$, note that the set $\Omega_{\beta} = 
	\{(\pmb{r}, \pmb{\theta}) \in \mathcal{Z}_\delta \mid {\mathcal{V}_1}(\pmb{r}, 
	\pmb{\theta}) \leq \beta\} \subset \mathcal{Z}_\delta$ is compact and 
	positively invariant, since ${\mathcal{V}_1}(\pmb{r}, \pmb{\theta})$ is 
	positive definite and continuously differentiable, and 
	$\dot{\mathcal{V}}_1 \leq 0$, along the solutions of \eqref{modelNew}, in $\mathcal{Z}_\delta$. 
	Therefore, it follows from LaSalle's invariance principle \cite{khalil2002nonlinear} that 
	all the solutions of system dynamics \eqref{modelNew}, under control 
	\eqref{control1}, converge to the largest invariant set $\Delta^s_{\mathcal{C}}$, contained 
	in the set $\Delta \subset \Omega_{\beta}$, where $\dot{\mathcal{V}}_1 = 0$. Thus,
	\begin{equation}
	\label{invariant_set}\Delta = \{(\pmb{r}, \pmb{\theta}) \in \mathcal{Z}_\delta 
	\mid \zeta_k = 0, \forall k\},
	\end{equation}
	which implies using \eqref{eq_new_control} that $u_k = \kappa(\phi), \forall k$ in $\Delta$. Further, it follows from \eqref{control1} that $K_{\mathcal{C}}\frac{\langle r_k-c_d-\rho_k, {\rm e}^{i\theta_k} \rangle}{\delta^2 - |e_k|^2} = -K \langle i{\rm e}^{i\psi_k}, \mathcal{L}_k{\rm e}^{i\pmb{\psi}}\rangle, \forall k$ in $\Delta$, which upon taking time-derivative on both the sides, yields $K_{\mathcal{C}} \frac{d}{dt} \left(\frac{\langle r_k-c_d - \rho_k, {\rm e}^{i\theta_k} \rangle }{\delta^2 - |e_k|^2}\right) = -K \frac{d}{dt} \left(\frac{\partial \mathcal{W}}{\partial \psi_k}\right)$, where $\frac{d}{dt}\left(\frac{\partial \mathcal{W}}{\partial \psi_k}\right) = \frac{d}{dt} \langle i{\rm e}^{i\psi_k}, \mathcal{L}_k{\rm e}^{i\pmb{\psi}}\rangle =  \langle \frac{d}{dt}(i{\rm e}^{i\psi_k}), \mathcal{L}_k{\rm e}^{i\pmb{\psi}}\rangle + \langle i{\rm e}^{i\psi_k}, \frac{d}{dt}\mathcal{L}_k{\rm e}^{i\pmb{\psi}}\rangle = \frac{2\pi}{\Gamma_{\mathcal{C}}}\left(\langle -{\rm e}^{i\psi_k}, \mathcal{L}_k{\rm e}^{i\pmb{\psi}}\rangle \right.$ $\left.+ \langle i{\rm e}^{i\psi_k}, i\mathcal{L}_k{\rm e}^{i\pmb{\psi}}\rangle\right) = 0$. Consequently, for all points in $\Delta^s_{\mathcal{C}} \subset \Delta$, we have $\frac{d}{dt} \frac{\langle r_k-c_d - \rho_k, {\rm e}^{i\theta_k} \rangle }{\delta^2 - |e_k|^2} = 0 \implies [(\delta^2 - |e_k|^2)\frac{d}{dt}{\langle r_k-c_d - \rho_k, {\rm e}^{i\theta_k}\rangle} - {\langle r_k-c_d - \rho_k, {\rm e}^{i\theta_k}\rangle}\frac{d}{dt}(\delta^2 - |e_k|^2)] = 0$.
	From \eqref{S_dot}, it is straightforward to see that $\frac{d}{dt}(\delta^2 - |e_k|^2) = 0$ in the set $\Delta^s_{\mathcal{C}}$. Thus, the previous expression reduces to $\frac{d}{dt}{\langle r_k-c_d - \rho_k, {\rm e}^{i\theta_k}\rangle} = 0 \implies \langle \dot{r}_k - \dot{\rho}_k, {\rm e}^{i\theta_k} \rangle + \kappa_k\langle r_k-c_d - \rho_k, i{\rm e}^{i\theta_k} \rangle = 0 \implies \langle r_k-c_d - \rho_k, i{\rm e}^{i\theta_k} \rangle = 0$, as $\dot{r}_k - \dot{\rho}_k = 0$ in $\Delta^s_{\mathcal{C}}$ (see \eqref{error_der}). For $\langle r_k-c_d - \rho_k, i{\rm e}^{i\theta_k} \rangle = 0$ to hold, it is necessary that $r_k = c_d + \rho_k$ (or $e_k \perp i{\rm e}^{i\theta_k}$ in $\Delta^s_{\mathcal{C}}$), which is the position of the $k^\text{th}$ agent moving around the curve $\mathcal{C}$ centered at $c_d$ (see \eqref{position_new}). Thus, every trajectory of \eqref{modelNew}, under control \eqref{control1}, approaches $\Delta^s_{\mathcal{C}}$ as $t \rightarrow \infty$, i.e., all the agents asymptotically converge to the desired curve $\mathcal{C}$ with center $c_d$ in $\mathcal{Z}_{\delta}$. Alternatively, $|e_k| \rightarrow 0$ as $t \rightarrow \infty$ and hence, $\mathcal{S}(\pmb{e})$ achieves its minimum if $|e_k(0)| < \delta, \forall k$, which follows from Lemma~\ref{lem1}, as each term $\mathcal{S}_k(r_k, \theta_k)$ of $\mathcal{S}(\pmb{r}, \pmb{\theta}) = \sum_{k=1}^{N}\mathcal{S}_k(r_k, \theta_k)$ is bounded and approaches zero. 
	
	Since $\mathcal{V}_1(\pmb{r}, \pmb{\theta}) = \mathcal{W}(\pmb{\psi})$ in $\Delta^s_{\mathcal{C}}$, we conclude that the agents reach an equilibrium with the asymptotic curve-phase arrangements in the critical set of $\mathcal{W}(\pmb{\psi})$. Since $\dot{\mathcal{V}}_1 \leq 0$, $\mathcal{W}(\pmb{\psi})$ also approaches zero, and hence, it follows from Lemma~\ref{lem_critical points of W} that the agents are in curve-phase synchronization around the desired curve $\mathcal{C}$. 
	\item[ii)] To prove this statement, let us consider the potential function
	\begin{equation}\label{V2}
	\mathcal{V}_2(\pmb{r}, \pmb{\theta}) = \mathcal{K}_{\mathcal{C}}\mathcal{S}(\pmb{r}, \pmb{\theta}) + K\frac{\Gamma_{\mathcal{C}}}{2\pi}\left(\frac{N}{2}\lambda_{\text{max}}-\mathcal{W}(\pmb{\psi})\right),
	\end{equation}
	with $\mathcal{K}_{\mathcal{C}}>0, K > 0$, which is a valid candidate as $0 \leq \mathcal{W}(\pmb{\psi}) \leq (N/2)\lambda_{\text{max}}$ for an undirected and connected circulant graph. The time derivative of $\mathcal{V}_2(\pmb{r}, \pmb{\theta})$ along dynamics \eqref{modelNew}, \eqref{parameteric_phase_model} and \eqref{curve_phase_model}, under control \eqref{control1}, results in $\dot{\mathcal{V}}_2 = -\sum_{k=1}^{N}\zeta_k^2  = \dot{\mathcal{V}}_1 \leq 0$. Thus, the rest of the proof follows the same steps as given for $K < 0$. However, in this case, let $\Delta^b_{\mathcal{C}}$ be the largest invariant set in $\Delta$ (defined in \eqref{invariant_set}), which every trajectory of \eqref{modelNew} approaches as $t \to \infty$. Following the above analysis, it can be concluded that all the agents asymptotically converge to the desired curve $\mathcal{C}$ with center $c_d$ in $\mathcal{Z}_\delta$. Moreover, $({N}/{2})\lambda_{\text{max}}-\mathcal{W}(\pmb{\psi})$ also approaches zero in $\mathcal{Z}_\delta$, as $\dot{\mathcal{V}}_2 \leq 0$. Since the graph $\mathcal{G}$ is circulant, it follows from Lemma~\ref{lem_critical points of W} that the agents are in curve-phase balancing around the desired curve $\mathcal{C}$ in $\mathcal{Z}_{\delta}$. \par
	
	
	\item[iii)] Since $\dot{\mathcal{V}}_1 = \dot{\mathcal{V}}_2 \leq 0$ in $\mathcal{Z}_\delta$, $\mathcal{S}(\pmb{e}) \left(= \sum_{k=1}^{N}\mathcal{S}_k\right)$ remains bounded. Consequently, for all $k = 1, \ldots, N$, $|e_k(\phi(t))| < \delta, \forall t \geq 0$, according to Lemma~\ref{lem1}. Substituting for $e_k$ from \eqref{error_var}, $|e_k(\phi(t))| < \delta \implies |r_k - c_d - \rho(\phi)| < \delta, \forall k$, and $\forall \phi \in [0, 2\pi)$. This, in turn, implies that the trajectories of the agents stay within the set $\mathcal{B}_\delta = \bigcup_{\phi \in [0, 2\pi)}\mathcal{B}(\mathcal{C}(\phi), \delta)$ for all $k$ and $t \geq 0$ in both synchronized and balanced curve-phase arrangements, where $\mathcal{B}(\mathcal{C}(\phi), \delta) \coloneqq \{z \in \mathbb{C}~|~|z - c_d - \rho(\phi)| < \delta\}$ is the open disc of radius $\delta$ and center $\mathcal{C}(\phi) = c_d + \rho(\phi)$ at $\phi$. Since the result follows for all $k$, the subscript $k$ is excluded from $\phi$ associated to the $k^\text{th}$ agent, and is directly related in terms of its parametrization $\rho(\phi)$.
	\end{enumerate}
\end{proof}

\begin{remark}
	In \eqref{control1}, one may infer that $\zeta_k$ becomes unbounded whenever $|e_k| = \delta$, due to the presence of the term $(\delta^2 - |e_k|^2)$ in the denominator. However, it has been established in Theorem~\ref{theorem1} that, in the closed loop, the error signal $|e_k| < \delta, \forall t \geq 0$, thereby $\zeta_k$, and hence, $u_k$ always remains finite for any solution trajectory. Further, note that $\zeta_k$ may assume any (large/small) value, depending upon the variables in \eqref{control1}. However, the actual applied control $u_k$ in \eqref{modelNew} is always bounded as per Eq.~\eqref{saturated_control}.  
\end{remark}

From the preceding discussion, one can observe that, by limiting the magnitude of the error variables $|e_k|$, the agents' trajectories $r_k$ remain bounded within the set $\mathcal{B}_{\delta}$, while there is no restriction on the heading angles $\theta_k$ of the agents and these act as the free variables, in accordance with Lemma~\ref{lem1}.

In general, it is hard to characterize the explicit nature of the boundary $\partial\mathcal{B}_{\delta}$ of $\mathcal{B}_{\delta}$ in Theorem~\ref{theorem1}. However, the following assumption on $\delta$ allows us to do so, as discussed in Corollary~\ref{corollary1} below. 
\begin{assumption}\label{Assumption1}
	There exists a constant $\delta > 0$ such that, for every $\epsilon \in (0, \delta]$, $c_d + \rho(\phi) \pm\epsilon\hat{g}_{n}(\phi) \not\in \mathcal{C}, \phi \in [0, 2\pi)$. Moreover, for $\phi_1, \phi_2 \in [0, 2\pi), \phi_1 \neq \phi_2$, it holds that $c_d + \rho(\phi_1) + \epsilon \hat{g}_{n}(\phi_1) \neq c_d + \rho(\phi_2) + \epsilon \hat{g}_{n}(\phi_2)$ and $c_d + \rho(\phi_1) -\epsilon \hat{g}_{n}(\phi_1) \neq c_d + \rho(\phi_2) -\epsilon \hat{g}_{n}(\phi_2)$.  
\end{assumption}
This assumption essentially ensures that there exists a $\delta$ such that the locus of the points $c_d + \rho(\phi) \pm \delta \hat{g}_{n}(\phi), \phi \in [0, 2\pi)$ form simple closed curves. Moreover, these curves do not intersect $\mathcal{C}$, if $\delta$ is measured along the unit vectors $\pm\hat{g}_{n}(\phi)$. Clearly, $c_d + \rho(\phi) -\epsilon\hat{g}_{n}(\phi) \in \text{int}(\mathcal{C})$, and $c_d + \rho(\phi) + \epsilon\hat{g}_{n}(\phi) \in \text{ext}(\mathcal{C})$ for each $\phi \in [0, 2\pi)$ with respect to Fig.~\ref{problem_fig}. In other words, Assumption~\ref{Assumption1} proposes certain requirements on $\mathcal{C}$, depending upon $\delta$. So far as the practical applications are concerned, this is a mild assumption as discussed later in the paper. 

\begin{figure}
	\centering
	\includegraphics[width=0.6\columnwidth]{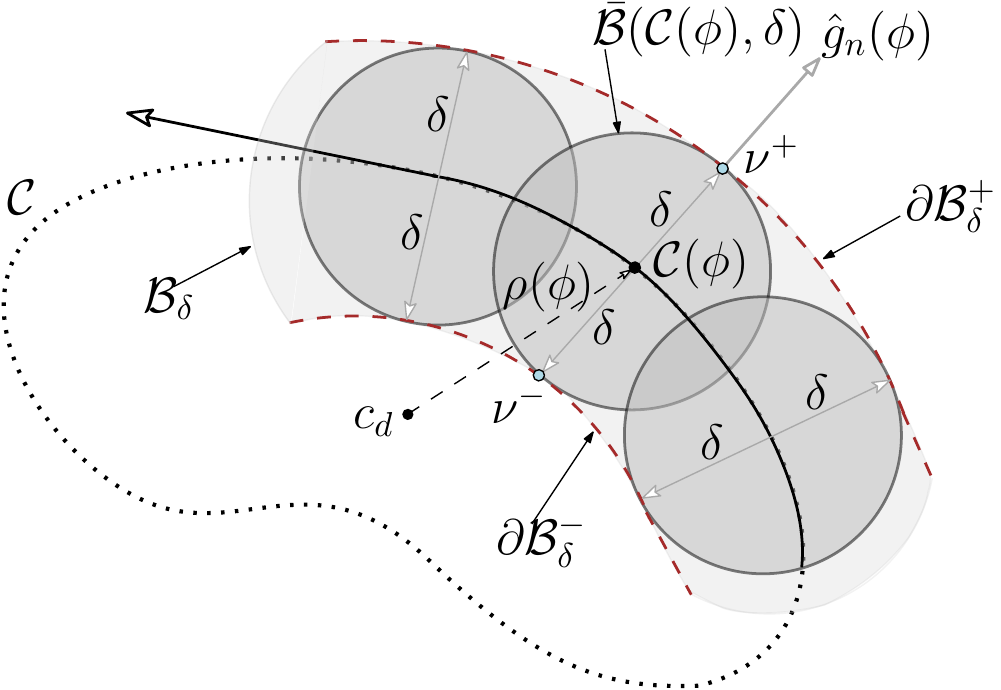}
	\caption{Illustration of the boundary $\partial \mathcal{B}$ of the set $\mathcal{B}$, mentioned in Corollary~\ref{corollary1}.}
	\label{boundary}
\end{figure}

\begin{cor}\label{corollary1}
	Under the conditions in Theorem~\ref{theorem1}, if additionally, Assumption~\ref{Assumption1} holds, then $\partial\mathcal{B}_\delta = \partial\mathcal{B}^+_{\delta} \cup \partial\mathcal{B}^-_{\delta}$, where $\partial\mathcal{B}^+_{\delta} = \bigcup_{\phi \in [0, 2\pi)}\{z \in \mathbb{C}~|~z - c_d = \rho(\phi) + \delta\hat{g}_n(\phi)\}, \partial\mathcal{B}^-_{\delta} = \bigcup_{\phi \in [0, 2\pi)}\{z \in \mathbb{C}~|~z - c_d = \rho(\phi) - \delta\hat{g}_n(\phi)\}$, and $\hat{g}_n(\phi)$ is the exterior unit normal vector to curve $\mathcal{C}$ at $\phi$, as shown in Fig.~\ref{problem_fig}.
\end{cor}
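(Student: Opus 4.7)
The plan is to exploit the identity $\mathcal{B}_\delta = \{z \in \mathbb{C} : d(z, \mathcal{C}) < \delta\}$, where $\mathcal{C} := \{c_d + \rho(\phi) : \phi \in [0, 2\pi)\}$ is the compact simple closed curve; this follows directly from the defining formula $\mathcal{B}_\delta = \bigcup_\phi \mathcal{B}(\mathcal{C}(\phi), \delta)$. Consequently, $\partial \mathcal{B}_\delta = \{z \in \mathbb{C} : d(z, \mathcal{C}) = \delta\}$, and I would show this set equals $\partial \mathcal{B}^+_\delta \cup \partial \mathcal{B}^-_\delta$ by establishing both inclusions.

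For the inclusion $\partial \mathcal{B}_\delta \subseteq \partial \mathcal{B}^+_\delta \cup \partial \mathcal{B}^-_\delta$, fix $z$ with $d(z, \mathcal{C}) = \delta$; compactness of $\mathcal{C}$ gives a minimizer $\phi^* \in [0, 2\pi)$ with $|z - c_d - \rho(\phi^*)| = \delta$. The first-order condition for $\phi \mapsto |z - c_d - \rho(\phi)|^2$ at $\phi^*$ reads $\langle z - c_d - \rho(\phi^*), \rho'(\phi^*) \rangle = 0$, and since $\rho'(\phi^*)$ is a positive multiple of $\hat{g}_t(\phi^*)$ while $\hat{g}_n(\phi^*) = -i\hat{g}_t(\phi^*)$, the displacement $z - c_d - \rho(\phi^*)$ is forced to be parallel to $\hat{g}_n(\phi^*)$. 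The norm constraint then yields $z - c_d = \rho(\phi^*) \pm \delta \hat{g}_n(\phi^*)$, placing $z$ in $\partial \mathcal{B}^+_\delta \cup \partial \mathcal{B}^-_\delta$. This half does not use Assumption~\ref{Assumption1}.

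For the reverse inclusion I would invoke Assumption~\ref{Assumption1}. Take $z = c_d + \rho(\phi_0) + \delta \hat{g}_n(\phi_0) \in \partial \mathcal{B}^+_\delta$ (the $-$ case being symmetric). The bound $d(z, \mathcal{C}) \leq \delta$ is immediate, so the task is to exclude strict inequality. Arguing by contradiction, if $d(z, \mathcal{C}) < \delta$, I would trace the straight-line path $\gamma(t) = c_d + \rho(\phi_0) + t\delta \hat{g}_n(\phi_0)$ for $t \in [0, 1]$, along which $\gamma(t)$ lies on the offset curve $\mathcal{C}^+_{t\delta} := \{c_d + \rho(\phi) + t\delta \hat{g}_n(\phi) : \phi \in [0, 2\pi)\}$. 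A short Taylor expansion in the tangent direction shows that for $t$ near $0^+$, $\phi_0$ is the unique nearest point on $\mathcal{C}$ to $\gamma(t)$, so $d(\gamma(t), \mathcal{C}) = t\delta$; since by hypothesis $d(\gamma(1), \mathcal{C}) < \delta$, continuity produces a first time $t^* \in (0, 1]$ at which a second minimizer $\phi_1 \neq \phi_0$ appears with $|\gamma(t^*) - c_d - \rho(\phi_1)| = t^*\delta$. Applying the forward-inclusion argument at this second minimizer and at the offset $\epsilon^* := t^*\delta \in (0, \delta]$ yields $\gamma(t^*) = c_d + \rho(\phi_1) \pm \epsilon^* \hat{g}_n(\phi_1)$; equating with $\gamma(t^*) = c_d + \rho(\phi_0) + \epsilon^* \hat{g}_n(\phi_0)$, the $+$ sign directly contradicts the injectivity clause of Assumption~\ref{Assumption1}, while the $-$ sign places $\gamma(t^*)$ on both $\mathcal{C}^+_{\epsilon^*}$ and $\mathcal{C}^-_{\epsilon^*}$, which together with the fact that these families coincide on $\mathcal{C}$ at $\epsilon = 0$ and separate continuously to opposite sides forces some smaller $\epsilon \in (0, \epsilon^*)$ at which one of them meets $\mathcal{C}$, contradicting the non-intersection clause of Assumption~\ref{Assumption1}.

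\emph{Main obstacle.} The primary difficulty lies in the reverse inclusion, since Assumption~\ref{Assumption1} supplies only simplicity of and non-intersection with $\mathcal{C}$ for each offset curve, not a direct distance bound from $\mathcal{C}$. The nontrivial step is to convert a hypothetical closer-than-$\delta$ point into a concrete violation through the first-minimizer argument above, and the subcase analysis distinguishing same-sided ($+$) and opposite-sided ($-$) normal coincidences; the opposite-sided case in particular requires a short continuity/topology step tying $\mathcal{C}^+_\epsilon \cap \mathcal{C}^-_\epsilon \neq \emptyset$ back to $\mathcal{C}^\pm_\epsilon \cap \mathcal{C} \neq \emptyset$.
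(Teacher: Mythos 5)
Your route is genuinely different from the paper's and, in outline, more rigorous. The paper's proof works disc-by-disc: for each $\phi$ it identifies $\nu^{\pm}=\mathcal{C}(\phi)\pm\delta\hat{g}_n(\phi)$ as the points of $\bar{\mathcal{B}}(\mathcal{C}(\phi),\delta)$ maximizing the projection onto $\hat{g}_n(\phi)$, and then simply declares the loci of these extremal points to be $\partial\mathcal{B}_\delta^{\pm}$; it never separately verifies that no point of these loci falls in the open set $\mathcal{B}_\delta$, which is precisely where Assumption~\ref{Assumption1} must enter. You instead rewrite $\mathcal{B}_\delta$ as the open $\delta$-neighborhood $\{z: d(z,\mathcal{C})<\delta\}$, characterize $\partial\mathcal{B}_\delta$ as the level set $d(z,\mathcal{C})=\delta$, and prove both inclusions: the forward one by the first-order condition at a minimizer (forcing the displacement to be normal), the reverse one by ruling out $d(z,\mathcal{C})<\delta$ on the offset curves via Assumption~\ref{Assumption1}. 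What your approach buys is an explicit account of why the extremal locus is the \emph{topological} boundary and of the exact role of the assumption; what the paper's buys is brevity and a direct geometric picture. One suggestion on your reverse inclusion, which you correctly flag as the crux and which is still only sketched: the opposite-sign subcase closes more cleanly through the Jordan curve theorem than through your continuity-of-families argument. Since for each $\phi$ the segment $\epsilon\mapsto c_d+\rho(\phi)+\epsilon\hat{g}_n(\phi)$, $\epsilon\in(0,\delta]$, starts on $\mathcal{C}$ pointing outward and never re-meets $\mathcal{C}$ by Assumption~\ref{Assumption1}, it lies entirely in $\mathrm{ext}(\mathcal{C})$; symmetrically the $-\epsilon$ segments lie in $\mathrm{int}(\mathcal{C})$. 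A coincidence $c_d+\rho(\phi_0)+\epsilon^*\hat{g}_n(\phi_0)=c_d+\rho(\phi_1)-\epsilon^*\hat{g}_n(\phi_1)$ then puts one point in both $\mathrm{ext}(\mathcal{C})$ and $\mathrm{int}(\mathcal{C})$, an immediate contradiction, while the same-sign coincidence contradicts the injectivity clause as you say. You should also make explicit the compactness/limit argument producing the second minimizer $\phi_1\neq\phi_0$ at the first time $t^*$ (a subsequence of minimizers for $t_n\downarrow t^*$ must stay away from $\phi_0$ by the local tubular-neighborhood behaviour); as written this step is asserted rather than proved, though it is standard.
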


\begin{proof}
	Let $\bar{\mathcal{B}}(\mathcal{C}(\phi), \delta) \coloneqq \{z \in \mathbb{C}~|~|z - c_d - \rho(\phi)| \leq \delta\}$ be the closed disc of radius $\delta$ and center $\mathcal{C}(\phi) = c_d + \rho(\phi)$ at $\phi$. For $\nu \in \bar{\mathcal{B}}(\mathcal{C}(\phi), \delta)$, let $|\text{proj}_{\hat{g}_n(\phi)} ({\nu - \mathcal{C}(\phi)})| = |\langle\nu - \mathcal{C}(\phi), \hat{g}_n \rangle|$ be the absolute value of the projection of $\nu - \mathcal{C}(\phi)$ on $\hat{g}_n(\phi)$ at $\mathcal{C}({\phi})$. Clearly, $\max_\nu {|\text{proj}_{\hat{g}_n(\phi)} ({\nu - \mathcal{C}(\phi)})|}$ occurs when $\langle \nu - \mathcal{C}(\phi),  \hat{g}_n(\phi) \rangle = \pm\delta$, that is, along the normal vectors to $\mathcal{C}$ at $\phi$. This implies that the points $\nu^+ = \mathcal{C}(\phi) + \delta \hat{g}_n(\phi)$ and  $\nu^- = \mathcal{C}(\phi) - \delta \hat{g}_n(\phi)$ are the two farthest points in $\bar{\mathcal{B}}(\mathcal{C}(\phi), \delta)$ from the curve $\mathcal{C}$ at $\phi$, along exterior and interior normal vectors, respectively (See Fig.~\ref{boundary}). Thus, for $\phi \in [0, 2\pi)$, the locus of the points $\nu^+$ and $\nu^-$ define two boundaries $\partial\mathcal{B}^+_{\delta}$ and $\partial{B}^-_{\delta}$, respectively, as mentioned in the statement of Corollary~\ref{corollary1}. As a result,  $\partial\mathcal{B}_\delta = \partial\mathcal{B}^+_{\delta} \cup \partial\mathcal{B}^-_{\delta}$, proving our claim. 
\end{proof}

\begin{remark}
	It is to be noted that Assumption~\ref{Assumption1} is adapted to explicitly describe the boundary $\partial\mathcal{B}_\delta$ of $\mathcal{B}_\delta$. However, it is not necessary and allows any $\delta > 0$, as far as the applicability of \eqref{control1} is concerned. For instance, in the special case of a circle with radius $R > 0$, $\partial\mathcal{B}_\delta = \{z \in \mathbb{C}~|~|z - c_d| = R + \delta\}$ if $\delta > R$, and $\partial\mathcal{B}_\delta = \{z \in \mathbb{C}~|~|z - c_d| = R - \delta\} \cup \{z \in \mathbb{C}~|~|z - c_d| = R + \delta\}$ if $\delta < R$ \cite{jain2019trajectory}. 
\end{remark}


Additionally, the following theorem relates perimeters and areas of the regions enclosed by $\partial\mathcal{B}_{\delta}$ and curve $\mathcal{C}$, under Assumption~\ref{Assumption1}. 


\begin{thm}\label{thm_perimeter_area}
	Let $\Gamma_{\mathcal{C}}, {\Gamma}_{\partial\mathcal{B}_\delta}$ be the respective perimeters of $\mathcal{C}$ and $\partial\mathcal{B}_{\delta}$, and $\mathcal{A}_{\mathcal{C}}, {\mathcal{A}}_{\partial\mathcal{B}_\delta}$ the areas enclosed by them in Corollary~\ref{corollary1}. Then, it holds that ${\Gamma}_{\partial\mathcal{B}_\delta} = 2\Gamma_{\mathcal{C}}, {\mathcal{A}}_{\partial\mathcal{B}_\delta} = 2\delta\Gamma_{\mathcal{C}}$. 
\end{thm}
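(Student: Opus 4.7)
The plan is to parametrize the desired curve $\mathcal{C}$ by arc length $s \in [0, \Gamma_\mathcal{C}]$ so that $d\rho/ds = \hat{g}_t(s) = {\rm e}^{i\mu(s)}$, and to exploit the planar Frenet-type relation $d\hat{g}_n/ds = \kappa(s)\hat{g}_t(s)$, which follows immediately from $\hat{g}_n = -i\hat{g}_t$ and $d\mu/ds = \kappa$. Writing $\rho^\pm(s) = \rho(s) \pm \delta \hat{g}_n(s)$, the curves $\partial\mathcal{B}^\pm_\delta$ are translates by $c_d$ of the images of $\rho^\pm$, and since translation preserves perimeters and enclosed areas, $c_d$ plays no further role in the computation.

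For the perimeter, differentiation gives $d\rho^\pm/ds = (1 \pm \delta\kappa(s))\hat{g}_t(s)$, so the arc-length element along $\partial\mathcal{B}^\pm_\delta$ is $|1 \pm \delta\kappa(s)|\,ds$. Assumption~\ref{Assumption1} forbids self-intersection of either offset family and forbids either offset from meeting $\mathcal{C}$; the continuous Jacobian $1 + t\kappa(s)$ equals $1$ at $t=0$ and, if it vanished somewhere on $[0,\Gamma_\mathcal{C}] \times [-\delta,\delta]$, the map $(s,t) \mapsto \rho(s) + t\hat{g}_n(s)$ would develop a focal singularity incompatible with that injectivity, so $1 + t\kappa(s) > 0$ throughout. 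The absolute values therefore drop and
\begin{equation*}
\Gamma^\pm = \int_0^{\Gamma_\mathcal{C}} (1 \pm \delta\kappa(s))\,ds = \Gamma_\mathcal{C} \pm \delta\!\int_0^{\Gamma_\mathcal{C}}\kappa(s)\,ds = \Gamma_\mathcal{C} \pm 2\pi\delta,
\end{equation*}
where the last equality is Hopf's theorem of turning tangents for a positively oriented simple closed plane curve. Summing, ${\Gamma}_{\partial\mathcal{B}_\delta} = \Gamma^+ + \Gamma^- = 2\Gamma_\mathcal{C}$.

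For the area, introduce the tube parametrization $\Phi(s,t) = c_d + \rho(s) + t\hat{g}_n(s)$ on $[0, \Gamma_\mathcal{C}] \times [-\delta, \delta]$, whose image is exactly the closed annular region enclosed by $\partial\mathcal{B}_\delta = \partial\mathcal{B}^+_\delta \cup \partial\mathcal{B}^-_\delta$. Since $\partial_s \Phi = (1+t\kappa(s))\hat{g}_t(s)$ and $\partial_t\Phi = \hat{g}_n(s)$ are orthogonal with magnitudes $|1+t\kappa(s)|$ and $1$, Assumption~\ref{Assumption1} combined with the positivity argument above makes $\Phi$ a diffeomorphism with Jacobian determinant $1 + t\kappa(s)$. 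The change-of-variables formula then gives
\begin{equation*}
\mathcal{A}_{\partial\mathcal{B}_\delta} = \int_0^{\Gamma_\mathcal{C}} \int_{-\delta}^{\delta} \bigl(1 + t\kappa(s)\bigr)\,dt\,ds = \int_0^{\Gamma_\mathcal{C}} 2\delta\,ds = 2\delta \Gamma_\mathcal{C},
\end{equation*}
the $t\kappa(s)$ contribution vanishing by symmetric integration in $t$.

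The main obstacle is precisely the step of translating the geometric content of Assumption~\ref{Assumption1} into the pointwise analytic inequality $1 + t\kappa(s) > 0$ on $[0, \Gamma_\mathcal{C}] \times [-\delta, \delta]$, and verifying that the image of $\Phi$ is exactly the region bounded by $\partial\mathcal{B}^+_\delta$ and $\partial\mathcal{B}^-_\delta$ rather than a set covered with multiplicity. Once this topological bookkeeping is in hand, both claims reduce to one-line integral computations exploiting the total-curvature identity $\oint \kappa\,ds = 2\pi$ and the vanishing of an odd moment in $t$.
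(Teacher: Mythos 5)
Your proof is correct and follows essentially the same route as the paper: both compute the offset curves' arc-length elements as $(1\pm\delta\kappa)\,ds$, invoke Hopf's Umlaufsatz to evaluate $\oint\kappa\,ds=2\pi$, and obtain the annular area from the same first-order tube geometry (you via the Jacobian $1+t\kappa$ and Fubini, the paper via summing infinitesimal trapezoidal strips, which is the same computation). Your explicit attention to why $1+t\kappa>0$ under Assumption~\ref{Assumption1} is in fact more careful than the paper's treatment, which simply asserts the offsets are positively-oriented simple closed curves.
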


Before proving Theorem~\ref{thm_perimeter_area}, we state the following preliminary definition and result from \cite{tapp2016differential,pressley2010elementary}. 

\begin{defn}[\hspace{-.1pt}Orientation of a Planar Curve \cite{tapp2016differential}, pg.~62]
	A simple plane closed curve $\alpha : [a, b] \to \mathbb{C}$ is called positively-oriented if, for each $t \in [a, b]$, $i{\rm e}^{\arg(\dot{\alpha})}$ points into $\text{int}(\alpha)$ in the sense that there exists $\tilde{\delta}$ such that $\alpha(t) + i s{\rm e}^{\arg(\dot{\alpha})}$ lies in $\text{int}(\alpha)$ for all $s \in (0, \tilde{\delta})$. Otherwise, $\alpha$ is negatively-oriented, in which case $i{\rm e}^{\arg(\dot{\alpha})}$ points towards $\text{ext}(\alpha)$ for all $t \in [a, b]$. 
\end{defn}

\begin{thm}[\hspace{-.1pt}Hopf's Umlaufsatz \cite{pressley2010elementary}, pg.~57]\label{thm_Holf_Umlauf}
	Let $\alpha : [a, b] \to \mathbb{R}^2$ be a simple closed plane curve with curvature function $\kappa$. Then, its total signed curvature, $\int_{a}^{b} \kappa(t)dt = \pm 2\pi I$, where $I$ is referred to as rotation index, and is equal to $+1 (\text{resp.}, -1)$ for positively-oriented (resp., negatively-oriented) curves.
\end{thm}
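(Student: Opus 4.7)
The plan is to parametrize both boundary components $\partial\mathcal{B}_\delta^{\pm}$ by $\phi\in[0,2\pi)$ and compute lengths and areas directly via the offset map, then invoke Hopf's Umlaufsatz only at the very end to kill the ``curvature correction'' that survives after integration.

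First, I would write the offset curves as $\beta^{\pm}(\phi)=c_d+\rho(\phi)\pm\delta\,\hat g_n(\phi)$ and differentiate. Using $\hat g_t(\phi)=e^{i\mu(\phi)}$, $\hat g_n(\phi)=-ie^{i\mu(\phi)}$, the identities $d\rho/d\phi=|d\rho/d\phi|\,e^{i\mu}$, $d\hat g_n/d\phi=e^{i\mu}(d\mu/d\phi)$, and the definition $\kappa(\phi)=(d\mu/d\phi)/|d\rho/d\phi|$, a short computation yields
\begin{equation*}
\frac{d\beta^{\pm}}{d\phi}=\left|\frac{d\rho}{d\phi}\right|\bigl(1\pm\delta\kappa(\phi)\bigr)\,e^{i\mu(\phi)}.
\end{equation*}
Assumption~\ref{Assumption1} guarantees that for every $\epsilon\in(0,\delta]$ the intermediate offset curves are simple and closed, which rules out focal points and forces $1\pm s\kappa(\phi)>0$ for all $(\phi,s)$ with $s\in[-\delta,\delta]$; otherwise the tangent to some intermediate offset would vanish, producing a cusp or self-intersection. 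Hence $|d\beta^{\pm}/d\phi|=|d\rho/d\phi|(1\pm\delta\kappa(\phi))$, and integrating over one period and changing variables from $\phi$ to arc length $\sigma$ on $\mathcal{C}$ gives
\begin{equation*}
\Gamma_{\partial\mathcal{B}_\delta^{\pm}}=\Gamma_{\mathcal{C}}\pm\delta\!\int_{0}^{\Gamma_{\mathcal{C}}}\!\kappa(\sigma)\,d\sigma.
\end{equation*}
Applying Hopf's Umlaufsatz (Theorem~\ref{thm_Holf_Umlauf}) to the simple closed curve $\mathcal{C}$ gives $\int_0^{\Gamma_{\mathcal{C}}}\kappa\,d\sigma=2\pi I$ with $I=\pm1$, so $\Gamma_{\partial\mathcal{B}_\delta^{\pm}}=\Gamma_{\mathcal{C}}\pm 2\pi\delta I$. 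Adding the two perimeters cancels the signed-curvature term and yields $\Gamma_{\partial\mathcal{B}_\delta}=\Gamma_{\partial\mathcal{B}_\delta^+}+\Gamma_{\partial\mathcal{B}_\delta^-}=2\Gamma_{\mathcal{C}}$.

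For the area, I would parametrize the tubular region bounded by $\partial\mathcal{B}_\delta$ by the map $\Phi:[0,2\pi)\times[-\delta,\delta]\to\mathbb{C}$, $\Phi(\phi,s)=c_d+\rho(\phi)+s\hat g_n(\phi)$. Assumption~\ref{Assumption1} says exactly that $\Phi$ is a bijection onto this region (distinct pairs $(\phi,s)$ give distinct points and no image point lies on $\mathcal{C}$ except when $s=0$). Computing $\partial_\phi\Phi$ and $\partial_s\Phi$ in real coordinates using the same differentiation as above gives the Jacobian determinant $-|d\rho/d\phi|(1+s\kappa(\phi))$, whose absolute value is $|d\rho/d\phi|(1+s\kappa(\phi))$ by the sign argument already made. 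Therefore
\begin{equation*}
\mathcal{A}_{\partial\mathcal{B}_\delta}=\int_{0}^{2\pi}\!\!\int_{-\delta}^{\delta}\!\left|\frac{d\rho}{d\phi}\right|\bigl(1+s\kappa(\phi)\bigr)\,ds\,d\phi=\int_{0}^{2\pi}\!\left|\frac{d\rho}{d\phi}\right|\cdot 2\delta\,d\phi=2\delta\Gamma_{\mathcal{C}},
\end{equation*}
since the linear-in-$s$ term integrates to zero over the symmetric interval $[-\delta,\delta]$.

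The main obstacle is justifying the two non-obvious ingredients of the calculation: that $1\pm\delta\kappa(\phi)$ is strictly positive, and that $\Phi$ is a diffeomorphism on $[0,2\pi)\times[-\delta,\delta]$. Both follow from the \emph{for every} $\epsilon\in(0,\delta]$ phrasing of Assumption~\ref{Assumption1}: any failure of the sign or of injectivity would, by a continuity/first-failure-value argument, produce either a self-intersection of an intermediate offset curve, an intersection of an intermediate offset with $\mathcal{C}$, or two offset curves at the same $\epsilon$ sharing a point, each of which the assumption forbids. Once these two geometric facts are in hand, the perimeter identity is a one-line consequence of Hopf's Umlaufsatz and the area identity is a one-line Jacobian integration; the elegance is that the correction terms vanish for opposite reasons (signed cancellation across $\pm$ for the perimeter, symmetric-interval integration for the area).
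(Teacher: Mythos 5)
Your proposal does not prove the statement it was asked to prove. The statement is Hopf's Umlaufsatz itself (Theorem~\ref{thm_Holf_Umlauf}): that the total signed curvature of a simple closed plane curve equals $\pm 2\pi$. What you have written is a proof of Theorem~\ref{thm_perimeter_area} (the perimeter and area relations ${\Gamma}_{\partial\mathcal{B}_\delta} = 2\Gamma_{\mathcal{C}}$ and ${\mathcal{A}}_{\partial\mathcal{B}_\delta} = 2\delta\Gamma_{\mathcal{C}}$), and your argument explicitly \emph{invokes} Hopf's Umlaufsatz as a known tool (``Applying Hopf's Umlaufsatz (Theorem~\ref{thm_Holf_Umlauf}) to the simple closed curve $\mathcal{C}$ gives $\int_0^{\Gamma_{\mathcal{C}}}\kappa\,d\sigma=2\pi I$''). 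As an argument for the Umlaufsatz this is circular; as an argument for anything it leaves the Umlaufsatz entirely unproved. For the record, the paper does not prove Theorem~\ref{thm_Holf_Umlauf} either: it is quoted as a classical result from \cite{pressley2010elementary}, whose standard proof proceeds by lifting the unit tangent $t\mapsto {\rm e}^{i\mu(t)}$ to a continuous angle function and showing, via a homotopy of the secant-direction map on a triangle of parameter pairs (or via the smooth Jordan curve theorem), that the net change in $\mu$ over one period is exactly $\pm 2\pi$ for a \emph{simple} closed curve. None of that topological content appears in your proposal.

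Since your text is in substance an attempt at Theorem~\ref{thm_perimeter_area}, it is worth a brief comparison with the paper's proof of \emph{that} result. The paper argues infinitesimally: it writes $d\Gamma_{\partial\mathcal{B}^{\pm}_\delta} = d\Gamma_{\mathcal{C}} \pm \delta\, d\mu$ by considering two adjacent normals and neglecting higher-order terms, then integrates and uses the Umlaufsatz with rotation index $I=+1$. Your route via the explicit offset parametrization $\beta^{\pm}(\phi)=c_d+\rho(\phi)\pm\delta\hat{g}_n(\phi)$, the identity $|d\beta^{\pm}/d\phi|=|d\rho/d\phi|\,(1\pm\delta\kappa(\phi))$, and the Jacobian computation for the area is cleaner and more rigorous than the paper's heuristic differentials, and your observation that Assumption~\ref{Assumption1} (the ``for every $\epsilon\in(0,\delta]$'' clause) is what forces $1\pm\delta\kappa>0$ and makes the tubular map $\Phi$ injective is a genuine improvement in justification. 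But that does not repair the central defect: the task was to supply a proof of the Umlaufsatz, and no such proof is present.
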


We are now ready to prove Theorem~\ref{thm_perimeter_area}.

\begin{proof}
	Analogous to the notations in Theorem~\ref{thm_perimeter_area}, denote by ${\Gamma}_{\partial\mathcal{B}^+_\delta}$ and ${\Gamma}_{\partial\mathcal{B}^-_\delta}$, the perimeters of the boundaries $\partial\mathcal{B}^+_{\delta}$ and  $\partial\mathcal{B}^-_{\delta}$, defined in Corollary~\ref{corollary1}, and by $\mathcal{A}_{\partial\mathcal{B}^+_\delta}$ and ${\mathcal{A}}_{\partial\mathcal{B}^-_\delta}$, the areas enclosed by them, respectively. Let $d\mu$ be the argument of tangential vector to a differential arc-length $d\Gamma_{\mathcal{C}}$ of the curve $\mathcal{C}$. Considering two adjacent normals to $\mathcal{C}$ at the end points of $d\Gamma_{\mathcal{C}}$ and neglecting higher order terms, one can write $
	d{\Gamma}_{\partial\mathcal{B}^+_\delta} = d\Gamma_{\mathcal{C}} + \delta d\mu; d{\Gamma}_{\partial\mathcal{B}^-_\delta} = d\Gamma_{\mathcal{C}} - \delta d\mu$, and $d({\mathcal{A}}_{\partial\mathcal{B}^+_\delta} - \mathcal{A}_{\mathcal{C}}) = \frac{1}{2}\delta(d\Gamma_{\mathcal{C}} + d{\Gamma}_{\partial\mathcal{B}^+_\delta}) = \frac{1}{2}\delta(2d\Gamma_{\mathcal{C}} + \delta d\mu); d(\mathcal{A}_{\mathcal{C}} - {\mathcal{A}}_{\partial\mathcal{B}^-_\delta}) = \frac{1}{2}\delta(d\Gamma_{\mathcal{C}} + d{\Gamma}_{\partial\mathcal{B}^-_\delta}) =  \frac{1}{2}\delta(2d\Gamma_{\mathcal{C}} - \delta d\mu)$.
	The term $d\mu$ can be written in terms of $d\phi$ as $d\mu = \kappa(\phi)|d\rho/d\phi|d\phi$ (Subsection~\ref{curve_def}). The quantity $\kappa_T = \int_{0}^{2\pi} \kappa(\phi)|d\rho/d\phi|d\phi$ is unchanged by re-parametrization [\cite{tapp2016differential}, pg. 78]. Thus, using arc-length re-parameterization, we have that $\kappa_T = \int_{0}^{\Gamma_{\mathcal{C}}} \kappa(\sigma)|d\rho/d\sigma|d\sigma = \int_{0}^{\Gamma_{\mathcal{C}}} \kappa(\sigma)d\sigma = 2\pi I$, using Hopf's Umlaufsatz in Theorem~\ref{thm_Holf_Umlauf}, and $|d\rho/d\sigma| = 1$, according to \eqref{arc_length}. Under Assumption~\ref{Assumption1}, one can note that $\mathcal{C}$, $\partial\mathcal{B}^+_\delta$ and $\partial\mathcal{B}^-_\delta$ are \emph{positively-oriented} simple closed plane curves with respect to Fig.~\ref{problem_fig}, and hence, rotation index $I = +1$. Using this fact, while integrating previous expressions for a complete circuit, yields ${\Gamma}_{\partial\mathcal{B}^+_\delta}  = \Gamma_{\mathcal{C}} + 2\pi\delta; {\Gamma}_{\partial\mathcal{B}^-_\delta}  = \Gamma_{\mathcal{C}} - 2\pi\delta$, and ${\mathcal{A}}_{\partial\mathcal{B}^+_\delta} = \mathcal{A}_{\mathcal{C}} + \delta\Gamma_{\mathcal{C}} + \pi\delta^2;{\mathcal{A}}_{\partial\mathcal{B}^-_\delta} = \mathcal{A}_{\mathcal{C}} - \delta\Gamma_{\mathcal{C}} + \pi\delta^2 \implies {\Gamma}_{\partial\mathcal{B}_{\delta}} = {\Gamma}_{\partial\mathcal{B}^+_{\delta}} + {\Gamma}_{\partial\mathcal{B}^-_{\delta}} = 2\Gamma_{\mathcal{C}}$, and ${\mathcal{A}}_{\partial\mathcal{B}_{\delta}} = {\mathcal{A}}_{\partial\mathcal{B}^+_{\delta}} - {\mathcal{A}}_{\partial\mathcal{B}^-_{\delta}} = 2\delta\Gamma_{\mathcal{C}}$, as claimed. 
\end{proof}

\begin{remark}
	The relations in Theorem~\ref{thm_perimeter_area} can also be written in terms of area $\mathcal{A}_{\mathcal{C}}$ (resp., global maximum curvature $\kappa_{\text{max}} = \max_{\phi}|\kappa(\phi)|$) of curve $\mathcal{C}$ using the isoperimetric inequality ${\Gamma}^2_{\mathcal{C}} \geq 4\pi\mathcal{A}_{\mathcal{C}}$ (resp., ${\Gamma}_{\mathcal{C}} \geq 2\pi/\kappa_{\text{max}}$) for a simple closed plane curve $\mathcal{C}$, where equality holds if and only if $\mathcal{C}$ is a circle [\cite{tapp2016differential}, pg. 81, $\&$ 98].
\end{remark}

In several practical applications, we would often like to restrict the motion of the vehicles in a workspace within the outer boundary. For any convex curve $\mathcal{C}$, the exterior normals never intersect irrespective of any $\delta$, and hence, the ideas in Theorem~\ref{thm_perimeter_area} are applicable if one is interested to know the perimeter and area enclosed by the outer boundary.  

\section{Bounds on Various Signals}\label{section5}
This section obtains bounds on various intermediate signals based on Theorem~\ref{theorem1}. We begin by stating the following theorem:

\begin{thm}[Curve-Phase Synchronization]\label{bounds_synchronization}
	Let $\mathcal{L}$ be the Laplacian of an undirected and connected graph $\mathcal{G}$ with $N$ vertices. Consider the closed-loop system \eqref{modelNew}, under the saturated control law \eqref{saturated_control} with $\zeta_k$ given by \eqref{control1}, where $K_{\mathcal{C}} > 0$, and $K < 0$ for all $k$. Assume that the initial states of the agents belong to the set $\mathcal{Z}_\delta$, as defined in Theorem~\ref{theorem1}. Then, the following properties hold: 
	\begin{enumerate}
	\item[i)] The absolute value of $e_k$, and $r_k$, for all $k$, are bounded by
	\begin{equation*}
	|e_k| = |r_k - c_d - \rho(\phi)|  \leq \delta \sqrt{1-{\rm e}^{-\left(\frac{2\mathcal{V}_1(0)}{K_{\mathcal{C}}}\right)}}.
	\end{equation*}
	\item[ii)] The squared summation of the absolute value of the relative curve-phasors ${\rm e}^{i\psi_j} - {\rm e}^{i\psi_k}$ belongs to the compact set
	\begin{equation*}
	\sum_{\{j,k\} \in \mathcal{E}}|{\rm e}^{i\psi_j} - {\rm e}^{i\psi_k}|^2 \in \left[0, \min\left\{-\left(\frac{4\pi \mathcal{V}_1(0)}{K\Gamma_{\mathcal{C}}}\right), 4|\mathcal{E}|\right\}\right],
	\end{equation*}
	where, $\mathcal{V}_1(0) = \mathcal{V}_1(\pmb{r}(0), \pmb{\theta}(0))$, and $|\mathcal{E}|$ is the cardinality of the edge set $\mathcal{E}$ of the graph $\mathcal{G}$, respectively.
	\end{enumerate}
\end{thm}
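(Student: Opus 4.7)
The plan is to exploit the monotonicity of the composite Lyapunov function $\mathcal{V}_1$ established in Theorem~\ref{theorem1}. Since $\dot{\mathcal{V}}_1 \leq 0$ along the closed-loop trajectories in $\mathcal{Z}_\delta$, we have $\mathcal{V}_1(\pmb{r}(t), \pmb{\theta}(t)) \leq \mathcal{V}_1(0)$ for all $t \geq 0$, where $\mathcal{V}_1(0) \coloneqq \mathcal{V}_1(\pmb{r}(0), \pmb{\theta}(0))$. This inequality, combined with the sign assumptions $K_\mathcal{C} > 0$, $K < 0$, and the non-negativity of the two component potentials $\mathcal{S}$ and $\mathcal{W}$, lets us isolate each piece and invert it to obtain state bounds.

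For part (i), I would start by dropping the curve-phase term. Because $-K > 0$ and $\mathcal{W}(\pmb{\psi}) \geq 0$, the quantity $-K(\Gamma_\mathcal{C}/2\pi)\mathcal{W}(\pmb{\psi})$ is non-negative, so \eqref{V_1} gives $K_\mathcal{C}\mathcal{S}(\pmb{r},\pmb{\theta}) \leq \mathcal{V}_1 \leq \mathcal{V}_1(0)$, and consequently $\mathcal{S}(\pmb{r},\pmb{\theta}) \leq \mathcal{V}_1(0)/K_\mathcal{C}$. Since each summand $\mathcal{S}_k(r_k,\theta_k) = \tfrac{1}{2}\ln(\delta^2/(\delta^2-|e_k|^2))$ is non-negative, each one is itself bounded by the same constant. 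Exponentiating yields $\delta^2 - |e_k|^2 \geq \delta^2 \exp(-2\mathcal{V}_1(0)/K_\mathcal{C})$, and rearranging gives the stated bound on $|e_k| = |r_k - c_d - \rho(\phi)|$.

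For part (ii), the lower bound $0$ is immediate since the left-hand side is a sum of squared magnitudes. For the upper bound, I would combine two independent estimates. First, using $K_\mathcal{C}\mathcal{S} \geq 0$ in \eqref{V_1} with $K < 0$ gives $-K(\Gamma_\mathcal{C}/2\pi)\mathcal{W}(\pmb{\psi}) \leq \mathcal{V}_1(0)$, so $\mathcal{W}(\pmb{\psi}) \leq -2\pi\mathcal{V}_1(0)/(K\Gamma_\mathcal{C})$. Invoking the identity $\mathcal{W}(\pmb{\psi}) = \tfrac{1}{2}\sum_{\{j,k\}\in\mathcal{E}}|e^{i\psi_j}-e^{i\psi_k}|^2$ from the derivation in \eqref{W_max} converts this into the first upper bound $-4\pi\mathcal{V}_1(0)/(K\Gamma_\mathcal{C})$ on the quantity of interest. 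Second, the purely algebraic estimate in \eqref{W_max} gives $\sum_{\{j,k\}\in\mathcal{E}}|e^{i\psi_j}-e^{i\psi_k}|^2 \leq \sum_{\{j,k\}\in\mathcal{E}}(|e^{i\psi_j}|+|e^{i\psi_k}|)^2 = 4|\mathcal{E}|$. Since both upper bounds hold simultaneously, the tighter one applies, giving the claimed minimum.

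There is no real obstacle here; the argument is essentially algebraic, and the only mild subtlety is tracking the signs when $K < 0$ so that the inequalities flip correctly when dividing by $K$. The main conceptual step is recognising that the two non-negative summands in $\mathcal{V}_1$ can be individually upper-bounded by $\mathcal{V}_1(0)$, which is what lets the two parts of the theorem decouple cleanly.
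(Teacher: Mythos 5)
Your proposal is correct and follows essentially the same route as the paper's own proof: both parts rest on $\mathcal{V}_1(t)\leq\mathcal{V}_1(0)$, dropping the non-negative companion term in \eqref{V_1} to isolate $\mathcal{S}$ (respectively $\mathcal{W}$), inverting the logarithm termwise for part (i), and intersecting the Lyapunov bound with the algebraic bound $4|\mathcal{E}|$ from \eqref{W_max} for part (ii). The sign bookkeeping for $K<0$ is handled exactly as in the paper.
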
 

\begin{proof}
	\begin{enumerate}
	\item[i)] Following Theorem~\ref{theorem1}, since $\mathcal{V}_1(\pmb{r}(t), \pmb{\theta}(t)) \leq \mathcal{V}_1(0), \forall t \geq 0$, in $\mathcal{Z}_\delta$, it follows from \eqref{V_1} that $\frac{K_{\mathcal{C}}}{2} \sum_{k=1}^{N} \ln \left(\frac{\delta^2}{\delta^2 - |e_k(t)|^2}\right) \leq \mathcal{V}_1(0) \implies  \ln \left(\frac{\delta^2}{\delta^2 - |e_k(t)|^2}\right) \leq \frac{2\mathcal{V}_1(0)}{K_{\mathcal{C}}}$, $\forall k$, and $\forall t \geq 0$, in $\mathcal{Z}_\delta$. Taking exponential on both side, yields ${\delta^2}/({\delta^2 - |e_k(t)|^2}) \leq {\rm e}^{\frac{2\mathcal{V}_1(0)}{K_{\mathcal{C}}}}$. It has been established in Theorem~\ref{theorem1} that $|e_k(t)|<\delta, \forall k$, and $\forall t$, implying that $\delta^2 - |e_k(t)|^2 > 0, \forall k$, and $\forall t$. Thus, we obtain $\delta^2 \leq {\rm e}^{\frac{2\mathcal{V}_1(0)}{K_{\mathcal{C}}}}({\delta^2 - |e_k(t)|^2}) \implies |e_k(t)| \leq \delta \sqrt{1-{\rm e}^{-({2\mathcal{V}_1(0)}/{K_{\mathcal{C}}})}}, \forall k$ and $\forall t \geq 0$. Further, substituting for $e_k$ from \eqref{error_var}, it implies that $|r_k(t) - c_d - \rho(\phi(t))| \leq \delta \sqrt{1-{\rm e}^{-({2\mathcal{V}_1(0)}/{K_{\mathcal{C}}})}}, \forall k$ and $\forall t \geq 0$. 
	
	\item[ii)] A similar argument as above for the second term on RHS in \eqref{V_1}, results in $\mathcal{W}(\pmb{\psi}) \leq -\frac{2\pi\mathcal{V}_1(0)}{K\Gamma_{\mathcal{C}}}$. Substituting for $\mathcal{W}(\pmb{\psi})$ from \eqref{W}, one can write that $\langle {\rm e}^{i\pmb{\psi}}, \mathcal{L}{\rm e}^{i\pmb{\psi}}\rangle \leq - \frac{4\pi  \mathcal{V}_1(0)}{K\Gamma_{\mathcal{C}}}$. However, it follows from \eqref{W_max} that $\langle {\rm e}^{i\pmb{\psi}}, \mathcal{L}{\rm e}^{i\pmb{\psi}}\rangle = \sum_{\{j,k\} \in \mathcal{E}}|{\rm e}^{i\psi_j} - {\rm e}^{i\psi_k}|^2 \leq 2^2|\mathcal{E}|$, for any undirected and connected graph $\mathcal{G}$. Thus, the required bounds on $\sum_{\{j,k\} \in \mathcal{E}}|{\rm e}^{i\psi_j} - {\rm e}^{i\psi_k}|^2$, as given in the theorem, is obtained.
	\end{enumerate} 
\end{proof}

\begin{thm}[Curve-Phase Balancing]\label{bounds_balancing}
	Let $\mathcal{L}$ be the Laplacian of an undirected and connected circulant graph $\mathcal{G}$ with $N$ vertices. Consider the closed-loop system \eqref{modelNew}, under the saturated control law \eqref{saturated_control} with $\zeta_k$ given by \eqref{control1}, where $K_{\mathcal{C}} > 0$, and $K > 0$ for all $k$. Assume that the initial states of the agents belong to the set $\mathcal{Z}_\delta$, as defined in Theorem~\ref{theorem1}. Then, the following properties hold: 
	\begin{enumerate}
	\item[i)] The absolute value of $e_k$, and $r_k$, for all $k$, are bounded by 
	\begin{equation*}
	|e_k| = |r_k - c_d - \rho(\phi)| \leq \delta \sqrt{1-{\rm e}^{-\left(\frac{2\mathcal{V}_2(0)}{K_{\mathcal{C}}}\right)}}.	
	\end{equation*}
	\item[ii)] The squared summation of the absolute value of ${\rm e}^{i\psi_j} - {\rm e}^{i\psi_k}$ belongs to the compact set
	\begin{equation*}
	\sum_{\{j,k\} \in \mathcal{E}}|{\rm e}^{i\psi_j} - {\rm e}^{i\psi_k}|^2  \in \mathcal{J},
	\end{equation*}
	where, 
	\begin{equation*}
	\mathcal{J} =  \left[\max\left\{0, \left(N\lambda_{\text{max}} - \left(\frac{4\pi \mathcal{V}_2(0)}{K\Gamma_{\mathcal{C}}}\right)\right)\right\}, N\lambda_{\text{max}}\right],
	\end{equation*}
	and $\mathcal{V}_2(0) = \mathcal{V}_2(\pmb{r}(0), \pmb{\theta}(0))$, and $|\mathcal{E}|$, is as defined in Theorem~\ref{bounds_synchronization}.
	\end{enumerate} 
\end{thm}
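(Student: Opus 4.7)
The plan is to mirror the argument used in Theorem~\ref{bounds_synchronization}, but now exploiting the structure of the Lyapunov function $\mathcal{V}_2$ in \eqref{V2} rather than $\mathcal{V}_1$. The key observation is that by Theorem~\ref{theorem1}(ii), under the prescribed sign conditions $K_{\mathcal{C}}>0$ and $K>0$ together with the circulant assumption on $\mathcal{G}$, we have $\dot{\mathcal{V}}_2 \leq 0$ along the closed-loop trajectories in $\mathcal{Z}_\delta$. Hence $\mathcal{V}_2(\pmb{r}(t),\pmb{\theta}(t)) \leq \mathcal{V}_2(0)$ for all $t\geq0$. Both summands on the right-hand side of \eqref{V2} are non-negative: $\mathcal{S}(\pmb{e})\geq 0$ by construction, and $\tfrac{N}{2}\lambda_{\max} - \mathcal{W}(\pmb{\psi}) \geq 0$ by the spectral bound derived just before \eqref{W_dot_final} for circulant $\mathcal{G}$. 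This non-negativity is what lets each term be bounded individually by $\mathcal{V}_2(0)$.

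For part (i), I would isolate $\mathcal{S}(\pmb{e})$ by dropping the (non-negative) curve-phase term, obtaining $K_{\mathcal{C}}\,\mathcal{S}(\pmb{e}) \leq \mathcal{V}_2(0)$. Substituting \eqref{BLF_potential_fn} and keeping only the single $k$-th summand (all terms are non-negative) gives $\tfrac{1}{2}\ln\bigl(\delta^2/(\delta^2 - |e_k|^2)\bigr) \leq \mathcal{V}_2(0)/K_{\mathcal{C}}$. Since Theorem~\ref{theorem1}(iii) already guarantees $|e_k|<\delta$, the logarithm is well-defined, and exponentiating followed by elementary algebra yields $|e_k| \leq \delta\sqrt{1-\mathrm{e}^{-2\mathcal{V}_2(0)/K_{\mathcal{C}}}}$; using \eqref{error_var} converts this into the stated bound on $|r_k - c_d - \rho(\phi)|$.

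For part (ii), I would isolate the second summand of $\mathcal{V}_2$ by dropping $K_{\mathcal{C}}\mathcal{S}(\pmb{e}) \geq 0$, which gives $K\tfrac{\Gamma_{\mathcal{C}}}{2\pi}\bigl(\tfrac{N}{2}\lambda_{\max} - \mathcal{W}(\pmb{\psi})\bigr) \leq \mathcal{V}_2(0)$. Rearranging and using $K>0$ yields the lower bound $\mathcal{W}(\pmb{\psi}) \geq \tfrac{N}{2}\lambda_{\max} - \tfrac{2\pi\mathcal{V}_2(0)}{K\Gamma_{\mathcal{C}}}$. Invoking the identity $\langle \mathrm{e}^{i\pmb{\psi}},\mathcal{L}\mathrm{e}^{i\pmb{\psi}}\rangle = \sum_{\{j,k\}\in\mathcal{E}}|\mathrm{e}^{i\psi_j}-\mathrm{e}^{i\psi_k}|^2$ from \eqref{W_max} and the definition \eqref{W}, this becomes $\sum_{\{j,k\}\in\mathcal{E}}|\mathrm{e}^{i\psi_j}-\mathrm{e}^{i\psi_k}|^2 \geq N\lambda_{\max} - \tfrac{4\pi\mathcal{V}_2(0)}{K\Gamma_{\mathcal{C}}}$. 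Combining this with the trivial lower bound $0$ gives the $\max\{\cdot,\cdot\}$ in the statement. The upper bound $N\lambda_{\max}$ is a structural consequence of the circulant spectral analysis below \eqref{W_max}: since $\mathcal{W}(\pmb{\psi}) \leq \tfrac{N}{2}\lambda_{\max}$, doubling gives $\sum|\mathrm{e}^{i\psi_j}-\mathrm{e}^{i\psi_k}|^2 \leq N\lambda_{\max}$, so the quantity lies in $\mathcal{J}$.

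The routine part is algebraic manipulation, virtually identical to that of Theorem~\ref{bounds_synchronization}; the only substantive difference to keep track of is the sign convention on $K$ and the shifted non-negative form $\tfrac{N}{2}\lambda_{\max}-\mathcal{W}(\pmb{\psi})$ that replaces $-\mathcal{W}(\pmb{\psi})$. The main obstacle, if any, is being careful that the upper bound $N\lambda_{\max}$ genuinely relies on the circulant hypothesis (hence why the statement requires circulant $\mathcal{G}$), and that the lower bound correctly reduces to $0$ when $\mathcal{V}_2(0)$ is so large that $N\lambda_{\max}-4\pi\mathcal{V}_2(0)/(K\Gamma_{\mathcal{C}})$ becomes negative — the $\max$ operator precisely handles this degenerate case.
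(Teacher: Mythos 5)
Your proposal is correct and follows essentially the same route as the paper, which itself simply notes that the argument of Theorem~\ref{bounds_synchronization} carries over with $\mathcal{V}_2$ in place of $\mathcal{V}_1$ and with the shifted non-negative term $\tfrac{N}{2}\lambda_{\max}-\mathcal{W}(\pmb{\psi})$ supplying the lower bound in part~(ii). Your write-up is in fact more explicit than the paper's (which omits the details entirely); the only item the paper adds is the side remark that $(N/2)\lambda_{\max}\leq 2|\mathcal{E}|$ for circulant graphs, which explains why the endpoint of $\mathcal{J}$ differs from the synchronization case but is not needed for the proof itself.
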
 

\begin{proof}
	For any undirected and connected \emph{circulant} graph $(N/2)\lambda_{\text{max}} \leq 2|\mathcal{E}|$, with equality if and only if the circulant graph $\mathcal{G}$ forms a ring topology (that is, the minimally connected circulant graph). Thus, the bound in case ii) of Theorem~\ref{bounds_balancing} is different than that from Theorem~\ref{bounds_synchronization}. The rest of the proof follows along the similar steps as in Theorem~\ref{bounds_synchronization}, and hence omitted. 
\end{proof}


\begin{remark}
	Note that Corollary~\ref{corollary1} and Theorem~\ref{thm_perimeter_area} can be written equivalently, individually for curve-phase synchronization and balancing, by replacing $\delta$ with ${\delta}_s = \delta \sqrt{1-{\rm e}^{-({2\mathcal{V}_1(0)}/{K_{\mathcal{C}}})}} < \delta$ and ${\delta}_b = \delta \sqrt{1-{\rm e}^{-({2\mathcal{V}_2(0)}/{K_{\mathcal{C}}})}} < \delta$, derived using the tighter bound $|r_k - c_d - \rho(\phi)| \leq \delta \sqrt{1-{\rm e}^{-({2\mathcal{V}_1(0)}/{K_{\mathcal{C}}})}}$ and $|r_k - c_d - \rho(\phi)| \leq \delta \sqrt{1-{\rm e}^{-({2\mathcal{V}_2(0)}/{K_{\mathcal{C}}})}}$, on the trajectories of the agents in Theorem~\ref{bounds_synchronization} and Theorem~\ref{bounds_balancing}, respectively.  
\end{remark}

The prerequisite of our approach is that the agents' initial conditions must satisfy $|e_k(0)| < \delta$ for all $k$. We characterize the feasible initial conditions in the following theorem.  

\begin{thm}\label{thm_initial_conditions}
	The condition $|e_k(0)| < \delta$ is satisfied for all $k$, if the initial conditions (that is, initial positions $r_k(0)$ and heading angles $\theta_k(0)$) of the agents in \eqref{modelNew} belong to the set $\mathcal{B}_\delta$, where, $e_k$ and $\mathcal{B}_\delta$ are defined in \eqref{error_var} and Theorem~1, respectively.  
\end{thm}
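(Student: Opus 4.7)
The plan is to derive the conclusion by a direct unpacking of the set-theoretic definition of $\mathcal{B}_\delta$ given in Theorem~\ref{theorem1}(iii), combined with the initialization freedom for the parametric-phase model articulated in Remark~\ref{remark_parameterization}. The argument is essentially definitional; no new estimates or Lyapunov analysis are required, which explains why the statement is presented as a standalone feasibility result for the initial data.

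First, I would rewrite $\mathcal{B}_\delta = \bigcup_{\phi\in[0,2\pi)}\mathcal{B}(\mathcal{C}(\phi),\delta)$ as the union, over the curve parameter, of open $\delta$-discs centered at the points $c_d+\rho(\phi)$ of $\mathcal{C}$. The hypothesis $r_k(0)\in\mathcal{B}_\delta$ is then equivalent, by the definition of union, to the existence of some $\phi^\ast\in[0,2\pi)$ such that $r_k(0)\in\mathcal{B}(\mathcal{C}(\phi^\ast),\delta)$, i.e., $|r_k(0)-c_d-\rho(\phi^\ast)|<\delta$.

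Next, I would invoke Remark~\ref{remark_parameterization}, which permits selecting the initial value $\phi_k(0)$ as any of the (possibly several) solutions of the tangent initialization $\hat g_t(\phi_k(0))={\rm e}^{i\theta_k(0)}$. Because the unit tangent of a simple closed curve sweeps out every direction in $\mathbb{S}^1$, this preimage is non-empty; and because the initial position–heading pair is assumed to be feasible with respect to $\mathcal{B}_\delta$, the $\phi^\ast$ produced in the previous step can be taken to lie in that preimage. Setting $\phi_k(0)=\phi^\ast$ and substituting into \eqref{error_var} immediately yields
\[
|e_k(0)| \;=\; |r_k(0)-c_d-\rho(\phi_k(0))| \;=\; |r_k(0)-c_d-\rho(\phi^\ast)| \;<\; \delta,
\]
for every $k=1,\ldots,N$, which is the required conclusion.

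The only point worth flagging as an \emph{obstacle}, however minor, is the joint compatibility of the distance condition $r_k(0)\in\mathcal{B}(\mathcal{C}(\phi^\ast),\delta)$ with the tangent condition $\hat g_t(\phi^\ast)={\rm e}^{i\theta_k(0)}$ on the same $\phi^\ast$. For convex curves this is automatic, since $\theta_k\mapsto\phi_k$ is a bijection (cf.\ Example~\ref{example1} and Fig.~\ref{many_to_one}), so the $\phi^\ast$ delivered by the disc inclusion is uniquely matched to the given heading. For non-convex curves one explicitly exploits the multi-valuedness already allowed in Remark~\ref{remark_parameterization}, choosing among the several preimages of $\theta_k(0)$ the one witnessing the disc inclusion. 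In either case the argument is set-theoretic and I do not anticipate any genuine analytic difficulty — the theorem is essentially a restatement of how $\mathcal{B}_\delta$ was constructed in Theorem~\ref{theorem1}(iii), made explicit so that it can be quoted as a feasibility certificate on the initial data when invoking the closed-loop results.
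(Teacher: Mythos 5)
Your overall strategy (unpack $\mathcal{B}_\delta$ as a union of open $\delta$-discs, extract a witness $\phi^\ast$ with $|r_k(0)-c_d-\rho(\phi^\ast)|<\delta$, and identify $\phi_k(0)$ with $\phi^\ast$) matches the spirit of the paper's very terse proof. However, the step you yourself flag as a ``minor obstacle'' is in fact a genuine gap, and your attempt to dismiss it is where the argument breaks. The tracking point $\rho_k=\rho(\phi(\theta_k))$ is pinned to the heading through the tangent constraint $\hat g_t(\phi_k(0))={\rm e}^{i\theta_k(0)}$, so the admissible values of $\phi_k(0)$ form a (generically discrete) preimage set determined entirely by $\theta_k(0)$, while the witness $\phi^\ast$ delivered by the disc inclusion is determined entirely by $r_k(0)$. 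There is no reason these two sets of parameter values intersect. Your claim that for convex curves ``the $\phi^\ast$ delivered by the disc inclusion is uniquely matched to the given heading'' inverts the logic: the bijection $\theta_k\mapsto\phi_k$ forces $\phi_k(0)$ to be the \emph{unique} preimage of $\theta_k(0)$, which need not equal $\phi^\ast$. Concretely, for a circle $\rho(\phi)=R{\rm e}^{i\phi}$ with $\delta<R$, take $r_k(0)$ just inside the circle near the point $\rho(0)=R$ (so $r_k(0)\in\mathcal{B}_\delta$ via $\phi^\ast=0$) but $\theta_k(0)=3\pi/2$; the tangent constraint then forces $\phi_k(0)=\pi$, giving $|e_k(0)|\approx 2R>\delta$. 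So the statement is false if read as holding for an \emph{arbitrary} heading, and your proof cannot close the compatibility step because it is not true.

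The paper's own (one-line) proof resolves this differently: it reads the theorem as a feasibility certificate and explicitly notes that ``$\theta_k(0)\in\mathbb{S}^1$ are free states,'' i.e., the heading is \emph{chosen}, not given. The repaired argument is: for each $k$, pick $\phi^\ast$ witnessing $r_k(0)\in\mathcal{B}(\mathcal{C}(\phi^\ast),\delta)$ and then \emph{set} $\theta_k(0)=\arg\bigl(\hat g_t(\phi^\ast)\bigr)$, so that $\phi_k(0)=\phi^\ast$ is an admissible initialization under Remark~\ref{remark_parameterization} and $|e_k(0)|=|r_k(0)-c_d-\rho(\phi^\ast)|<\delta$ follows immediately. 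Your write-up would be correct if you replaced the compatibility claim with this explicit choice of heading; as written, it asserts a coincidence between two independently defined parameter values that can fail even in the convex case.
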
 

\begin{proof}
	The proof directly follows from Theorem~1 and Corollary~1. Note that there exists at least one setting of the initial conditions such that $|e_k(0)| < \delta$ is satisfied for all $k$, as $\theta_k(0) \in \mathbb{S}^1$ are free states. 
\end{proof}

\begin{figure}[t!]
	\centering
	\subfigure[Topology]{\includegraphics[width=2.5cm]{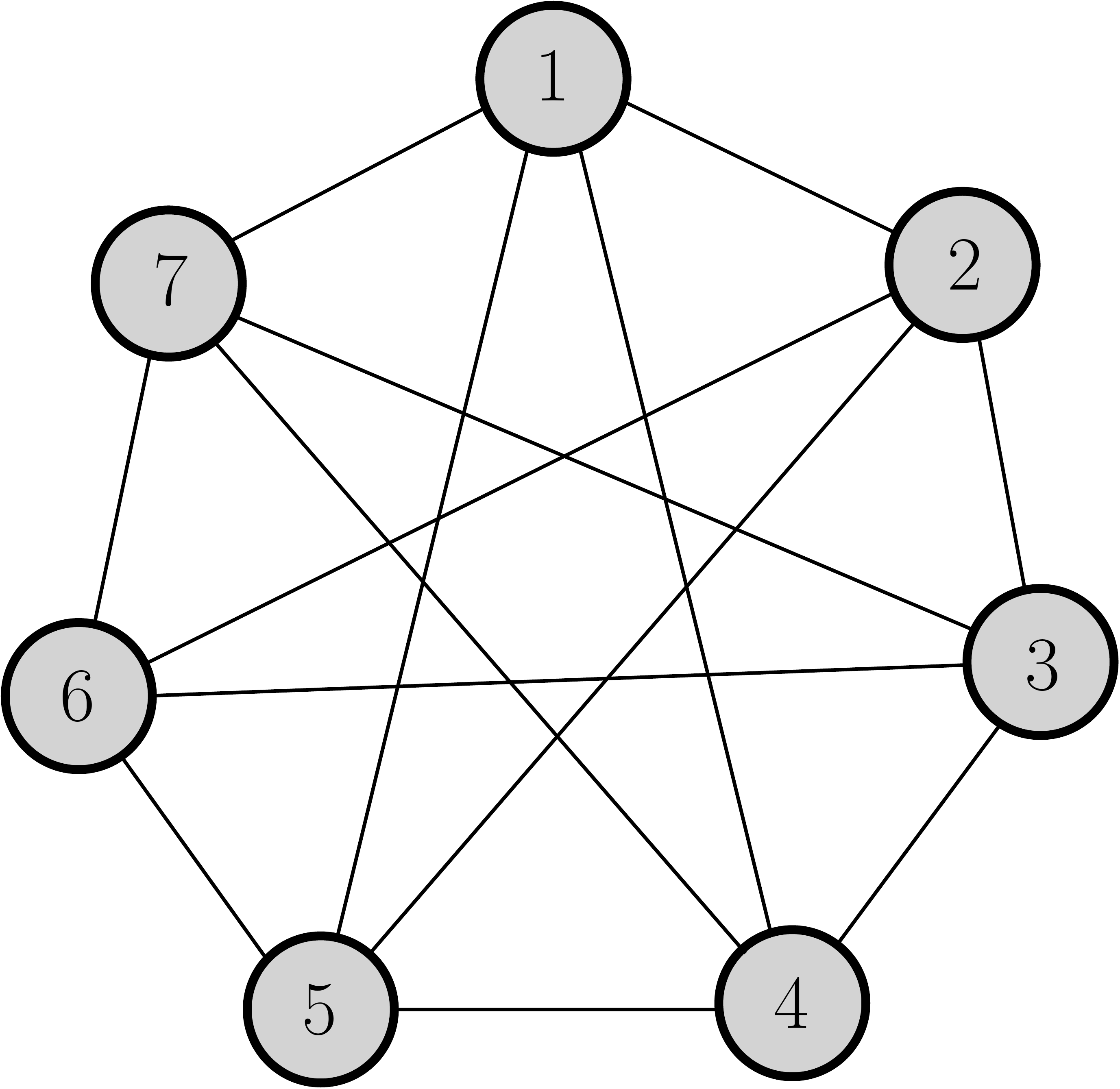}}\hspace{1.0cm}
	\subfigure[Laplacian $\mathcal{L}$]{\includegraphics[width=2.5cm]{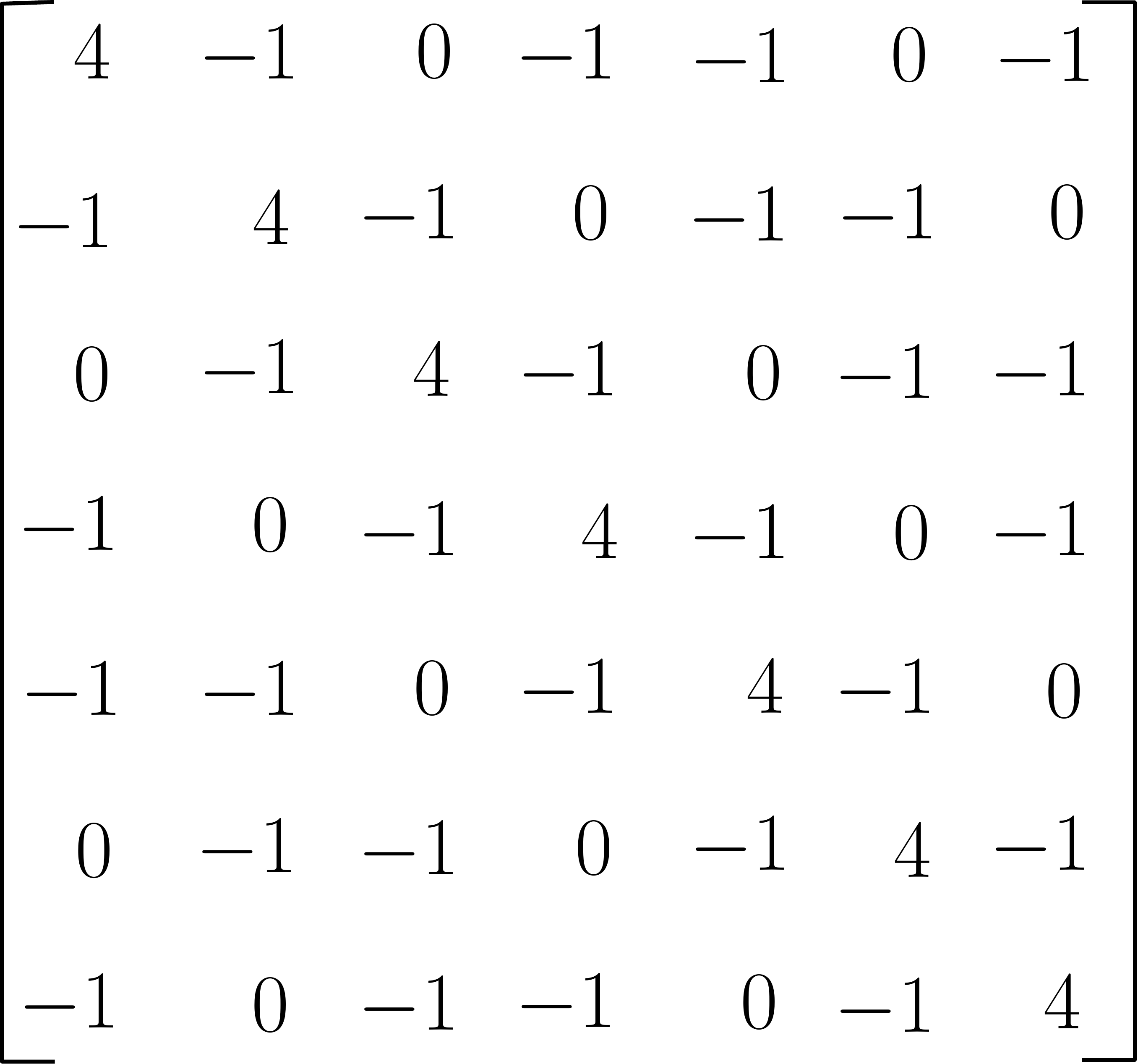}}
	\caption{Interaction topology and associated Laplacian for $N=7$ agents.}
	\label{comm_top}
\end{figure}

\begin{figure}[t!]
	\centering
	\subfigure[Synchronization]{\includegraphics[width=3.5cm]{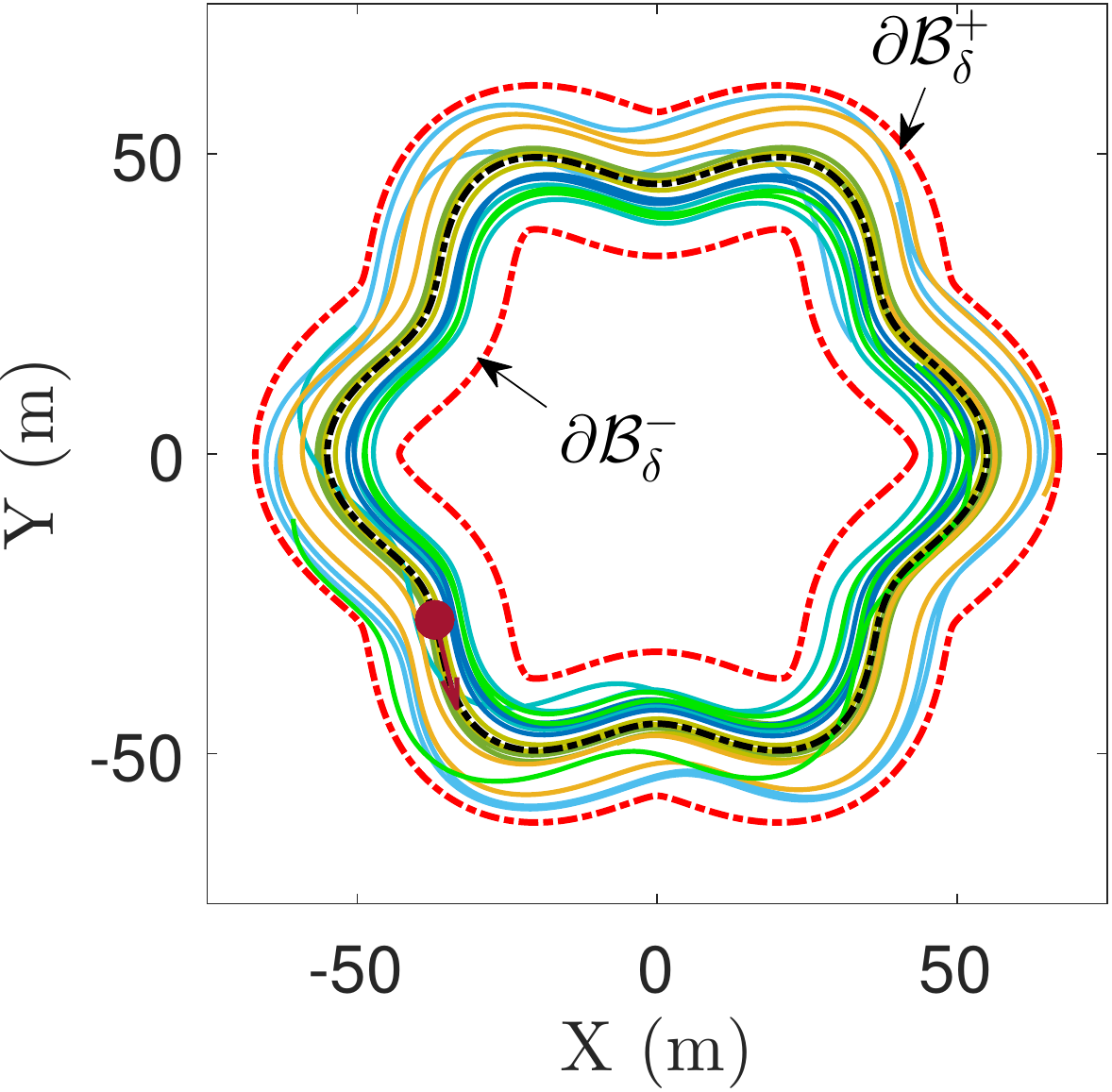}}\hspace{0.2cm}
	\subfigure[Balancing]{\includegraphics[width=3.5cm]{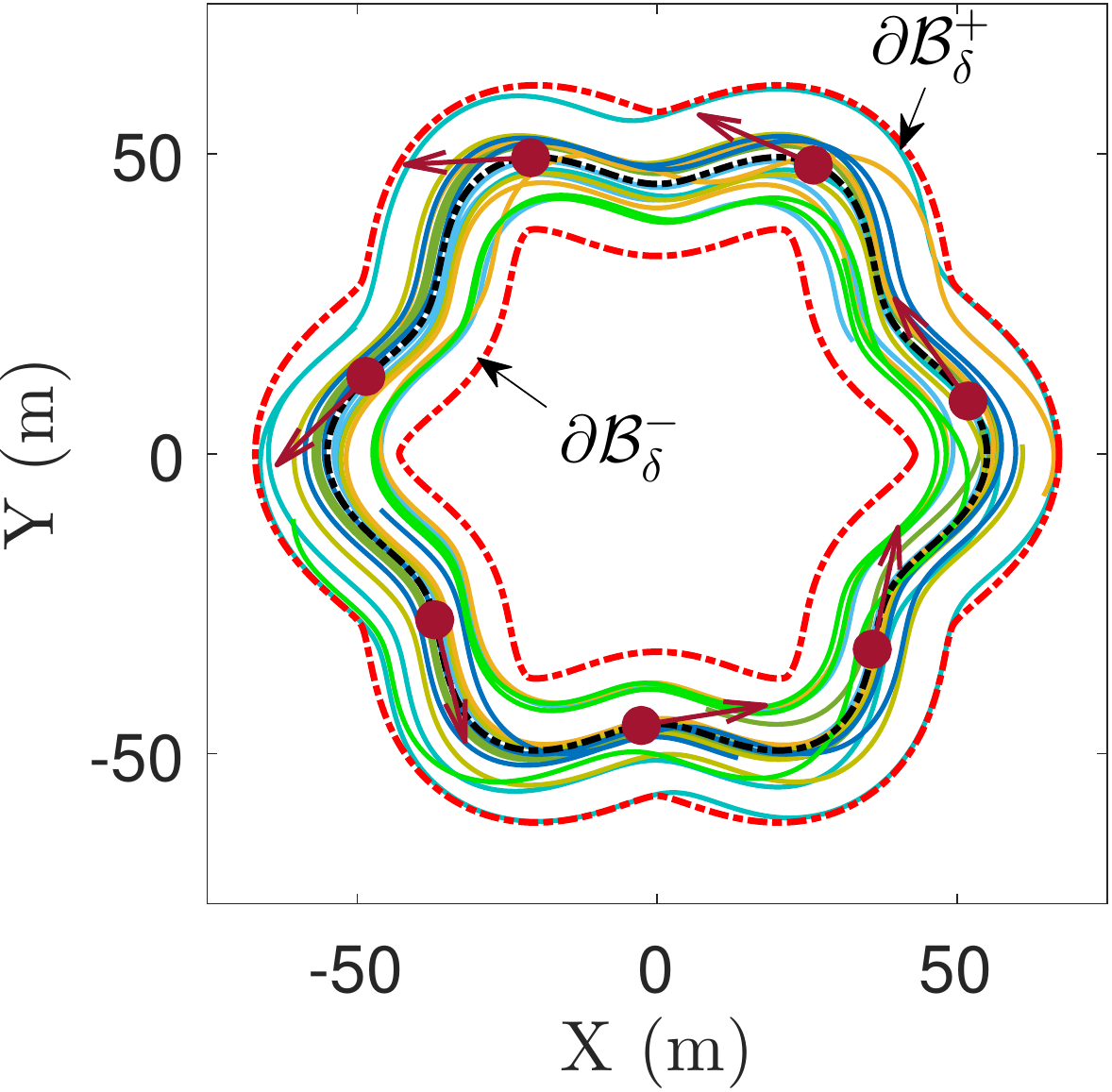}}\\
	\subfigure[Errors$-$synchronization]{\includegraphics[width=3.5cm]{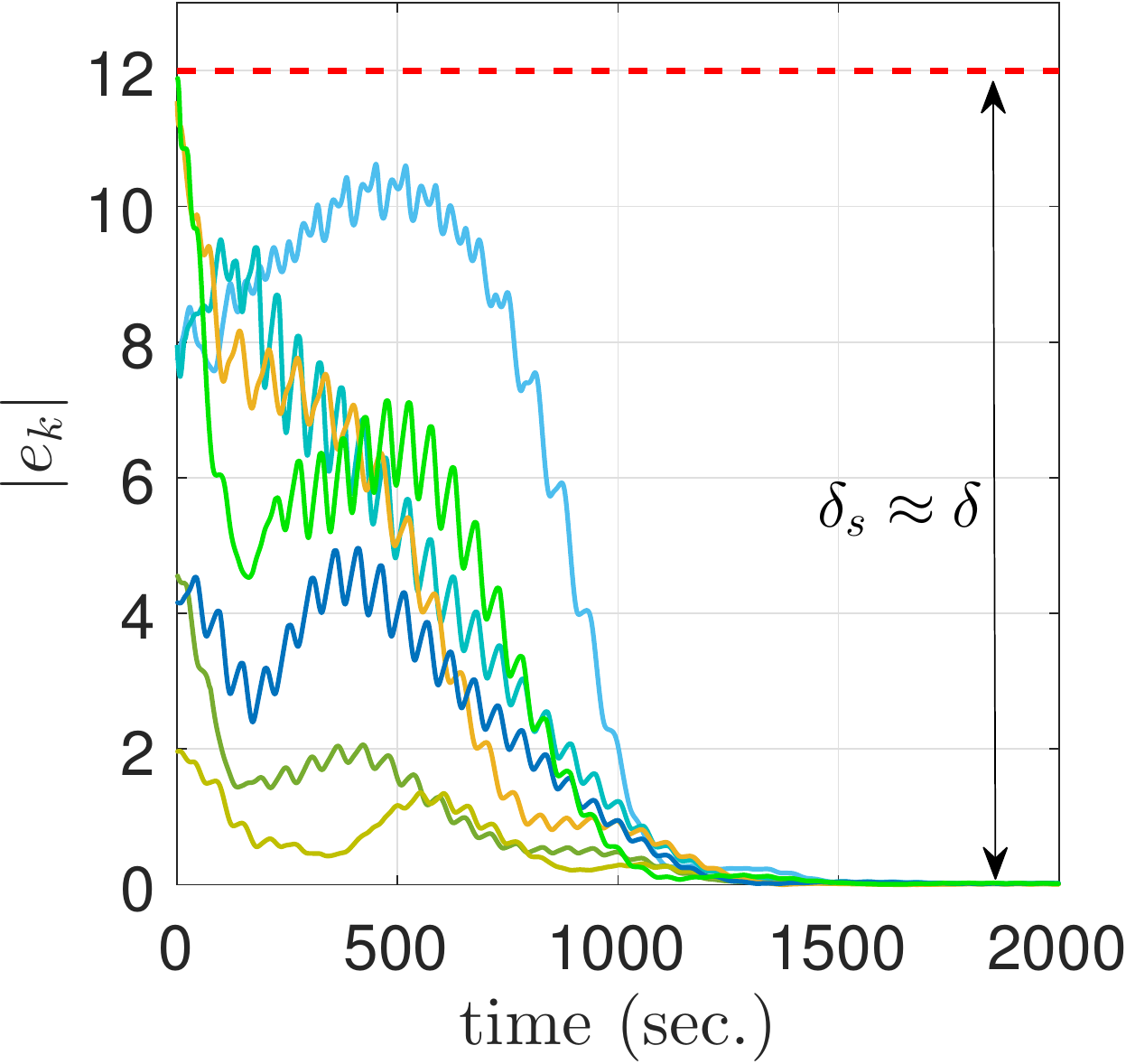}}\hspace{0.5cm}
	\subfigure[Errors$-$balancing]{\includegraphics[width=3.5cm]{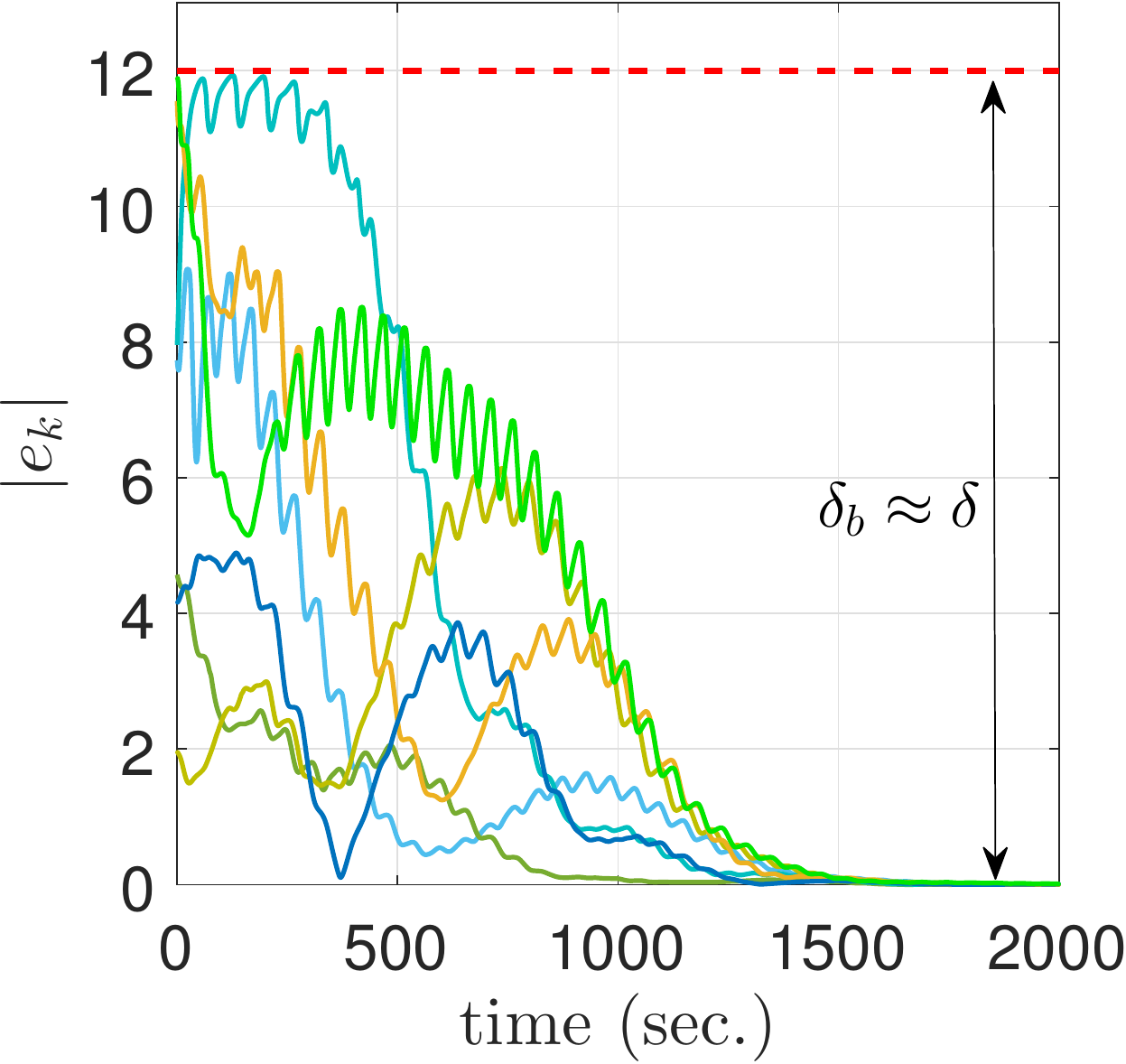}}
	\caption{Agents' trajectories and the absolute errors $|e_k|$ with time.}
	\label{traj_err_plots}
\end{figure}

\section{Simulation Results}\label{section6}
Consider seven agents ($N=7$) with an interaction topology given by a circulant graph $\mathcal{G}$ in Fig.~\ref{comm_top}.
Let us stabilize the agents around the polar-rose curve, as discussed in Example~\ref{example1}, with parameters $\tilde{a} = 10,\tilde{b} = 6, \tilde{s} = 5$, and center at $c_d = (0,0)$. Assume $\delta = 12$. The initial positions and heading angles of the agents are randomly chosen to satisfy $|e_k(0)| < \delta$ for all $k = 1, \ldots, 7$, according to Theorem~\ref{thm_initial_conditions}, and are $\pmb{x}(0)  = [32.6, 8.1, -50.2, -6.7, 64.4, -46.1, -60.6]^T$, $\pmb{y}(0)  = [18.7, -42.5, 21.2, -48.2, -7.0, -9.2, -10.8]^T$, and $\pmb{\theta}(0) = [127.3^{\circ}, 341.2^{\circ}, 222.6^{\circ}, 18.5^{\circ}, 59.5^{\circ}, 314.3^{\circ}, 271.7^{\circ}]^T$. Since $\delta < \text{min}_{\phi}|1/\kappa(\phi)| = 12.87$, one can easily observe that Assumption~1 holds for this curve, and hence, there exist inner and outer boundaries $\partial\mathcal{B}_{\delta}^-$ and $\partial\mathcal{B}_{\delta}^+$, as defined in Corollary~\ref{corollary1}. 
\begin{itemize}
\item Fig.~\ref{traj_err_plots} shows the agents' trajectories and errors $e_k$ for both curve-phase synchronization and balancing. The results are obtained under control law \eqref{control1} with gains $K_{\mathcal{C}} = 2.5$ and $K = -0.1~(\text{resp.}, 0.2)$ for synchronization (resp., balancing). It is clearly seen that the agents achieve synchronization and balancing, and their trajectories stay within the set $\mathcal{B}_{\delta}$, bounded by $\partial\mathcal{B}_{\delta}^-$ and $\partial\mathcal{B}_{\delta}^+$. Moreover, the absolute value of errors $|e_k|$ for all $k = 1, \ldots, 7$, are bounded by $\delta_s \approx \delta$ (resp., $\delta_b \approx \delta$) for synchronization (resp., balancing) and approaches zero, as desired. From an application point of view, one can consider that the agents are moving in different planes in curve-phase synchronization \cite{leonard2007collective,jain2019trajectory}. 

\begin{figure}[t!]
	\centering
	\subfigure[Control$-$synchronization]{\includegraphics[width=3.5cm]{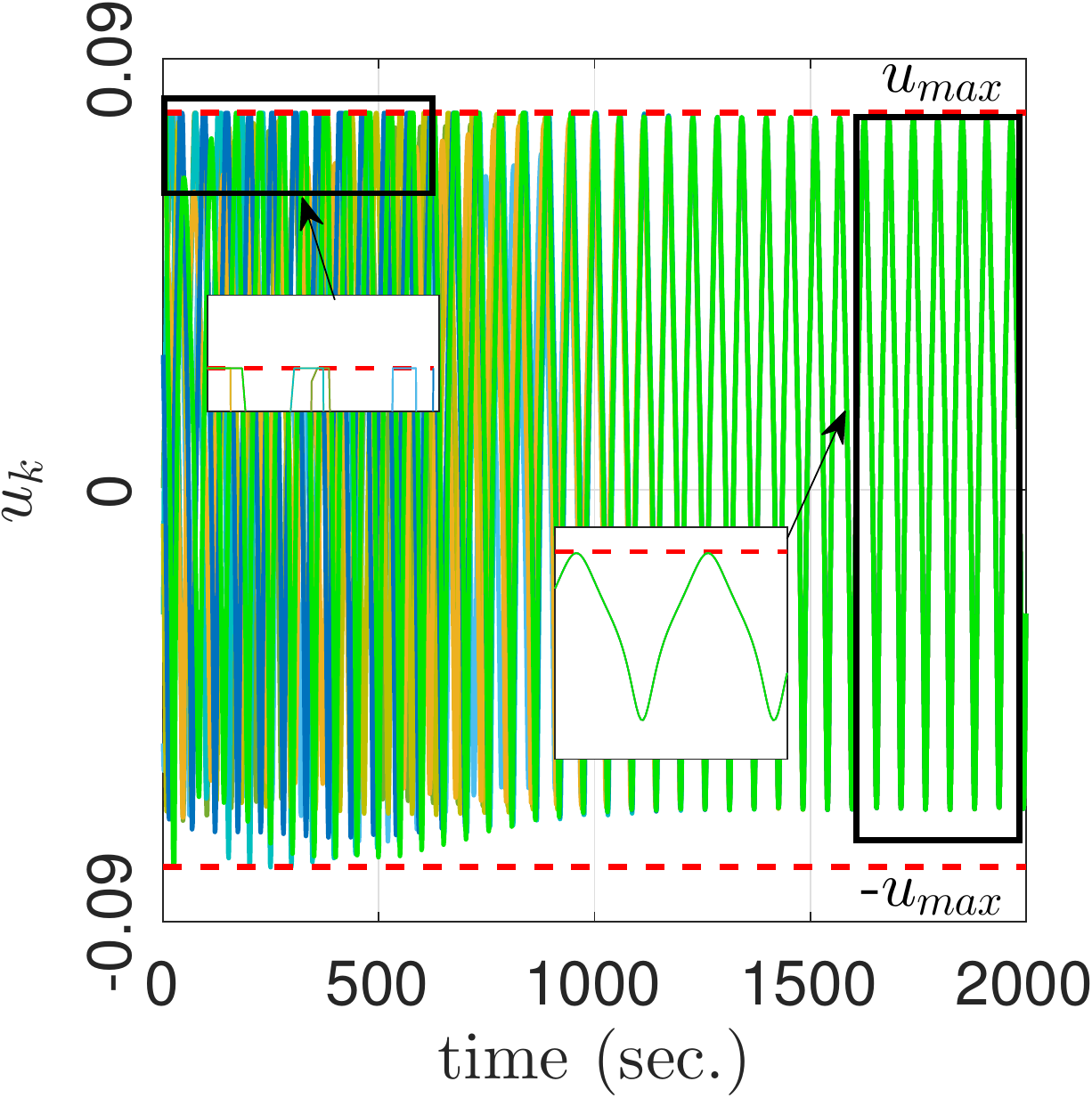}}\hspace{0.5cm}
	\subfigure[Control$-$balancing]{\includegraphics[width=3.5cm]{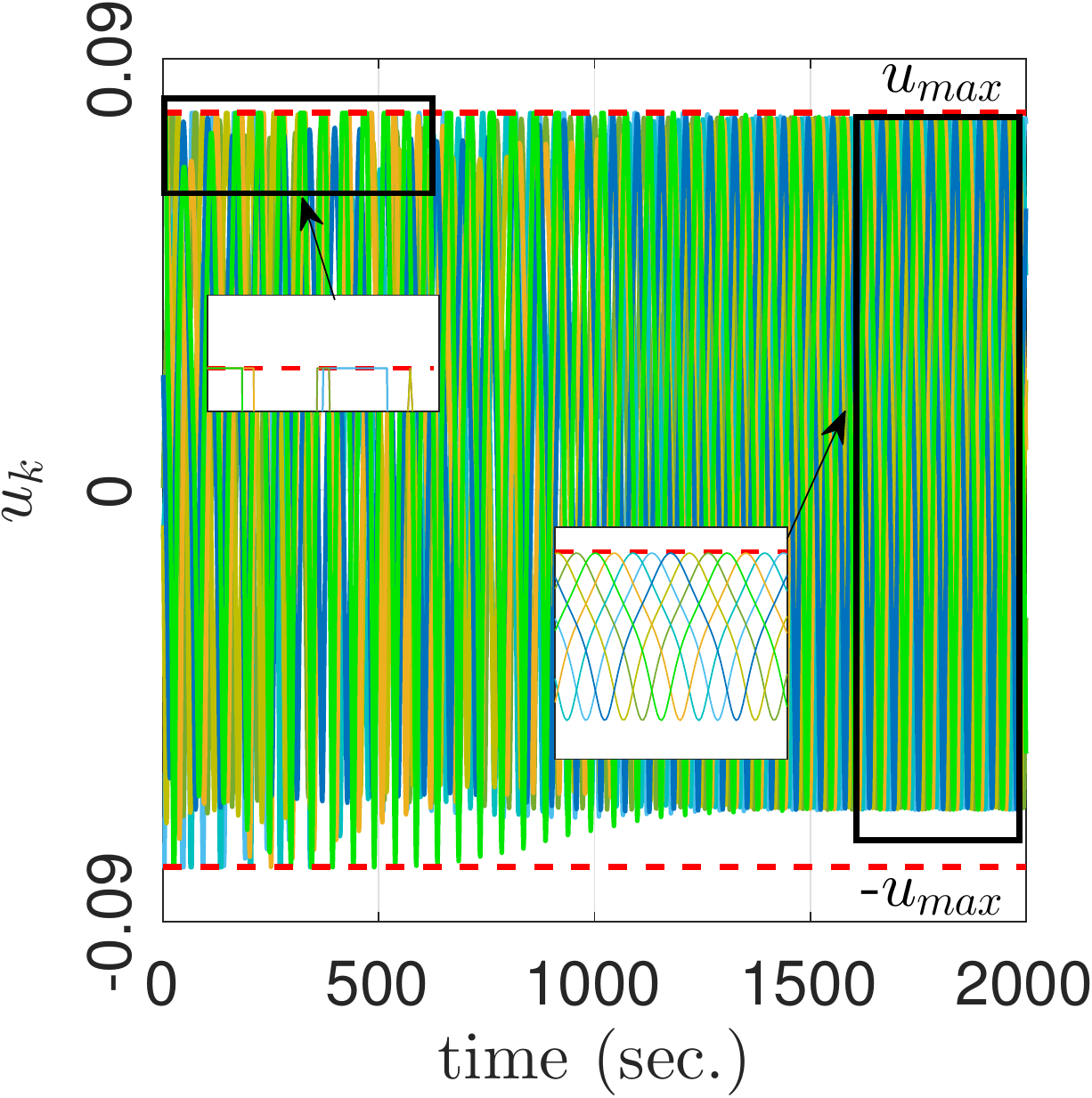}}
	\subfigure[$\zeta_k-$synchronization]{\includegraphics[width=3.5cm]{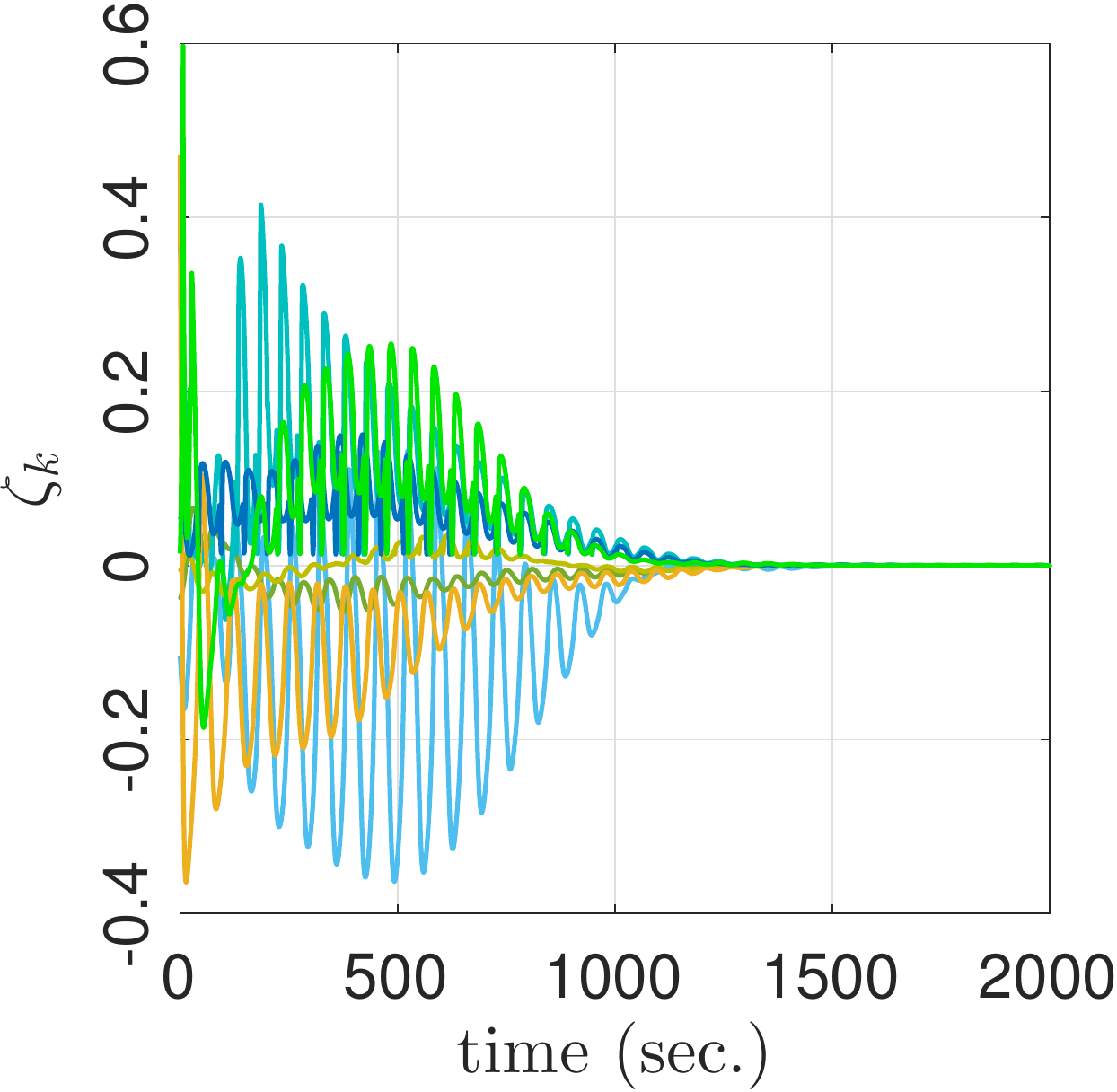}}\hspace{0.5cm}
	\subfigure[$\zeta_k-$balancing]{\includegraphics[width=3.5cm]{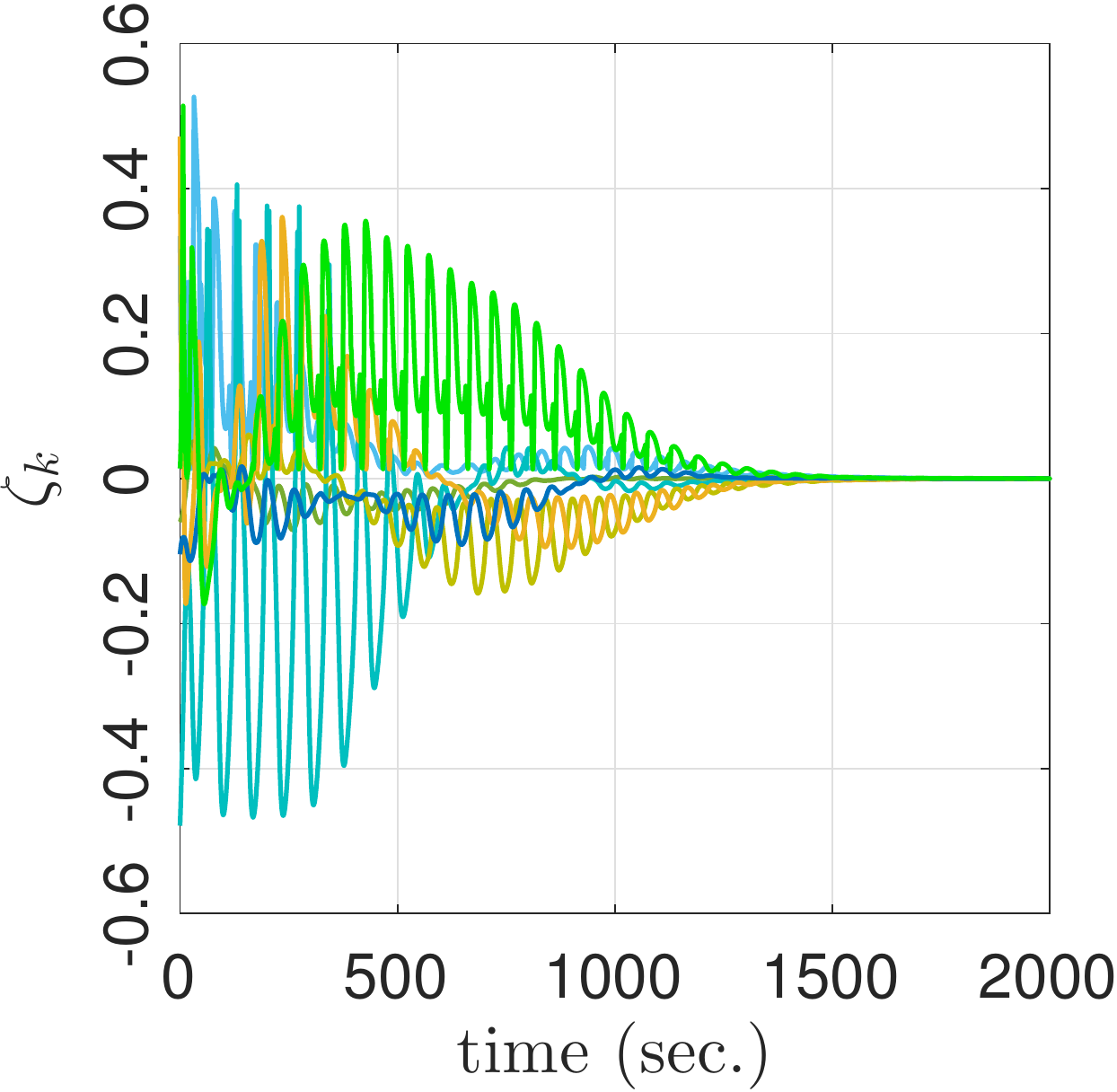}}	
	\caption{Control inputs $u_k$ in \eqref{saturated_control} with time.}
	\label{ctrl_plots}
\end{figure}

\begin{figure}
	\centering
	\subfigure[$\hspace*{-0.08cm}|p_{\psi}|,\mathcal{W}_{\psi}-$synchronization]{\includegraphics[width=3.5cm]{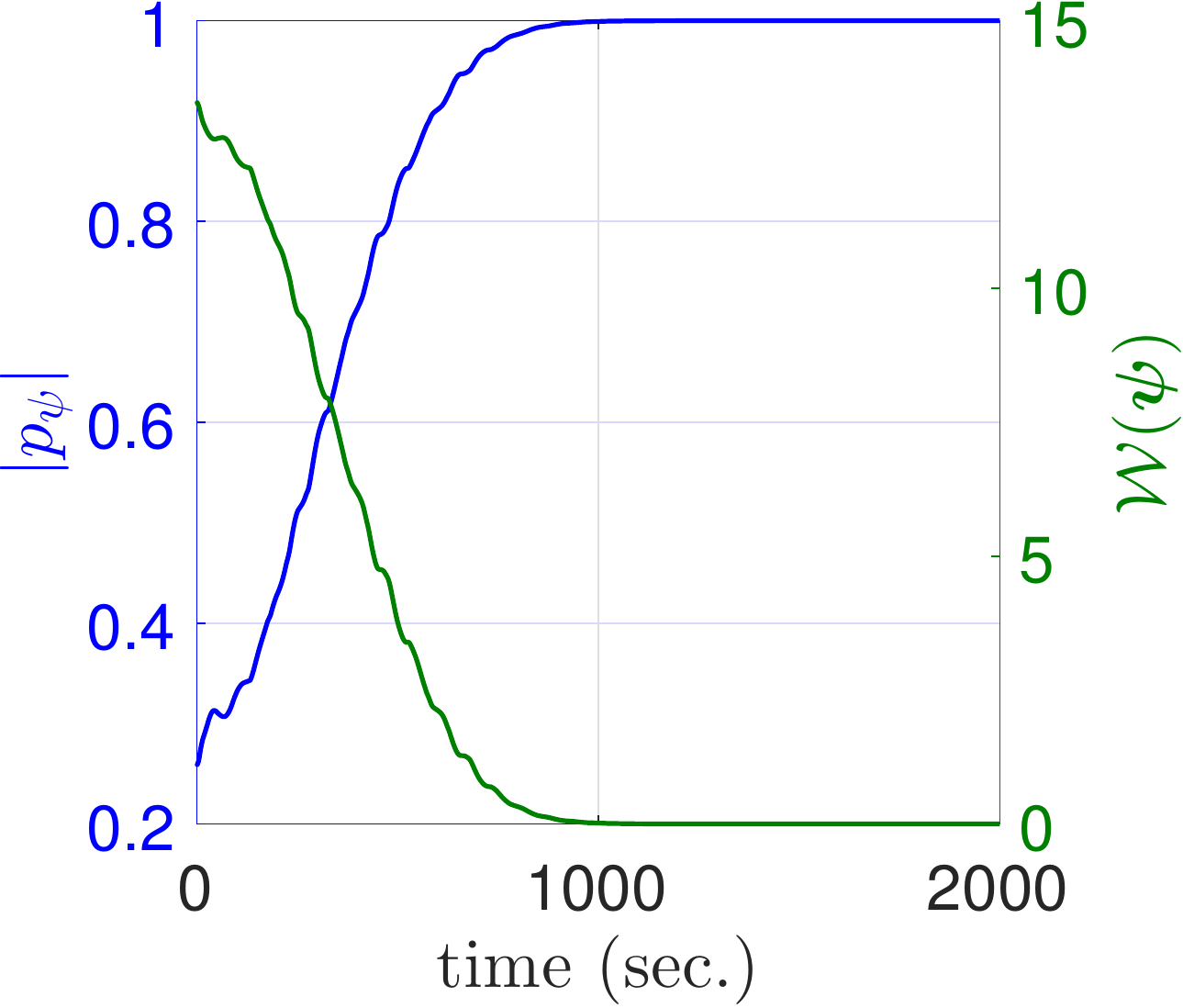}}\hspace{0.5cm}
	\subfigure[$|p_{\psi}|,\mathcal{W}_{\psi}-$balancing]{\includegraphics[width=3.5cm]{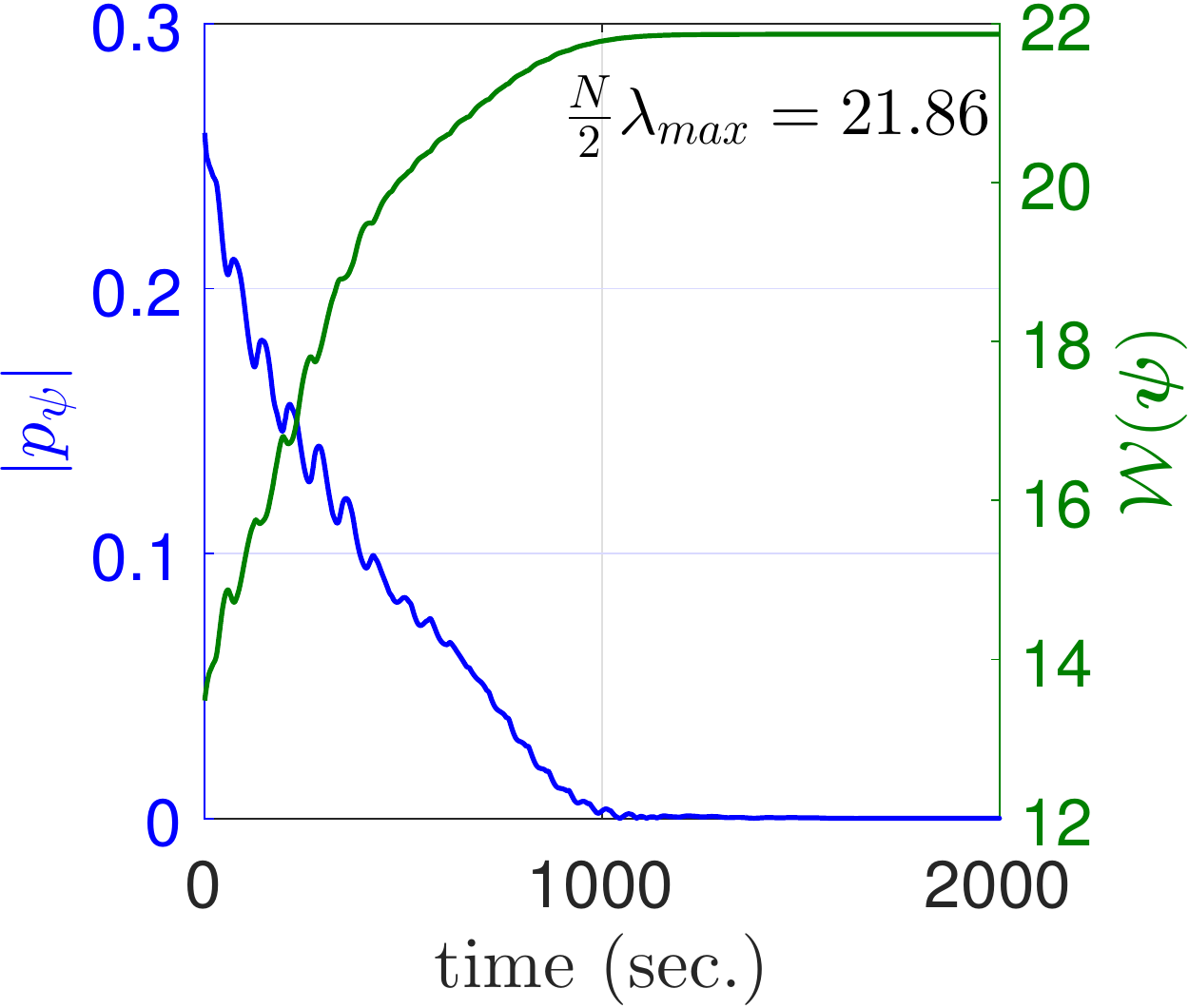}}\\
	\subfigure[$\mathcal{H}(\pmb{\psi})-$synchronization]{\includegraphics[width=3.5cm]{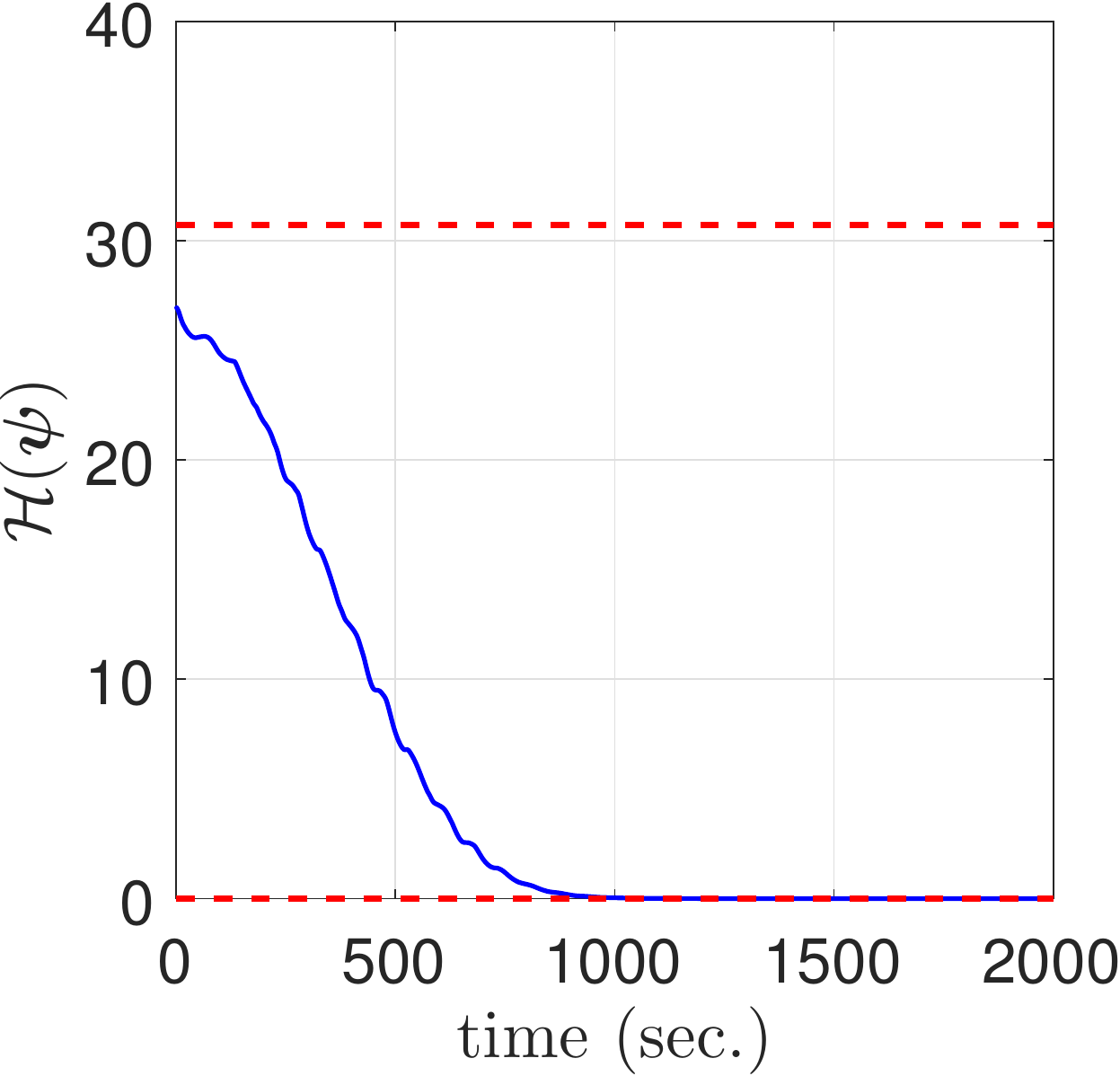}}\hspace{0.5cm}
	\subfigure[$\mathcal{H}(\pmb{\psi})-$balancing]{\includegraphics[width=3.5cm]{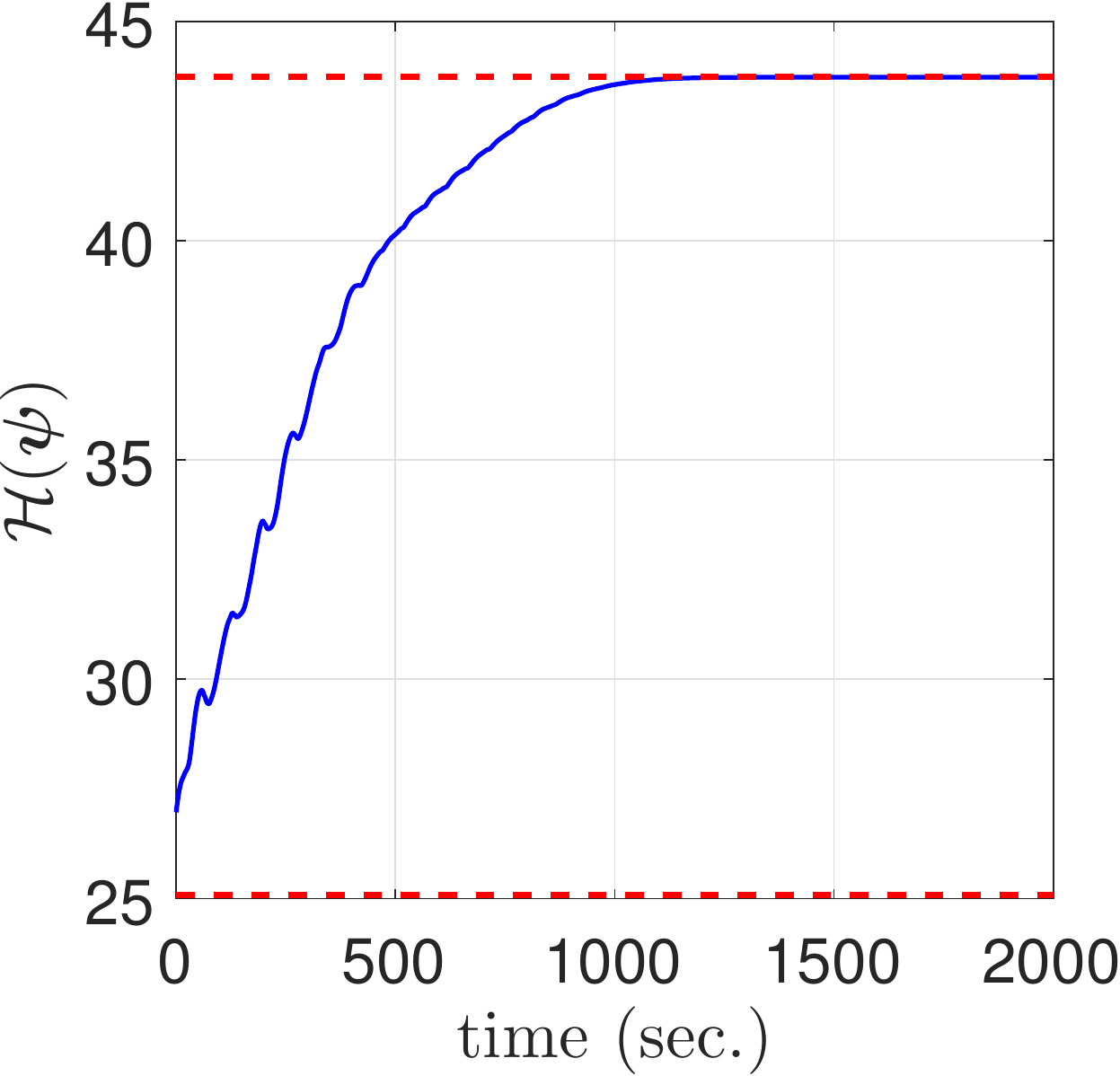}}
	\caption{Curve-phase characteristics for synchronization and balancing.}
	\label{phase_plots}
\end{figure}

\begin{figure}
	\centering
	\subfigure[$\theta_k-$synchronization]{\includegraphics[width=3.5cm]{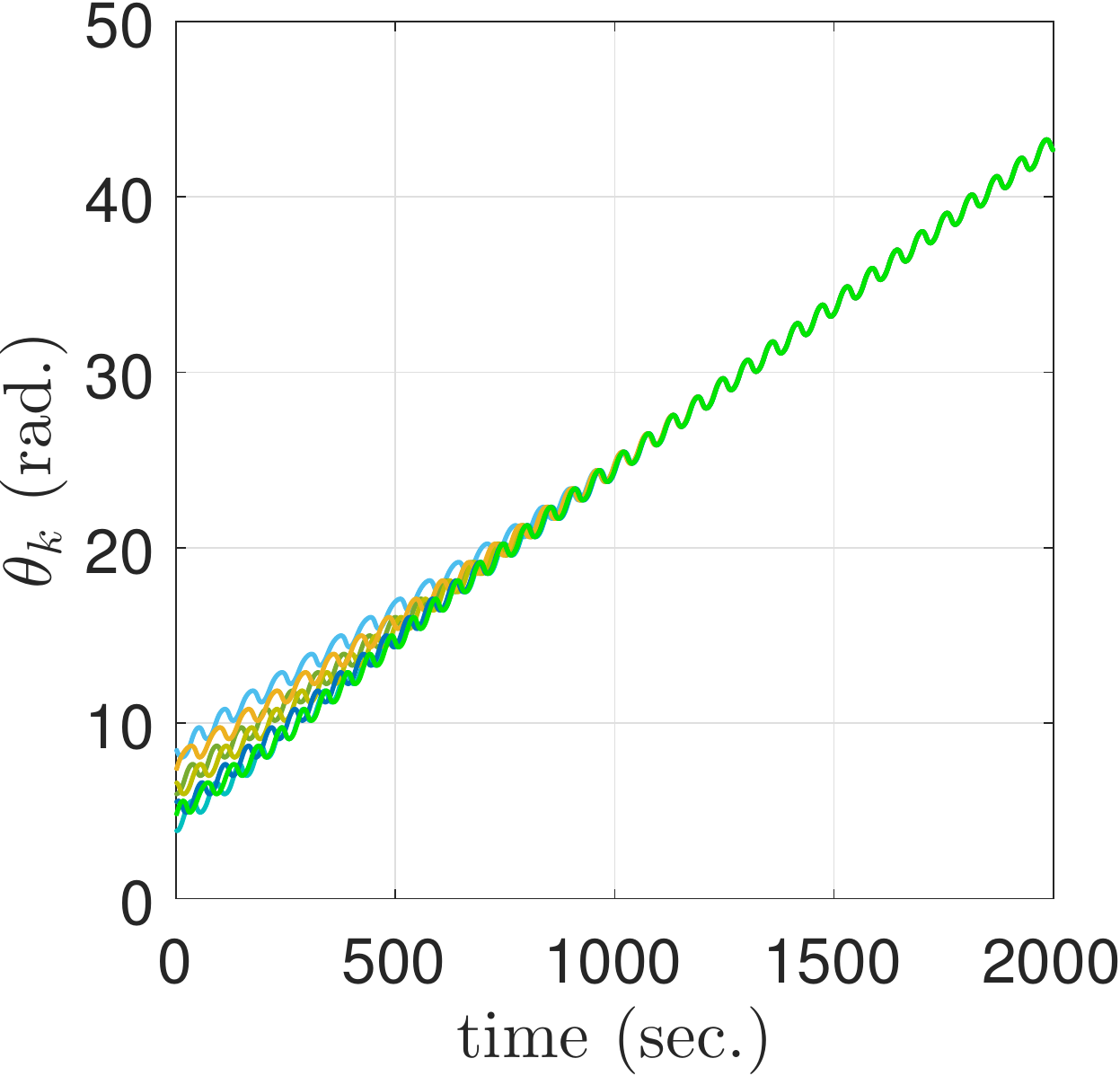}}\hspace{0.5cm}
	\subfigure[$\theta_k-$balancing]{\includegraphics[width=3.5cm]{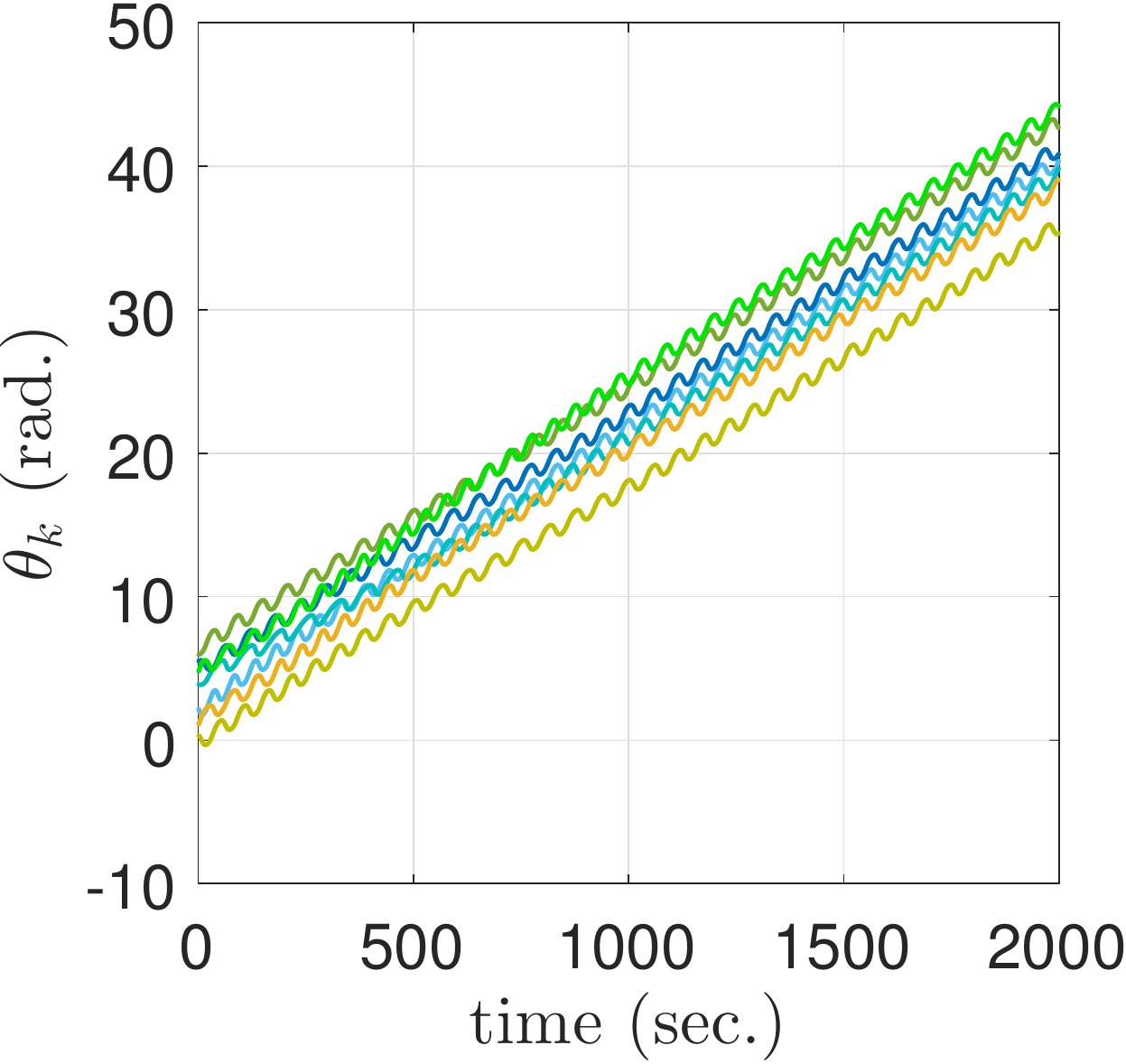}}\\
	\subfigure[$\psi_k-$synchronization]{\includegraphics[width=3.5cm]{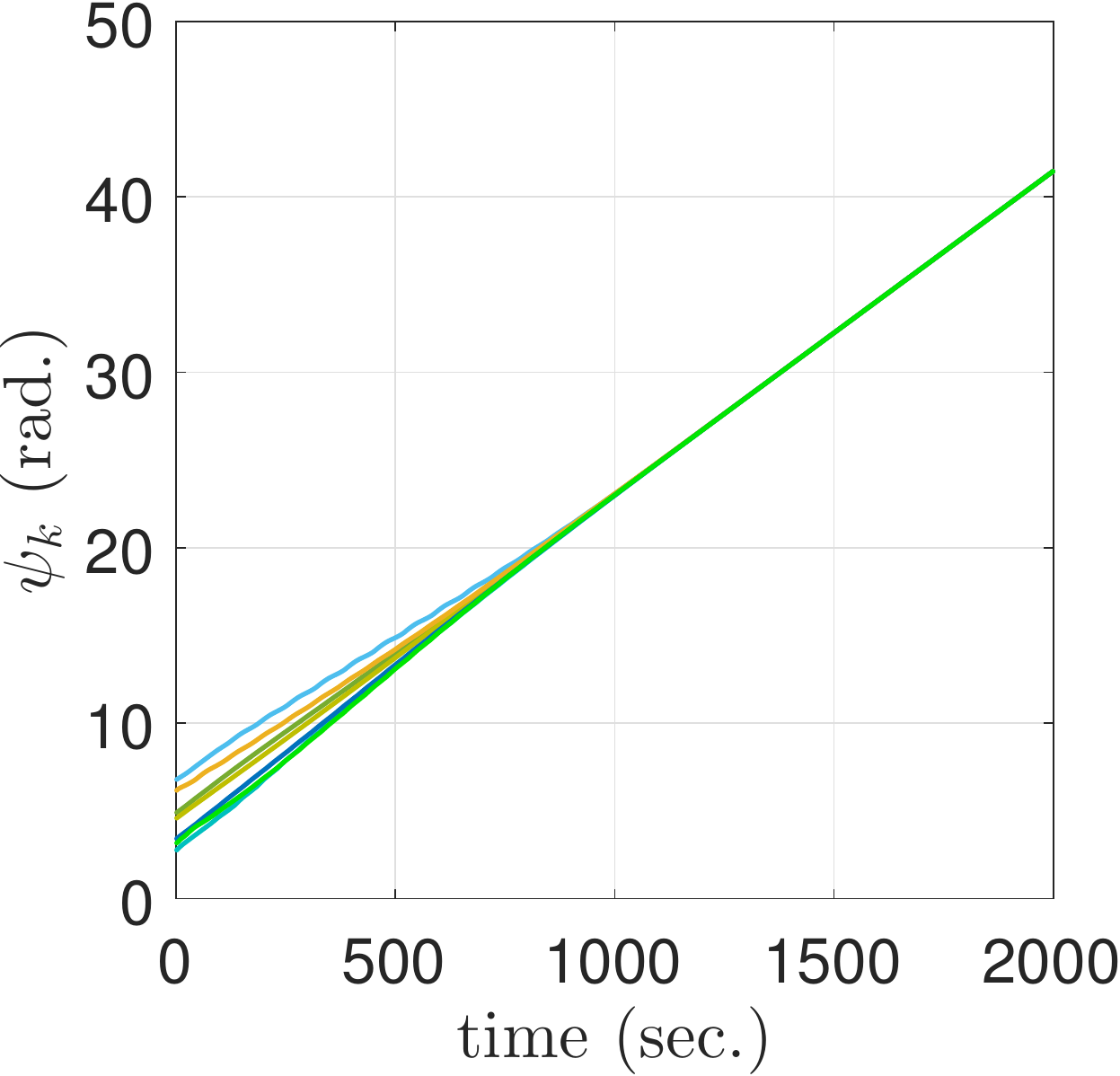}}\hspace{0.5cm}
	\subfigure[$\psi_k-$balancing]{\includegraphics[width=3.5cm]{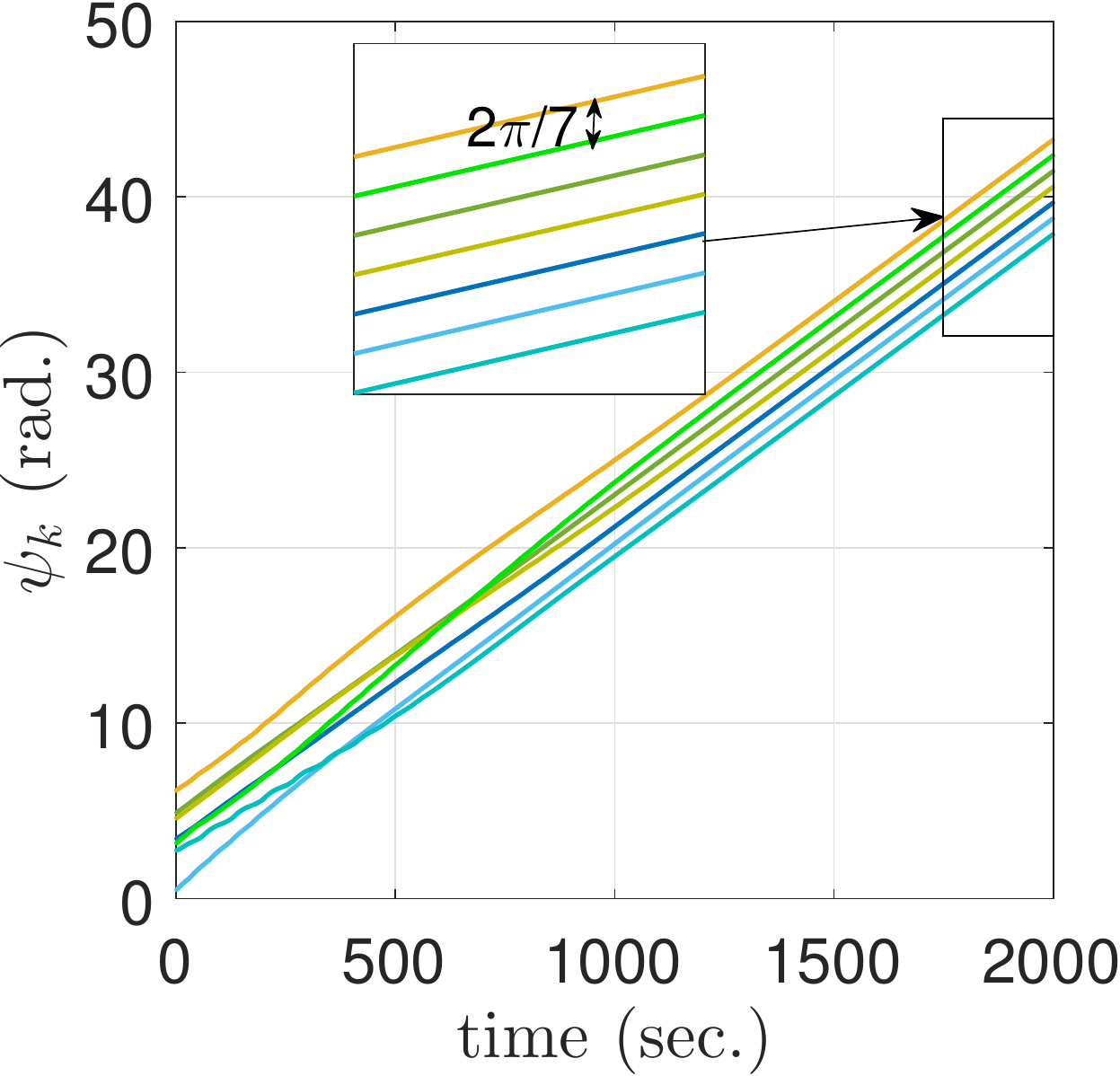}}\label{fig_psi_spaced}
	\caption{Agents' heading and curve-phases for synchronization and balancing.}
	\label{theta_psi_plots}
\end{figure}

\item Control inputs $u_k$ in \eqref{saturated_control} are depicted in Fig.~\ref{ctrl_plots} for all $k = 1, \ldots, 7$, where we assumed that the saturation limit is $u_{\text{max}} =  0.0786$ such that $u_{\text{max}} \geq \max_{\phi}|\kappa(\phi)| = 0.0776$ in \eqref{saturated_control}. Clearly, $|u_k| \leq u_{\text{max}}$ for all $k = 1, \ldots, 7$ in both curve-phase synchronization and balancing. An important observation in Fig.~\ref{ctrl_plots} is that the control inputs $u_k$ are also synchronized and phase-shifted in time for synchronization and balancing, respectively. Further, we also observe that $\zeta_k \rightarrow 0$ (and $u_k = \kappa(\phi)$), $\forall~k = 1, \ldots, 7$ in steady state, when the agents converge to the desired curve in synchronized and balanced phase patterns.

\item Fig.~\ref{phase_plots} sketches the magnitude $|p_{\psi}|$ of the average curve-phase momentum $p_{\psi}$, the curve-phase potential $\mathcal{W}(\pmb{\psi})$, and the quantity $\mathcal{H}(\pmb{\psi}) = \sum_{\{j,k\} \in \mathcal{E}}|{\rm e}^{i\psi_j} - {\rm e}^{i\psi_k}|^2$ for both curve-phase synchronization and balancing. It can be seen that $|p_{\psi}| \to 1$ and $\mathcal{W}(\pmb{\psi}) \to 0$ in case of synchronization. For balancing, $|p_{\psi}| \to 0$ and $\mathcal{W}(\pmb{\psi}) \to ({N}/{2})\lambda_{\text{max}}(\mathcal{L}) = 21.86 < 2|\mathcal{E}| = 26$, as discussed in Lemma \ref{lem_critical points of W} and Theorem~\ref{bounds_balancing}. Moreover, $\mathcal{H}(\pmb{\psi}) \in [0,30.7]$ for synchronization, and $\mathcal{H}(\pmb{\psi}) \in [25.1,43.7]$ for balancing. These bounds are calculated using Theorems \ref{bounds_synchronization} and \ref{bounds_balancing}, and are verified in Fig.~\ref{phase_plots}. 

\item In Fig.~\ref{theta_psi_plots}, we observe that the agent headings $\theta_k$ and curve phases $\psi_k$ converge in the case of synchronization, as the agents converge to the desired curve. For the balanced phase pattern of $N = 7$ agents, the curve-phases are spaced apart by $\frac{2\pi}{7}$ radians in the steady state (see Fig. \ref{theta_psi_plots}(d)).

\item We have numerically calculated the parameters and areas of the curves and boundaries. It is observed that $\Gamma_{\mathcal{C}} = 340.82$~(m), and $\Gamma_{\partial\mathcal{B}_{\delta}^-} = \Gamma_C - 2\pi\delta = 265.43~\text{(m)};~ \Gamma_{\partial\mathcal{B}_{\delta}^+} = \Gamma_C + 2\pi\delta = 416.21~\text{(m)}$, satisfying Theorem~\ref{thm_perimeter_area}. The area enclosed by the desired curve, and the inner and outer boundaries are also calculated numerically and agree with Theorem~\ref{thm_perimeter_area}, $\mathcal{A}_{\mathcal{C}} = 7893.3$~(m$^2$), $\mathcal{A}_{\partial\mathcal{B}_{\delta}^-} = \mathcal{A}_{\mathcal{C}} - \delta\Gamma_{\mathcal{C}} + \pi\delta^2 \approx 4255.7$~(m$^2$), and $\mathcal{A}_{\partial\mathcal{B}_{\delta}^+} = \mathcal{A}_{\mathcal{C}} + \delta\Gamma_{\mathcal{C}} + \pi\delta^2 \approx 12435.4$~(m$^2$). It is straightforward to check that $\Gamma_{\partial\mathcal{B}_{\delta}} = \Gamma_{\partial\mathcal{B}_{\delta}^+} + \Gamma_{\partial\mathcal{B}_{\delta}^-} = 2\Gamma_{\mathcal{C}} = 681.64~\text{(m)}$, and $\mathcal{A}_{\partial\mathcal{B}_{\delta}} = \mathcal{A}_{\partial\mathcal{B}_{\delta}^+} - \mathcal{A}_{\partial\mathcal{B}_{\delta}^-} = 2\delta\Gamma_{\mathcal{C}} \approx 8179.7$~(m$^2$). Furthermore, the inequalities $\Gamma^2_{\mathcal{C}} = 116158.6 > 99189.5 = 4\pi\mathcal{A}_{\mathcal{C}}$, and $\Gamma_{\mathcal{C}} = 340.82 > 80.87 = 2\pi/\kappa_{\text{max}}$, as mentioned in Remark~4, are also verified.
\end{itemize}

\section{Conclusion and Further Remarks}\label{section7}
Formation patterns of multi-agent systems in curve-phase synchronization and balancing around a desired simple closed curve, while considering two practical aspects$-$bounded trajectories and saturated control, were investigated in this paper. The concept of logarithmic BLF was used to derive the control laws. Using tools from Lyapunov stability theory and LaSalle's invariance principle, it was shown that the proposed controllers asymptotically stabilize the desired formation patterns around the desired simple closed polar curve, while the agents' trajectories remain bounded and the turn-rates obey the saturation limits. The analytical expressions for boundary, perimeter, and area of the trajectory-constraining set, were obtained under a mild assumption on the safe distance from the desired curve. Bounds on several signals of interest were derived and shown to be a function of initial conditions, control gains, and interaction topology among agents. Extensive MATLAB simulations were provided to illustrate the theoretical results. 

The issue of collision avoidance among agents is not addressed in this paper. In this work, the control input is realized through turn-rates of the vehicles. However, one will require a higher level of control efforts to tackle collision avoidance \cite{panagou2013multi,panagou2015distributed,han2019robust}. The incorporation of practical aspects like communication time-delays, directed and dynamically changing interaction topology, external disturbances, etc., constitute an interesting and indeed a challenging future scope of the work, due to nonlinear nature of the control laws.

\section*{Acknowledgments}  
The authors would like to gratefully acknowledge Prof. Debasish Ghose for his helpful comments and suggestions.


\bibliographystyle{ieeetran}
\bibliography{References}

\begin{thebibliography}{10}
\providecommand{\url}[1]{#1}
\csname url@samestyle\endcsname
\providecommand{\newblock}{\relax}
\providecommand{\bibinfo}[2]{#2}
\providecommand{\BIBentrySTDinterwordspacing}{\spaceskip=0pt\relax}
\providecommand{\BIBentryALTinterwordstretchfactor}{4}
\providecommand{\BIBentryALTinterwordspacing}{\spaceskip=\fontdimen2\font plus
\BIBentryALTinterwordstretchfactor\fontdimen3\font minus
  \fontdimen4\font\relax}
\providecommand{\BIBforeignlanguage}[2]{{%
\expandafter\ifx\csname l@#1\endcsname\relax
\typeout{** WARNING: IEEEtran.bst: No hyphenation pattern has been}%
\typeout{** loaded for the language `#1'. Using the pattern for}%
\typeout{** the default language instead.}%
\else
\language=\csname l@#1\endcsname
\fi
#2}}
\providecommand{\BIBdecl}{\relax}
\BIBdecl

\bibitem{panagou2013multi}
D.~Panagou, D.~M. Stipanovi{\v{c}}, and P.~G. Voulgaris, ``Multi-objective
  control for multi-agent systems using lyapunov-like barrier functions,'' in
  \emph{52nd IEEE Conference on Decision and Control}.\hskip 1em plus 0.5em
  minus 0.4em\relax IEEE, 2013, pp. 1478--1483.

\bibitem{panagou2015distributed}
D.~Panagou, D.~M. Stipanovi{\'c}, and P.~G. Voulgaris, ``Distributed
  coordination control for multi-robot networks using lyapunov-like barrier
  functions,'' \emph{IEEE Transactions on Automatic Control}, vol.~61, no.~3,
  pp. 617--632, 2015.

\bibitem{glotfelter2017nonsmooth}
P.~Glotfelter, J.~Cort{\'e}s, and M.~Egerstedt, ``Nonsmooth barrier functions
  with applications to multi-robot systems,'' \emph{IEEE control systems
  letters}, vol.~1, no.~2, pp. 310--315, 2017.

\bibitem{han2019robust}
D.~Han and D.~Panagou, ``Robust multitask formation control via parametric
  lyapunov-like barrier functions,'' \emph{IEEE Transactions on Automatic
  Control}, vol.~64, no.~11, pp. 4439--4453, 2019.

\bibitem{verginis2019closed}
C.~K. Verginis and D.~V. Dimarogonas, ``Closed-form barrier functions for
  multi-agent ellipsoidal systems with uncertain lagrangian dynamics,''
  \emph{IEEE Control Systems Letters}, vol.~3, no.~3, pp. 727--732, 2019.

\bibitem{lee2011constrained}
U.~Lee and M.~Mesbahi, ``Constrained consensus via logarithmic barrier
  functions,'' in \emph{2011 50th IEEE conference on decision and control and
  European control conference}.\hskip 1em plus 0.5em minus 0.4em\relax IEEE,
  2011, pp. 3608--3613.

\bibitem{romdlony2014uniting}
M.~Z. Romdlony and B.~Jayawardhana, ``Uniting control lyapunov and control
  barrier functions,'' in \emph{53rd IEEE Conference on Decision and
  Control}.\hskip 1em plus 0.5em minus 0.4em\relax IEEE, 2014, pp. 2293--2298.

\bibitem{romdlony2016stabilization}
------, ``Stabilization with guaranteed safety using control lyapunov--barrier
  function,'' \emph{Automatica}, vol.~66, pp. 39--47, 2016.

\bibitem{ames2016control}
A.~D. Ames, X.~Xu, J.~W. Grizzle, and P.~Tabuada, ``Control barrier function
  based quadratic programs for safety critical systems,'' \emph{IEEE
  Transactions on Automatic Control}, vol.~62, no.~8, pp. 3861--3876, 2016.

\bibitem{ames2019control}
A.~D. Ames, S.~Coogan, M.~Egerstedt, G.~Notomista, K.~Sreenath, and P.~Tabuada,
  ``Control barrier functions: Theory and applications,'' in \emph{2019 18th
  European Control Conference (ECC)}.\hskip 1em plus 0.5em minus 0.4em\relax
  IEEE, 2019, pp. 3420--3431.

\bibitem{tee2009barrier}
K.~P. Tee, S.~S. Ge, and E.~H. Tay, ``Barrier lyapunov functions for the
  control of output-constrained nonlinear systems,'' \emph{Automatica},
  vol.~45, no.~4, pp. 918--927, 2009.

\bibitem{tee2011control}
K.~P. Tee, B.~Ren, and S.~S. Ge, ``Control of nonlinear systems with
  time-varying output constraints,'' \emph{Automatica}, vol.~47, no.~11, pp.
  2511--2516, 2011.

\bibitem{tang2013tangent}
Z.-L. Tang, K.~P. Tee, and W.~He, ``Tangent barrier lyapunov functions for the
  control of output-constrained nonlinear systems,'' \emph{IFAC Proceedings
  Volumes}, vol.~46, no.~20, pp. 449--455, 2013.

\bibitem{he2014top}
W.~He, C.~Sun, and S.~S. Ge, ``Top tension control of a flexible marine riser
  by using integral-barrier lyapunov function,'' \emph{IEEE/ASME Transactions
  on Mechatronics}, vol.~20, no.~2, pp. 497--505, 2014.

\bibitem{leonard2007collective}
N.~E. Leonard, D.~A. Paley, F.~Lekien, R.~Sepulchre, D.~M. Fratantoni, and
  R.~E. Davis, ``Collective motion, sensor networks, and ocean sampling,''
  \emph{Proceedings of the IEEE}, vol.~95, no.~1, pp. 48--74, 2007.

\bibitem{ovchinnikov2015cooperative}
K.~Ovchinnikov, A.~Semakova, and A.~Matveev, ``Cooperative surveillance of
  unknown environmental boundaries by multiple nonholonomic robots,''
  \emph{Robotics and Autonomous Systems}, vol.~72, pp. 164--180, 2015.

\bibitem{brinon2015distributed}
L.~Brin{\'o}n-Arranz, L.~Schenato, and A.~Seuret, ``Distributed source seeking
  via a circular formation of agents under communication constraints,''
  \emph{IEEE Transactions on Control of Network Systems}, vol.~3, no.~2, pp.
  104--115, 2015.

\bibitem{brinon2019multirobot}
L.~Brin{\'o}n-Arranz, A.~Renzaglia, and L.~Schenato, ``Multirobot symmetric
  formations for gradient and hessian estimation with application to source
  seeking,'' \emph{IEEE Transactions on Robotics}, vol.~35, no.~3, pp.
  782--789, 2019.

\bibitem{jain2019trajectory}
A.~Jain and D.~Ghose, ``Trajectory-constrained collective circular motion with
  different phase arrangements,'' \emph{IEEE Transactions on Automatic
  Control}, vol.~65, no.~5, pp. 2237--2244, 2019.

\bibitem{paley2008stabilization}
D.~A. Paley, N.~E. Leonard, and R.~Sepulchre, ``Stabilization of symmetric
  formations to motion around convex loops,'' \emph{Systems \& Control
  Letters}, vol.~57, no.~3, pp. 209--215, 2008.

\bibitem{chen2015formation}
Y.-Y. Chen and Y.-P. Tian, ``Formation tracking and attitude synchronization
  control of underactuated ships along closed orbits,'' \emph{International
  Journal of Robust and Nonlinear Control}, vol.~25, no.~16, pp. 3023--3044,
  2015.

\bibitem{sabattini2013closed}
L.~Sabattini, C.~Secchi, and C.~Fantuzzi, ``Closed-curve path tracking for
  decentralized systems of multiple mobile robots,'' \emph{Journal of
  Intelligent \& Robotic Systems}, vol.~71, no.~1, pp. 109--123, 2013.

\bibitem{zhang2007coordinated}
F.~Zhang and N.~E. Leonard, ``Coordinated patterns of unit speed particles on a
  closed curve,'' \emph{Systems \& control letters}, vol.~56, no.~6, pp.
  397--407, 2007.

\bibitem{xu2017realizing}
X.~Xu, T.~Waters, D.~Pickem, P.~Glotfelter, M.~Egerstedt, P.~Tabuada, J.~W.
  Grizzle, and A.~D. Ames, ``Realizing simultaneous lane keeping and adaptive
  speed regulation on accessible mobile robot testbeds,'' in \emph{2017 IEEE
  Conference on Control Technology and Applications (CCTA)}.\hskip 1em plus
  0.5em minus 0.4em\relax IEEE, 2017, pp. 1769--1775.

\bibitem{davis2013circulant}
P.~J. Davis, \emph{Circulant matrices}.\hskip 1em plus 0.5em minus 0.4em\relax
  American Mathematical Soc., 2013.

\bibitem{tapp2016differential}
K.~Tapp, \emph{Differential geometry of curves and surfaces}.\hskip 1em plus
  0.5em minus 0.4em\relax Springer, 2016.

\bibitem{pressley2010elementary}
A.~N. Pressley, \emph{Elementary differential geometry}.\hskip 1em plus 0.5em
  minus 0.4em\relax Springer Science \& Business Media, 2010.

\bibitem{jain2018collective}
A.~Jain and D.~Ghose, ``Collective circular motion in synchronized and balanced
  formations with second-order rotational dynamics,'' \emph{Communications in
  Nonlinear Science and Numerical Simulation}, vol.~54, pp. 156--173, 2018.

\bibitem{khalil2002nonlinear}
H.~K. Khalil, \emph{Nonlinear systems}.\hskip 1em plus 0.5em minus 0.4em\relax
  Prentice hall Upper Saddle River, NJ, 2002, vol.~3.

\end{thebibliography}

\end{document}